\documentclass[aps,pra,onecolumn,nofootinbib,superscriptaddress]{revtex4-2}

\usepackage{amsmath}
\usepackage{amssymb}
\usepackage{bm}
\usepackage{physics}
\usepackage{array}
\usepackage{booktabs}
\usepackage{graphicx}
\usepackage{hyperref}
\usepackage[capitalize]{cleveref}
\usepackage{xcolor,graphicx} 
\usepackage{amsmath, amssymb, bm, mathtools, array}
\usepackage{amsthm}
\usepackage{amsthm}
\usepackage{physics,geometry}
\usepackage{tikz}
\usepackage{quantikz}
\usepackage{comment}
\usetikzlibrary{backgrounds}
\usepackage{fullpage,physics}
\usepackage{scalerel}
\usepackage{stackengine,wasysym}
\usepackage{algorithmicx}
\usepackage{placeins}

\usepackage[linesnumbered,ruled,vlined]{algorithm2e}
\usepackage{amsmath,amsthm,amssymb,amsfonts,bm}
\newtheorem{theorem}{Theorem}[section]

\newtheorem{lemma}{Lemma}
{\bf}{\it}
\newtheorem{remark}[theorem]{Remark}

\newtheorem{Prop}[theorem]{Proposition}

\newcommand{\mat}[1]{\begin{pmatrix}#1\end{pmatrix}}
\newcommand{\F}{\mathcal{F}}
\newcommand{\Rz}{\mathrm{R}_z}
\newcommand{\Rx}{\mathrm{R}_x}
\newcommand{\Ry}{\mathrm{R}_y}
\newcommand{\Had}{\mathrm{H}}
\newcommand{\Kset}{\mathcal{K}}

\begin{document}

\title{Structure-Preserving Quantum Simulation of Wave Equations on a Trapped-Ion Processor}

\author{Abhishek Shringi}
\affiliation{Department of Computer Science and Engineering, Pennsylvania State University, University Park, PA 16802, USA}
\author{Hsuan-Cheng Wu}
\affiliation{Department of Mathematics, Pennsylvania State University, University Park, PA 16802, USA}
\author{Ahmed Shokry}
\affiliation{Department of Computer Science and Engineering, Pennsylvania State University, University Park, PA 16802, USA}
\author{Xiantao Li}
\affiliation{Department of Mathematics, Pennsylvania State University, University Park, PA 16802, USA}
\author{Mahmut Taylan Kandemir}
\affiliation{Department of Computer Science and Engineering, Pennsylvania State University, University Park, PA 16802, USA}
\date{\today}

\begin{abstract}

Wave equations provide a natural testbed for near-term quantum simulation of
partial differential equations, but hardware demonstrations have remained
limited in spatial dimension, equation class, system size, and physically
meaningful output. We develop and benchmark structure-preserving,
Fourier-based quantum circuits for the one- and two-dimensional acoustic wave
equations and Dirac dynamics with variable mass on the
Quantinuum H2-2 trapped-ion processor. A Hermitian first-order finite-difference
formulation and the quantum Fourier transform reduce the one-dimensional
acoustic evolution to mode-controlled rotations. In two dimensions, we
introduce an exact bright--dark decomposition that isolates the longitudinal
velocity coupled to flux variables and eliminates the product-formula error of
directional splitting. For the Dirac dynamics equation with a
spatially discontinuous mass, we combine the Fourier-space wave propagator with
position-space mass rotations through Strang splitting, extending the
hardware study to inhomogeneous dispersive dynamics. The experiments include
one-dimensional grids with up to \(1024\) points and \(32\times32\)
two-dimensional grids, corresponding to an encoded state-space dimension of up
to \(4096\). Rather than reconstructing the full fields, we estimate subdomain
kinetic energies directly from measurement samples. Across all tested acoustic
and Dirac dynamics problems, the H2-2 results track the classical kinetic-energy
dynamics with mean absolute errors between \(5.9\times10^{-3}\) and
\(2.4\times10^{-2}\). At fixed retained bandwidth, the compiled gate counts
grow approximately quadratically with the number of grid qubits; the acoustic
circuit sizes are essentially independent of evolution time, whereas the
cost also grows with the number of product-formula steps. These results provide
hardware-level evidence that accurate observable dynamics can remain
resolvable for structured wave problems with thousands of encoded degrees of
freedom on a present-day trapped-ion processor.

\end{abstract}

\maketitle



\section{Introduction}\label{sec:intro}
Partial differential equations (PDEs) lie at the heart of scientific
computing: they model acoustic and electromagnetic wave propagation, fluid
flow, heat transport, and quantum dynamics, and their discretizations generate
many of the largest computational problems solved today. This makes PDEs a
natural and demanding benchmark for quantum computation. In an
amplitude-encoded representation, a field sampled at
$N_h=2^{n_h}$ grid points can be stored using $n_h$ grid qubits. This
representational compression alone does not imply a computational advantage:
state preparation, time evolution, measurement, compilation overhead, and
hardware noise must all be taken into account.

The idea that quantum systems could efficiently simulate other quantum systems
was proposed by Feynman~\cite{feynman1982simulating} and put on an algorithmic
footing by Lloyd~\cite{lloyd1996universal}. Algorithms for linear differential
equations and sparse linear systems subsequently supplied important primitives
for quantum PDE solvers~\cite{berry2010high,harrow2009quantum}. Favorable
asymptotic bounds for such methods, however, typically rely on fault-tolerant
computation together with efficient access to the discretized operator, the
initial data, and the desired output. Explicit circuit constructions and
hardware experiments are therefore needed to determine which parts of this
potential survive on present-day devices.

On noisy intermediate-scale quantum (NISQ) processors~\cite{preskill2018quantum},
application performance is constrained by gate and readout errors as well as
by the overhead introduced during hardware-native compilation. Best-reported
two-qubit gate fidelities now exceed $99.99\%$ in trapped-ion systems and
$99.9\%$ in superconducting devices
~\cite{hughes2025trappedions9999,Marxer_2026}, but component-level fidelities
alone do not determine the accuracy of a compiled application circuit. This
motivates an empirical assessment based on explicit circuits, native-gate
resources, and physical-device measurements. Here we carry out such an
assessment for structured wave-propagation problems, varying spatial
dimension, initial-state complexity, grid size, and evolution time.

Our primary model is the acoustic wave equation, a canonical hyperbolic PDE
with an energy-conserving first-order formulation and applications including
seismic modeling and imaging~\cite{wen2025seismic}. Introducing pressure and
velocity variables converts the second-order equation into a
Schr\"odinger-type system. A compatible finite-difference discretization
preserves Hermiticity at the semi-discrete level. Under periodic boundary
conditions, the difference matrices are circulant, so the quantum Fourier
transform (QFT) decomposes the dynamics into independent Fourier-mode blocks.

This structure is well matched to trapped-ion hardware. An exact QFT on
$n_h$ qubits contains $O(n_h^2)$ controlled-phase gates, many acting between
non-neighboring logical qubits. Quantinuum's quantum charge-coupled-device
architecture provides effective all-to-all connectivity by transporting ions
between interaction zones~\cite{moses2023racetrack}, thereby avoiding the
explicit SWAP networks that may be required on architectures with fixed local
connectivity. Connectivity is only one contributor to performance, so we
report both logical and H2-2-native gate counts and depths and evaluate the
compiled circuits on the H2-2E emulator and the H2-2 processor.

For the one-dimensional acoustic wave equation, we adopt the low-mode QFT
strategy of Wright \emph{et al.}~\cite{wright2024noisy}: the finite-difference
dispersion relation is linearized on the Fourier modes occupied by a smooth
input, and the resulting mode dependence is synthesized by controlled
rotations. We extend the hardware benchmark to structured cosine and truncated
Gaussian inputs on grids as large as $N_h=1024$, and we independently scan the
evolution time and the number of grid qubits to compare logical and
hardware-native resources.

In two dimensions, pressure couples simultaneously to two velocity
components. Prior Fourier-space constructions implement the corresponding
smooth-field propagator through a product formula for noncommuting directional
terms~\cite{lubasch2025fourier}. We instead rotate each Fourier block into a
longitudinal ``bright'' velocity component, which couples to pressure, and a
transverse ``dark'' component, which decouples. This is a
Morris--Shore-type bright--dark reduction
~\cite{morris1983reduction}; our contribution is to identify and exploit it in
the two-dimensional acoustic Fourier block. Within the constant-coefficient,
periodic semi-discrete model, the reduction is exact and removes the
directional product-formula error. For the present hardware experiments, the
required mode-dependent angles are compiled for a fixed low-frequency set.

We also consider a one-dimensional Dirac equation with a stepwise mass
profile, a simple heterogeneous test related to effective variable-mass Dirac
models~\cite{unanyan2010spinor}. The kinetic generator is diagonal in Fourier
space, whereas the mass generator is diagonal in position space. We combine
them by Strang splitting, so that each step alternates position-controlled mass
rotations with the QFT-based kinetic propagator. This benchmark exposes the
additional resource growth with the number of product-formula steps.

The main contributions of this work are:
\begin{itemize}
  \item A common, structure-preserving finite-difference and QFT framework for
  one- and two-dimensional acoustic waves and one-dimensional variable-mass
  Dirac dynamics;
  \item An exact bright--dark reduction of each two-dimensional acoustic
  Fourier block, eliminating directional product-formula error for the
  semi-discrete model considered here;
  \item A resource benchmark under independent scans of evolution time and
  grid-qubit count, reporting logical and H2-2-native total and two-qubit gate
  counts and depths; and
  \item Physical H2-2 executions for all three model problems, comparing
  classical references, ideal finite-shot sampling, H2-2E where available,
  and H2-2 measurements of half-domain marginal observables---kinetic-energy
  fractions for the acoustic problems and a spinor-component population for
  the Dirac problem.
\end{itemize}

The experiments reveal a consistent, quantified picture. At fixed retained
bandwidth, empirical fits over the range studied show an approximately
quadratic compiled gate count and approximately linear compiled depth in
$n_h$ for the one-dimensional acoustic circuit, while both quantities are
approximately quadratic for the two-dimensional circuit. Across the H2-2
experiments, the mean absolute error in the chosen marginal observable ranges
from $5.9\times10^{-3}$ to $2.4\times10^{-2}$. The largest error occurs for
the deeper split-operator Dirac circuits. These results benchmark structured
inputs and observable-level outputs; they do not establish a quantum advantage
over classical PDE solvers.

The remainder of the paper is organized as follows.
\Cref{sec:related} positions the work among quantum wave-equation algorithms
and hardware demonstrations. \Cref{sec:experimental-setup} describes the
common classical-reference, simulation, compilation, and hardware pipeline,
and \cref{sec:disc} derives the structure-preserving discretizations. The
one-dimensional QFT construction and its low-frequency error bound are given
in \cref{sec:permode}. \Cref{sec:methods2d} develops the two-dimensional
bright--dark reduction, and \cref{sec:kg-methods} presents the Strang-split
Dirac circuit. Resource and hardware results are reported within the
corresponding equation-specific sections. Finally, \cref{sec:discussion}
discusses implications, limitations, and directions for future work; the
appendix gives detailed state-preparation and restricted-mode circuit
constructions.


\section{Related work}\label{sec:related}
Quantum algorithms relevant to wave propagation differ in how they encode the PDE, implement time evolution, and extract output. Qubit count, oracle complexity, logical gate count, hardware-native depth, product-formula steps, and measurement cost are therefore not interchangeable resource measures. We organize representative work below by circuit construction and hardware target and summarize the associated asymptotic results in~\cref{tab:scaling}.

A first family maps the discretized wave equation to Hamiltonian-simulation and quantum-linear-system primitives. Costa \emph{et
al.}~\cite{costa2019quantum} factorized finite-difference Laplacians to obtain a first-order Schr\"odinger representation under Dirichlet and Neumann boundary conditions. Their algorithm uses Hamiltonian simulation and a quantum linear-system subroutine, and they also discuss extensions to the Klein--Gordon and Maxwell equations. The resulting complexity statements are oracle based and assume efficient state preparation and output access; the work is an asymptotic algorithm rather than a NISQ circuit demonstration. Suau \emph{et al.}~\cite{suau2021practical} subsequently analyzed an explicit implementation of this approach. For their one-dimensional benchmark and a first-order product formula, they obtained a gate-count scaling.
\begin{equation}
  O\!\left(
    \frac{N_h^{3/2}\log^2N_h\,t^{3/2}}{\sqrt{\varepsilon}}
  \right),
\end{equation}
and concluded that error-corrected hardware would be required for direct 
implementation. More generally, Childs, Liu, and
Ostrander~\cite{childs2021highprecision} developed high-precision algorithms for linear PDEs using adaptive-order finite differences and spectral methods together with quantum linear-system algorithms. Those results provide important fault-tolerant context, but primarily address elliptic boundary-value problems and a different output model from the time-domain hardware experiments studied here.

A second family constructs explicit Hamiltonian-simulation circuits for
finite-difference discretizations. Sato \emph{et
al.}~\cite{sato2024hamiltonian} gave circuits for linear hyperbolic PDEs on an $N_h^d=2^{dn_h}$ lattice using $dn_h$ qubits, together with numerical tests and a small real-device wave-equation experiment. They report $O(dn_h^3T^2/\varepsilon)$ nonlocal gates, with an alternative construction using $O(dn_h^{2.5}T^{1.5}/\varepsilon^{0.5})$ nonlocal gates. Arseniev \emph{et al.}~\cite{arseniev2024high} decomposed banded
finite-difference matrices into mutually commuting groups of Pauli strings and studied higher-order stencils. Their one-dimensional wave experiments show that higher-order discretization can reduce the grid-qubit count required for a target spatial accuracy but does not, by itself, reduce the number of Trotter steps required to preserve the overall solution accuracy. In a closely related geophysical setting, Schade \emph{et al.}~\cite{schade2024elastic} used a sparsity-preserving transformation of the one-dimensional heterogeneous elastic wave equation to Schr\"odinger form. Their IBM Brisbane demonstration used a seven-point grid and achieved qualitative agreement with the noiseless calculation while also illustrating the strong effect of hardware noise at large circuit depth.

A third family works directly in Fourier space, as we do here. Wright
\emph{et al.}~\cite{wright2024noisy} constructed a low-mode QFT circuit for the one-dimensional acoustic wave equation and executed it on Quantinuum H1-1. Their evolution circuit has depth $O(n_h^2)$, and the small-angle linearization gives an infidelity
$O(2^{-4n_h}t^2)=O(t^2/N_h^4)$ for smooth, effectively low-bandwidth initial data. Their physical-device experiment used a variationally prepared Ricker wavelet with $n_h=6$ grid qubits ($N_h=64$); their numerical state-preparation and resource studies extended to $n_h=10$ grid qubits. Our one-dimensional benchmark retains the same low-mode mechanism but uses analytically specified cosine and truncated Gaussian Fourier inputs, executes $n_h=10$ ($N_h=1024$) grid-point instances on H2-2, and reports logical and H2-2-native resources under independent scans in $t$ and in $n_h$, extending the latter to $n_h=50$.

Lubasch \emph{et al.}~\cite{lubasch2025fourier} developed a broader QFT and
QSVT framework for the incompressible-advection, heat, isotropic-acoustic-wave,
and Poisson equations. For the acoustic wave equation, they distinguish a
general construction of depth
\begin{equation}
  \widetilde O\!\left(
    dn_h\left[tN_h+\log(1/\varepsilon)\right]
  \right)
\end{equation}
from a smooth-field construction of depth
\begin{equation}
  \widetilde O\!\left(
    \frac{d^3k_{\max}^2n_ht^2}{\varepsilon}
  \right),
\end{equation}
excluding the QFT depth. In more than one dimension, the latter construction
uses a first-order product formula for noncommuting directional Hamiltonians.
Our bright--dark construction is not a replacement for that general
framework; it exploits the special rank-one pressure--velocity coupling in the
constant-coefficient, periodic, two-dimensional acoustic system. For this
model, it diagonalizes each Fourier block exactly and therefore removes that
directional product-formula error before hardware compilation.

Other near-term differential-equation strategies use basis-encoded arithmetic, quantum annealing, or variational function learning rather than the coherent Hamiltonian evolution of an amplitude-encoded field. Zanger \emph{et al.}~\cite{zanger2021quantum} investigated fixed-point digital circuits for classical time-integration formulas applied to small ordinary differential equations, as well as a D-Wave implementation of a Runge--Kutta formulation. Schillo and Sturm~\cite{schillo2025variational} combined quantum circuit
learning with parameter-shift differentiation and demonstrated small
function-learning and differential-equation examples on IBM hardware. These methods address different encodings, outputs, and resource tradeoffs, so their counts are not directly comparable with the QFT Hamiltonian-simulation circuits considered here.

A methodologically distinct line of hardware work simulates relativistic wave equations through analog trapped-ion or small digital encodings. Gerritsma \emph{et al.} simulated the free Dirac equation and
Zitterbewegung~\cite{gerritsma2010dirac}, and later the Klein
paradox~\cite{gerritsma2011klein}, using internal and motional degrees of
freedom of trapped ions. Kapil \emph{et al.}~\cite{kapil2018quantum} proposed a two-qubit digital circuit for a two-component Klein--Gordon representation and reported local-simulator results using the IBM/Qiskit framework. Our relativistic benchmark instead uses a gate-based, finite-difference Dirac model with a spatially varying mass, combines its Fourier-space kinetic and position-space mass propagators by Strang splitting, and executes the compiled circuits on H2-2.

Taken together, prior work leaves a focused experimental and circuit-design gap addressed here: a common compilation and hardware benchmark spanning one-dimensional acoustic propagation, an exact two-dimensional acoustic block reduction, and split-operator variable-mass Dirac dynamics. We combine a larger one-dimensional hardware instance than the preceding QFT demonstration~\cite{wright2024noisy}, an exact alternative to directional Trotterization for the two-dimensional model considered here, and a position–Fourier split Dirac extension, with native-resource scaling and observable-level H2-2 measurements throughout. The scope remains deliberately limited to periodic problems, structured low-bandwidth inputs, and coarse marginal observables.

\begin{table*}[t]
\centering
\caption{Comparison of representative quantum algorithms for the wave
equation in terms of qubit requirements, circuit depth or total gate count,
and physical-hardware implementation. Here $N_h$ is the number of grid points
per spatial dimension, $n_h=\log_{2}N_h$, $d$ is the spatial dimension, $t$ is the evolution time, $\varepsilon$ is the target error, and $K$
is the largest retained wavenumber. An em dash in the hardware column
indicates that no execution on a physical quantum processor was reported.
Suau \emph{et al.}\cite{suau2021practical} used an idealized hardware model informed by realistic
gate data.}
\label{tab:scaling}
\begin{ruledtabular}
\begin{tabular}{l l c c l}
Reference & Approach & Hardware & Qubits & Depth / gate count \\
\colrule
Costa \emph{et al.}~\cite{costa2019quantum}
  & Direct quantum walk
  & --- & $O(d\log N_h)$ & $\mathrm{polylog}(N_h)$ \\
Suau \emph{et al.}~\cite{suau2021practical}
  & Resource benchmark
  & Idealized model & $O(\log N_h)$ & $O(N_h^{3/2}\log^{2}N_h)$ \\
Sato \emph{et al.}~\cite{sato2024hamiltonian}
  & Explicit FD Hamiltonian
  & IBM \texttt{ibm\_kawasaki} & $dn_h$
  & $O(dn_h^{3}T^{2}/\varepsilon)$ \\
Arseniev \emph{et al.}~\cite{arseniev2024high}
  & Trotter + high-order FD
  & --- & $O(dn_h)$ & $\tilde{O}(N_h^{2})$ steps \\
Wright \emph{et al.}~\cite{wright2024noisy}
  & Low-mode QFT
  & Quantinuum H1-1 & $dn_h+O(1)$
  & $O(\log^{2}N_h)$ \\
Lubasch \emph{et al.}~\cite{lubasch2025fourier}
  & QFT + QSVT
  & --- & $dn_h+O(1)$
  & $\tilde{O}(d^{3}K^{2}n_ht/\varepsilon)$ \\
\textbf{This work}
  & QFT + bright--dark/Strang
  & Quantinuum H2-2 & $dn_h+O(1)$
  & $O(\log^{2}N_h)$ per step \\
\end{tabular}
\end{ruledtabular}
\end{table*}

\section{Wave equations and structure-preserving discretizations}
\label{sec:disc}

Here we show the wave equations as well as the finite difference approximation in space. Although finite difference schemes for the wave equations are standard \cite{LarssonThomee2009}, we provide the explicit forms here to specify the experimental runs later.  

\subsection{One-dimensional acoustic wave equation}\label{subsec:wave1d}

We consider the one-dimensional acoustic wave equation on the unit interval
with periodic boundary conditions,
\begin{equation}\label{eq:wave1d}
  \partial_{t}^{2}u(x,t)=c^{2}\,\partial_{x}^{2}u(x,t),
  \qquad x\in[0,1],\quad t>0,
\end{equation}
together with the initial conditions 
\begin{equation}\label{eq:wave1d-ic}
  u(x,0)=f(x),\qquad \partial_{t}u(x,0)=g(x),
\end{equation}
and periodic boundary condition (PBC) $u(0,t)=u(1,t)$ and $\partial_x u(0,t)=\partial_x  u(1,t)$. Without loss of generality, we set the wave speed to
$c=1$ and chose the domain to be the unit interval as the spatial domain by properly rescaling time.

We first reformulate \cref{eq:wave1d} into a Schr\"odinger type of equation. To expose a Hermitian generator, we pass to a first-order velocity-pressure
system. Introducing an auxiliary field $v= \partial_t u$ alongside the flux $p= \partial_x u$,
\cref{eq:wave1d} is equivalent to
\begin{equation}\label{eq:wave1d-first}
  \partial_{t}v=\partial_{x}p,\qquad
  \partial_{t}p=\partial_{x}v.
\end{equation}

 Writing $\ket{\psi}=(v, p)^{\mathsf T}$,
\cref{eq:wave1d-first} takes the Schr\"odinger form
\begin{equation}\label{eq:H1d}
  \partial_{t}\ket{\psi}=-i\mathcal H\,\ket{\psi},
  \qquad
  \mathcal H=\mat{0 & i\partial_{x}\\ i\partial_{x} & 0}
   =\sigma_{x}\otimes(i\partial_{x}).
\end{equation}
Because $i\partial_{x}$ is Hermitian (self-adjoint) under periodic boundary conditions, $H$
is Hermitian and the evolution $\ket{\psi(t)}=e^{-iHt}\ket{\psi(0)}$ is unitary.  The initial conditions can be obtained from \cref{eq:wave1d-ic}: $v(x,0)= g(x)$ and $p(x,0)= f'(x).$  

\subsection{Finite-difference approximations}\label{subsec:fd}

We discretize $[0,1]$ on a uniform grid of $N_h$ points,
$x_{j}=jh$ with $j=0,\dots,N-1$ and spacing $h=1/N_h$. The nodal
values are collected into vectors $\bm v,\bm p\in\mathbb{C}^{N_h}$. For the first derivative in \cref{eq:H1d}, we can
use the standard forward difference
\begin{equation}\label{eq:D-fd}
  (D\bm q)_{j}=\frac{q_{j+1}-q_{j}}{h},
  \qquad
  D=\frac{1}{h}\,\operatorname{circ}(-1,1,0,\dots,0),
\end{equation}
with indices taken modulo $N$, so that $D$ is the circulant matrix whose first
row is $h^{-1}(-1,1,0,\dots,0)$. Its adjoint is
$D^{\dagger}=-D_{-}$, where $(D_{-}\bm q)_{j}=(q_{j}-q_{j-1})/h$ is the backward
difference.

 Let $q_j(t)$ denote the nodal value approximation $u(x_j,t)$ and following \cref{eq:wave1d-first} we define $p_j= D_{-}q_j $ and $v_j= \dot{q}_j$. 
Then we can pair $D$ with its adjoint in the two off-diagonal
blocks to approximate \cref{eq:H1d}
\begin{equation}\label{eq:semidiscrete1d}
  \partial_{t}\mat{\bm v\\ \bm p}=-iH\mat{\bm v\\ \bm p},
  \qquad
  H=\mat{0 & iD\\ -iD^{\dagger} & 0}.
\end{equation}
This discretization is structure-preserving in that 
 $H^{\dagger}=H$. Therefore, this system of equations can be regarded as a finite-dimensional Schr\"odinger equation $i\frac{d}{dt} \ket{\psi} = H \ket{\psi},$  where the wave function $\ket{\psi}$ encodes both the velocity and the flux coherently.

The discretization \Cref{eq:semidiscrete1d} can be converted back to the section order form, which gives,
\begin{equation}\label{eq:laplacian}
 \frac{d^2}{dt^2}\bm q_j =\frac{q_{j+1}-2q_{j}+q_{j-1}}{h^{2}},
\end{equation}
which reproduces the standard semi-discrete approximation of the wave equation \eqref{eq:wave1d} with $u(x_j,t) \approx q_j(t) $. In particular, the standard error analysis gives the error $O(Th^2)$ for the displacenent $u(x,t)$ \cite{Baker1976HyperbolicFEM}.

\subsection{One-dimensional Dirac dynamics with a spatially varying mass}
\label{subsec:dirac}

The acoustic-wave benchmarks considered above are translation invariant:
their finite-difference generators are diagonalized in the Fourier basis, and
the simulated time enters only through mode-controlled rotation angles. As a
final and more demanding benchmark, we consider a spatially varying-mass Dirac
equation. This example is chosen because it provides a minimal physically
meaningful extension beyond the Fourier-diagonal setting. Its kinetic term is
simple in Fourier space, whereas its mass term is simple in position space.
This is the classical analogue of quantum dynamics in the first quantization, and quantum algorithms that make use of such structures by operator splitting can be found in \cite{Babbush2018LowDepth,jin2022quantum}. In particular, 
when the mass varies spatially, the two terms do not commute, so their
simulation requires repeated changes of basis and product-formula steps. The
model therefore isolates the additional circuit cost associated with spatial
heterogeneity while retaining a two-component Hermitian evolution compatible
with the acoustic-wave circuit primitive.

In units in which $\hbar=c=1$, the one-dimensional Dirac equation is
\begin{equation}
  \label{eq:kg}
  i\partial_t\Psi(x,t)
  =
  \mathcal H_{\mathrm D}\Psi(x,t),
  \qquad
  \mathcal H_{\mathrm D}
  =
  -i\sigma_x\partial_x-m(x)\sigma_y,
  \qquad
  x\in[0,1],
\end{equation}
with periodic boundary conditions and a real-valued mass profile $m(x)$.

To see the impact of the mass on the wave propagation directly, one can write,
\begin{equation}
  \label{eq:dirac-spinor}
  \Psi(x,t)
  =
  \begin{pmatrix}
    \psi_A(x,t)\\
    \psi_B(x,t)
  \end{pmatrix},
\end{equation}
and look at
\begin{equation}
  \label{eq:dirac-counterpropagating-fields}
  \psi_{+}
  =
  \frac{\psi_A+\psi_B}{\sqrt{2}},
  \qquad
  \psi_{-}
  =
  \frac{\psi_A-\psi_B}{\sqrt{2}}.
\end{equation}
Equation~\eqref{eq:kg} is then equivalent to
\begin{equation}
  \label{eq:dirac-counterpropagating-system}
  \partial_t\psi_{+}
  =
  -\partial_x\psi_{+}-m(x)\psi_{-},
  \qquad
  \partial_t\psi_{-}
  =
  \partial_x\psi_{-}+m(x)\psi_{+}.
\end{equation}
When $m(x)=0$, the fields $\psi_{+}$ and $\psi_{-}$ propagate in opposite
directions. A nonzero mass term locally converts one component into the other,
producing dispersive spinor dynamics and, at a mass interface, reflection and
transmission will occur.

For the finite-dimensional implementation, we use the same uniform grid points, and let
\begin{equation}
  \label{eq:kg-first}
  M
  =
  \operatorname{diag}
  \left(
    m(x_0),\ldots,m(x_{N_h-1})
  \right),
\end{equation}
and let $D$ be the periodic forward-difference operator defined in
\cref{eq:D-fd}. Then the semi-discrete spinor
satisfies
\begin{equation}
  \label{eq:dirac-discrete-evolution}
  i\partial_t \ket{\psi(t)}
  =
  H_{\mathrm D}\ket{\psi(t)},
\end{equation}
where the implemented Hamiltonian is a direction extension of that in \cref{eq:H1d}
\begin{equation}
  \label{eq:dirac-first}
  H_{\mathrm D}
  =
  \underbrace{
  \begin{pmatrix}
    0 & iD\\
    -iD^\dagger & 0
  \end{pmatrix}
  }_{H_{\mathrm{wave}}}
  +
  \underbrace{
  \begin{pmatrix}
    0 & iM\\
    -iM & 0
  \end{pmatrix}
  }_{H_{\mathrm{mass}}}.
\end{equation}
Both contributions are Hermitian. The first term is the same
finite-difference propagation generator used for the one-dimensional acoustic
equation, but its two fields are now interpreted as spinor components rather
than flux and velocity. The second term can be written as
\begin{equation}
  \label{eq:dirac-mass-pauli}
  H_{\mathrm{mass}}
  =
  -M\otimes\sigma_y,
\end{equation}
and therefore implements a position-dependent rotation of the field qubit.

In the hardware benchmark below, we choose the step profile
\begin{equation}
  \label{eq:dirac-mass-profile-preview}
  m(x)
  =
  \begin{cases}
    0, & 0\leq x<\tfrac{1}{2},\\
    2, & \tfrac{1}{2}\leq x<1.
  \end{cases}
\end{equation}
This profile is useful for both physical and circuit-level reasons. Physically,
it creates an interface between a gapless region and a gapped region, causing
the two spinor components to undergo spatially inhomogeneous mixing. Because
the computational domain is periodic, there are two interfaces: one at
$x=\tfrac{1}{2}$ and one across the periodic seam $x=0\equiv1$. At the circuit
level, alignment of the step with the midpoint makes the mass value depend
only on the most-significant grid qubit. The full position-dependent mass
propagator can therefore be implemented by a single controlled rotation,
without a general coefficient-loading oracle.

The example is consequently simple enough to execute on present hardware but
nontrivial enough to test features absent from the acoustic benchmarks:
Fourier-mode mixing, noncommuting kinetic and mass terms, repeated QFT and
inverse-QFT layers, and growth of the circuit depth with the product-formula
step count. The cosine initial condition used below is a standing spinor
excitation rather than a localized incident wave packet. We therefore
interpret the measured dynamics as spatial redistribution between spinor
components, rather than as the direct observation of a single reflection
event.

\subsection{Two-dimensional acoustic wave equation}\label{subsec:wave2d}

In two dimensions, the acoustic wave equation on the unit square with periodic
boundary conditions is
\begin{equation}\label{eq:wave2d}
  \partial_t^2 u(x,y,t)=\partial_x^2u+\partial_y^2u,
  \qquad (x,y)\in[0,1]^2,
\end{equation}
with initial data $u(\cdot,\cdot,0)=f$ and $\partial_tu(\cdot,\cdot,0)=g$.
Introducing the velocity
\[
v=\partial_t u
\]
and the two components of
\[
p=(p_x,p_y)^{\mathsf T}:=(\partial_x u,\partial_y u)^{\mathsf T},
\]
we obtain the first-order system
\begin{equation}\label{eq:first-order-2d}
  \partial_t v=\partial_x p_x+\partial_y p_y,
  \qquad
  \partial_t p_x=\partial_x v,
  \qquad
  \partial_t p_y=\partial_y v.
\end{equation}
Equivalently, with
\[
\Psi=(v,p_x,p_y)^{\mathsf T},
\]
this system can be written as
\begin{equation}
  \partial_t \Psi=-i\mathcal H\Psi,
  \qquad
  \mathcal H=
  \begin{bmatrix}
    0 & i\partial_x & i\partial_y\\
    i\partial_x & 0 & 0\\
    i\partial_y & 0 & 0
  \end{bmatrix}.
\end{equation}
Under periodic boundary conditions, $i\partial_x$ and $i\partial_y$ are
self-adjoint, so $\mathcal H$ is Hermitian.

We now turn to the finite-difference discretization. On an
$N_h\times N_h$ uniform grid with periodic boundary conditions, we represent the
partial derivatives by the tensor-product operators
\begin{equation}\label{eq:DxDy}
  D_x=D\otimes I_{N_h},
  \qquad
  D_y=I_{N_h}\otimes D,
\end{equation}
where $D$ is the one-dimensional forward difference operator from
\cref{eq:D-fd}. Collecting the discrete unknowns at the grid points as
\[
\ket{\psi}=(\bm v,\bm p_x,\bm p_y)^{\mathsf T},
\]
the structure-preserving semi-discrete generator becomes
\begin{equation}\label{eq:semidiscrete2d}
  H_{2D}=
  \begin{bmatrix}
    0 & iD_x & iD_y\\
    -iD_x^\dagger & 0 & 0\\
    -iD_y^\dagger & 0 & 0
  \end{bmatrix},
  \qquad
  \frac{d}{dt}\ket{\psi}=-iH_{2D}\ket{\psi}.
\end{equation}
The off-diagonal blocks are paired with their adjoints, exactly as in the
one-dimensional case; hence $H_{2D}$ is Hermitian.

\subsection{Fast forwarding by QFT}\label{subsec:qft}

The forward difference $D$ in \cref{eq:D-fd} is circulant and is therefore
diagonalized by the discrete Fourier transform, which on $N_h=2^{n_h}$ grid points is
realized on $n_h$ qubits by the quantum Fourier transform. We use the standard
convention
\begin{equation}\label{eq:qft-def}
  \F\ket{j}=\frac{1}{\sqrt{N_h}}\sum_{k=0}^{N-1}e^{2\pi i\,jk/N_h}\ket{k},
\end{equation}
which acts on amplitude vectors as $\hat{\bm q}=\F\bm q$; the inverse transform
$\F^{\dagger}$ carries the conjugate phase. 

We express the velocity and pressure in a quantum state as $\ket{\psi}= \sum_j v_j \ket{j} \ket{0} + p_j \ket{j} \ket{1}$ with a field register. Then
applying QFT to $D$ then amounts to
\begin{equation}\label{eq:circulant-eig}
  \F\,D\,\F^{\dagger}=\operatorname{diag}(\ell_{0},\dots,\ell_{N_h-1}),
  \qquad
  \ell_{k}=\frac{1}{h}\bigl(e^{-2\pi i k/N_h}-1\bigr),
\end{equation}
the eigenvalues being obtained by applying $D$ to the Fourier mode
$(e^{-2\pi i jk/N_h})_{j}$.

Conjugating the semi-discrete generator \cref{eq:semidiscrete1d} by
$\F \otimes I$ and regrouping the two Fourier amplitudes $(\hat v_{k},\hat p_{k})$
mode by mode block-diagonalizes the dynamics into $2\times2$ blocks,
\begin{equation}\label{eq:Hhat1d}
  \widehat{H}=(\F \otimes I )\,H\,(\F \otimes I)^{\dagger}
  =\sum_{k=0}^{N_h-1}\ket{k}\!\bra{k}\otimes H_{k},
  \qquad
  H_{k}=\mat{0 & \lambda_{k}\\ \bar\lambda_{k} & 0},
\end{equation}
where the per-mode coupling is the eigenvalue of $-iD$,
\begin{equation}\label{eq:lambdak}
  \lambda_{k}=-i\ell_{k}
            =\frac{i}{h}\bigl(1-e^{2\pi i k/N_h}\bigr)
            =-\frac{2\sin\theta_{k}}{h}\,e^{-i\theta_{k}},
  \qquad
  \theta_{k}=\frac{\pi k}{N_h}=\pi k h.
\end{equation}

\section{Experimental setup}
\label{sec:experimental-setup}

We evaluate the circuits constructed in the following sections through a four-stage pipeline that proceeds from a classical reference solution to results measured on Quantinuum H2-2 device. Within this pipeline, we characterize the compiled circuits along two independent resource-scaling axes: as a function of the evolution time (t) at fixed register size, and as a function of the grid-qubit count($n_h$), at a fixed representative time (t=0.1).

First, we solve the relevant semi-discrete system directly in classical software(python). The finite-difference generator ($H$), or equivalently its Fourier-diagonalized form, such as \cref{eq:Hhat1d} for the one-dimensional wave equation, is assembled and exponentiated using NumPy/SciPy to produce a reference trajectory. This classical simulation is performed independently of the quantum circuit and fixes the exact target amplitudes used in the state preparation and evolution procedures. It therefore provides the ground truth against which all circuit-based results reported in this work are compared.

Second, each quantum circuit is assembled in Qiskit~\cite{qiskit2024} and evaluated using Qiskit's AerSimulator. Firstly, we evaluate the ideal circuit and extract its logical resource requirements before any hardware-specific compilation. These logical resources correspond to the circuit as constructed by Qiskit and include the total gate count, the two-qubit gate count, the total circuit depth, and the depth contributed by two-qubit gates alone. Secondly, we perform shot-based sampling to obtain the measurement-count distributions required to estimate the relevant kinetic-energy observables on an ideal, noise-free device. For this we sampled for a shot count of 8192. The resulting expectation values define the noiseless reference curves against which the emulator and hardware measurements are compared.

Third, the same logical circuit is retargeted to the native gate set and connectivity constraints of the Quantinuum H2-2 device using pytket-qiskit~\cite{sivarajah2020tket}. This procedure converts the Qiskit circuit into a \texttt{pytket} circuit for which compilation is performed by submitting the circuit to Quantinuum's \texttt{qnexus} cloud interface\cite{quantinuum_nexus}. The resources of the resulting native circuit correspond to compiled gate counts and depths and are identical for execution on the trapped-ion H2-2 processor and on its noise-modeled emulator H2-2E. As for the logical circuit, we separately report the total gate count and depth and the corresponding two-qubit-only contributions. This distinction is particularly relevant for the QCCD architecture, where two-qubit gates are implemented as long-range physical operations and constitute the dominant contribution to both execution time and error accumulation\cite{moses2023racetrack}.

Fourth, the compiled circuit is submitted through \texttt{qnexus}, which is used to target both the physical H2-2 trapped-ion processor and its noise-modeled emulator, H2-2E. For this we sampled for a shot count of 1024, it was kept less as it takes proportionately higher number of resources in terms of Hardware Quantum Credits(HQCs), which are required in order to execute a circuit on a Quantinuum hardware via qnexus. Furthermore, as it can be seen later, the chosen shot count was enough to obtain fairly good results in terms of evaluating the kinetic energy observable for various problems considered.

\section{Quantum circuit constructions}

\subsection{One-dimensional acoustic wave equation}\label{sec:permode}
\subsubsection{Fourier-mode propagator}
It is convenient to record the modulus of $\lambda_{k}$. With
$h=1/N_{h}$ and using $\sin\theta_{k}\ge 0$ for $0\le k<N_{h}$,
\begin{equation}\label{eq:modphase}
  \lvert\lambda_{k}\rvert=2N_{h}\sin\theta_{k},
  \qquad
  \theta_{k}=\frac{\pi k}{N_{h}}.
\end{equation}
Thus each block in \cref{eq:Hhat1d} simply consists of Pauli operators,
\begin{equation}
  H_{k}=-\lvert\lambda_{k}\rvert\bigl(\cos\theta_{k}\,\sigma_{x}+\sin\theta_{k}\,\sigma_{y}\bigr)
       =-\lvert\lambda_{k}\rvert\,\Rz(\theta_{k})\,\sigma_{x}\,\Rz(-\theta_{k}).
\end{equation}
Equivalently, using $\sigma_{x}=\Had\,\sigma_{z}\,\Had$ (with $\Had$ the Hadamard gate),
\begin{equation}\label{eq:Hkhad}
  H_{k}
    =\Rz(\theta_{k})\,\Had\bigl(-2N_{h}\sin\theta_{k}\,\sigma_{z}\bigr)\Had\,\Rz(-\theta_{k})
    =\Rz(\theta_{k})\,\Had\bigl(-\lvert\lambda_{k}\rvert\,\sigma_{z}\bigr)\Had\,\Rz(-\theta_{k}),
\end{equation}
which isolates a diagonal generator $-\lvert\lambda_{k}\rvert\,\sigma_{z}$ dressed by the
affine phase $\theta_{k}=\pi k/N_{h}$ and a Hadamard gate. Summing over modes, we have,
\begin{equation}\label{eq:Hhad}
  \widehat H=\sum_{k=0}^{N_{h}-1}\ket{k}\!\bra{k}\otimes
    \Bigl[\Rz(\theta_{k})\,\Had\bigl(-2N_{h}\sin\theta_{k}\,\sigma_{z}\bigr)\Had\,\Rz(-\theta_{k})\Bigr].
\end{equation}
Exponentiating either form gives a single $x$-rotation conjugated by $z$-rotations,
\begin{equation}\label{eq:Uk}
  e^{-iH_{k}t}=\Rz(\theta_{k})\,\Rx\!\bigl(-2t\lvert\lambda_{k}\rvert\bigr)\,\Rz(-\theta_{k}),
  \qquad \Rx(\alpha)=e^{-i\alpha\sigma_{x}/2},
\end{equation}
so that the full propagator for the wave dynamics \cref{eq:H1d} is given by
\begin{equation}\label{eq:U1d}
  U(t)=e^{-i\widehat{H}t}
      =\sum_{k=0}^{N_{h}-1}\ket{k}\!\bra{k}\otimes
       \Rz(\theta_{k})\,\Rx\!\bigl(-4tN_{h}\sin\theta_{k}\bigr)\,\Rz(-\theta_{k}).
\end{equation}

The only remaining obstacle to a shallow circuit is the rotation angle $-4tN_{h}\sin\theta_{k}$, which is
nonlinear in the mode index; the surrounding phase $\theta_{k}=\pi k/N_{h}$ is affine in $k$ and
hence already diagonal in the computational basis of the mode register.

\subsubsection{Low-mode linearization}
\label{subsec:1d-linearization}
For low Fourier modes $k=O(1)$, we can apply a linear approximation, 
\begin{equation}\label{eq:sinapprox}
  \sin\!\left(\frac{\pi k}{N_{h}}\right)\approx\frac{\pi}{N_{h}}\min(k,N_{h}-k)
  =\frac{\pi}{N_{h}}\,k_{s},
  \qquad k_{s}\equiv\min(k,N_{h}-k).
\end{equation}
Substituting this into the diagonal generator of~\eqref{eq:Hkhad} replaces
$-2N_{h}\sin\theta_{k}$ by $-2N_{h}\cdot\tfrac{\pi}{N_{h}}k_{s}=-2\pi k_{s}$, giving the approximate
diagonal generator
\begin{equation}\label{eq:HZtilde}
  \widetilde H_{Z}=\sum_{k=0}^{N_{h}-1}\ket{k}\!\bra{k}\otimes\bigl[-2\pi k_s \,\sigma_{z}\bigr].
\end{equation}
For even $N_{h} = 2^{n_{h}}$, splitting the fold at the midpoint gives,
\begin{align}
  \widetilde H_{Z}
    &=\sum_{k=0}^{N_{h}/2-1}\ket{k}\!\bra{k}\otimes(-2\pi k\,\sigma_{z})+\sum_{k=N_{h}/2}^{N_{h}-1}\ket{k}\!\bra{k}\otimes\bigl[-2\pi(N_{h}-k)\,\sigma_{z}\bigr].
\end{align}
Using $-2\pi(N_{h}-k)=-2\pi k-2\pi N_{h}+4\pi k$, this regroups as
\begin{align}\label{eq:HZtilde-msb}
  \widetilde H_{Z}
    &=\sum_{k=0}^{N_{h}-1}\ket{k}\!\bra{k}\otimes(-2\pi k\,\sigma_{z})
    +\sum_{k=N_{h}/2}^{N_{h}-1}\ket{k}\!\bra{k}\otimes(-2\pi N_{h}\,\sigma_{z})+\sum_{k=N_{h}/2}^{N_{h}-1}\ket{k}\!\bra{k}\otimes(4\pi k\,\sigma_{z}).
\end{align}

\begin{lemma}\label{lem:crz}
For any fixed $r\in\{0,\dots,n_{h}-1\}$ and any $\phi\in\mathbb{R}$,
\begin{equation}\label{eq:crz-def}
  CR_Z^{(r)}(\phi)=\sum_{k=0}^{N_{h}-1}\ket{k}\!\bra{k}\otimes\Rz(\phi k_{r}),
\end{equation}
where $k_{r}\in\{0,1\}$ is the $r$-th bit in the binary expansion
$k=\sum_{r=0}^{n_{h}-1}k_{r}2^{r}$, with $N_{h}=2^{n_{h}}$.
\end{lemma}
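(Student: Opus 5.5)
The lemma as written is essentially a definition, but its content is the claim that $CR_Z^{(r)}(\phi)$ is a single two-qubit gate: an $\Rz(\phi)$ on the field qubit controlled by grid qubit $r$. The plan is to show this by expanding the mode-register projector in the computational basis of the $n_h$ qubits.

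First, write $\ket{k}=\ket{k_{n_h-1}}\otimes\cdots\otimes\ket{k_0}$, so that $\ket{k}\!\bra{k}=\bigotimes_{s}\ket{k_s}\!\bra{k_s}$. The operator $\Rz(\phi k_r)$ depends on $k$ only through the bit $k_r$. We can therefore group the sum over $k$ by the value of $k_r$. For each fixed value $b\in\{0,1\}$, summing over the remaining $n_h-1$ bits uses the completeness relation $\sum_{k_s}\ket{k_s}\!\bra{k_s}=I_2$ on every qubit $s\neq r$. This yields
\begin{equation*}
  CR_Z^{(r)}(\phi)=\Bigl(I^{\otimes(n_h-1)}\otimes\ket{0}\!\bra{0}_r\Bigr)\otimes I_2+\Bigl(I^{\otimes(n_h-1)}\otimes\ket{1}\!\bra{1}_r\Bigr)\otimes\Rz(\phi),
\end{equation*}
where we use $\Rz(0)=I_2$. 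Here the factor ordering is understood so that the projector acts on qubit $r$ of the mode register. This is exactly the standard controlled-$\Rz(\phi)$ with control on grid qubit $r$ and target on the field qubit, tensored with the identity elsewhere.

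Second, for later use in assembling the exponential of \eqref{eq:HZtilde-msb}, we record the composition property. Since $\Rz(\alpha)\Rz(\beta)=\Rz(\alpha+\beta)$ and all the operators are diagonal in the mode register, we have
\begin{equation*}
  \prod_{r}CR_Z^{(r)}(\phi_r)=\sum_k\ket{k}\!\bra{k}\otimes\Rz\Bigl(\sum_r\phi_r k_r\Bigr).
\end{equation*}
Taking $\phi_r=\phi\,2^r$ gives $\sum_k\ket{k}\!\bra{k}\otimes\Rz(\phi k)$. This is how the linear term $-2\pi k\,\sigma_z$ in \eqref{eq:HZtilde-msb} is realized by $n_h$ controlled rotations.

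No step is a real obstacle. The only point needing care is bookkeeping: the qubit ordering convention, and the factor of $2$ between $e^{-i\alpha\sigma_z/2}$ and the generator coefficients. The latter matters only when matching angles to \eqref{eq:HZtilde-msb}, not in the identity itself.
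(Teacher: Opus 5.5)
Your proof is correct and takes essentially the paper's route, run in the opposite direction. The paper starts from the gate form $\ket{0}\!\bra{0}_r\otimes I+\ket{1}\!\bra{1}_r\otimes\Rz(\phi)$ and rewrites it as $\sum_k\ket{k}\!\bra{k}\otimes\Rz(\phi k_r)$ using $I=\sum_k\ket{k}\!\bra{k}$, $\ket{1}\!\bra{1}_r=\sum_k k_r\ket{k}\!\bra{k}$, and $I+k_r(\Rz(\phi)-I)=\Rz(\phi k_r)$; you start from the sum and collapse it to the gate form by grouping on $k_r$ and using completeness on the other qubits, and your composition remark is exactly the paper's subsequent factorization of $V_\theta$ into $\prod_r CR_Z^{(r)}(\alpha_r)$.
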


\begin{proof}
Since $\ket{0}\!\bra{0}_{r}+\ket{1}\!\bra{1}_{r}=I^{(r)}$ is the identity on the
$r$-th qubit, the controlled rotation
$CR_Z^{(r)}(\phi)=\ket{0}\!\bra{0}_{r}\otimes I+\ket{1}\!\bra{1}_{r}\otimes\Rz(\phi)$
can be written as
\begin{align}\label{eq:crz-expand}
  CR_Z^{(r)}(\phi)
    &=I\otimes I+\ket{1}\!\bra{1}_{r}\otimes\bigl(\Rz(\phi)-I\bigr)\nonumber\\
    &=\sum_{k=0}^{N_{h}-1}\ket{k}\!\bra{k}\otimes I
      +\sum_{k=0}^{N_{h}-1}k_{r}\,\ket{k}\!\bra{k}\otimes\bigl(\Rz(\phi)-I\bigr)\nonumber\\
    &=\sum_{k=0}^{N_{h}-1}\ket{k}\!\bra{k}\otimes\bigl[I+k_{r}\bigl(\Rz(\phi)-I\bigr)\bigr]
     =\sum_{k=0}^{N_{h}-1}\ket{k}\!\bra{k}\otimes\Rz(\phi k_{r}),
\end{align}
where the second line uses $I=\sum_{k}\ket{k}\!\bra{k}$ and
$\ket{1}\!\bra{1}_{r}=\sum_{k}k_{r}\ket{k}\!\bra{k}$, and the last equality
follows from $k_{r}\in\{0,1\}$, since then
$I+k_{r}(\Rz(\phi)-I)=\Rz(\phi k_{r})$.
\end{proof}

Any diagonal profile that is affine in the bits, $\theta_{k}=\sum_{r=0}^{n_{h}-1}\alpha_{r}k_{r}$,
therefore factorizes into a product of controlled $z$-rotations. By
Lemma~\ref{lem:crz} and the mutual commutativity of the $CR_Z^{(r)}$,
\begin{equation}\label{eq:M-synth}
  V_\theta=\sum_{k=0}^{N_{h}-1}\ket{k}\!\bra{k}\otimes\Rz(\theta_{k})
   =\sum_{k=0}^{N_{h}-1}\ket{k}\!\bra{k}\otimes\Rz\!\Bigl(\sum_{r=0}^{n_{h}-1}\alpha_{r}k_{r}\Bigr)
   =\prod_{r=0}^{n_{h}-1}CR_Z^{(r)}(\alpha_{r}).
\end{equation}
In particular, choosing $\alpha_{r}=\pi 2^{r}/N_{h}$ realizes the linear phase
$\theta_{k}=\pi k/N_{h}=\tfrac{\pi}{N_{h}}\sum_{r}2^{r}k_{r}$:
\begin{equation}\label{eq:M-linear}  V_\theta=\sum_{k=0}^{N_{h}-1}\ket{k}\!\bra{k}\otimes\Rz\!\Bigl(\tfrac{\pi k}{N_{h}}\Bigr)
   =\prod_{r=0}^{n_{h}-1}CR_Z^{(r)}\!\Bigl(\frac{\pi 2^{r}}{N_{h}}\Bigr).
\end{equation}

It thus follows that the approximate propagator is given by (note that here H is the Hadamard gate)
\begin{equation}
\label{eq:1d-fourier-app-ham}
\widehat U_{\mathrm{approx}}(t) = e^{-i\widehat H t} = V_\theta(I \otimes \mathrm H)e^{-i\widetilde H_Z t}(I \otimes \mathrm H ) V_\theta^\dagger
\end{equation}
This linear approximation has been implemented by the circuit diagram shown in \cref{fig:hz-tilde-circuit}

\medskip

Now, we show an estimate of the error due to \cref{eq:sinapprox}. From \cref{eq:U1d} and \cref{eq:sinapprox} it follows that
\begin{equation}\label{eq:Uapp}
  \widehat U_{\mathrm{approx}}(t)=\sum_{k=0}^{N_{h}-1}\ket{k}\!\bra{k}\otimes
       \Rz(\theta_{k})\,\Rx\!\bigl(-4\pi t\,k_{s}\bigr)\,\Rz(-\theta_{k}).
\end{equation}
Here, t denotes the global evolution time, not a small time step, so the resulting circuit depth is independent of t, which is precisely the fast-forwarding property. 
Note that both propagators are block-diagonal in the mode register and share the unitary operator
$\Rz(\pm\theta_{k})$, so the operator norm of their difference is the
largest single-block deviation, and conjugation by the dressing leaves each block invariant.

Let
\begin{equation}\label{eq:spectral-projector}
  \Pi_K
  :=
  \sum_{\substack{0\le k<N_h\\ k_s\le K}}
  \ket{k}\!\bra{k}\otimes I_2,
  \qquad
  k_s:=\min\{k,N_h-k\},
  \qquad
  0\le K\le \left\lfloor\frac{N_h}{2}\right\rfloor .
\end{equation}
Because both propagators are block diagonal in the Fourier basis and
preserve $\operatorname{Ran}(\Pi_K)$, their operator-norm difference on
the retained spectral subspace is
\begin{equation}\label{eq:blocknorm}
  \bigl\lVert
    \left(U(t)-U_{\mathrm{approx}}(t)\right)\Pi_K
  \bigr\rVert
  =
  \max_{k_s\le K}
  \bigl\lVert
    \Rx\!\bigl(-4tN_h\sin\theta_k\bigr)
    -
    \Rx\!\bigl(-4\pi t\,k_s\bigr)
  \bigr\rVert .
\end{equation}
Using
$\lVert\Rx(a)-\Rx(b)\rVert
 =\lVert\Rx(a-b)-I\rVert$,
let
\begin{equation}
  \delta_k
  :=
  \sin\theta_k-\frac{\pi k_s}{N_h},
  \qquad
  \theta_k=\frac{\pi k_s}{N_h}.
\end{equation}
Since $a-b=-4tN_h\delta_k$, the error in the $k$th Fourier block is
\begin{equation}\label{eq:rxdiff}
  \Rx\!\bigl(-4tN_h\delta_k\bigr)-I
  =
  e^{\,i(2tN_h\delta_k)\sigma_x}-I,
\end{equation}
and hence
\begin{equation}
  \bigl\lVert
    e^{\,i(2tN_h\delta_k)\sigma_x}-I
  \bigr\rVert
  =
  2\bigl|\sin(tN_h\delta_k)\bigr|
  \le
  2|t|N_h|\delta_k|.
\end{equation}
The linearization error is cubic:
\begin{equation}
  |\delta_k|
  =
  \left|
    \sin\!\left(\frac{\pi k_s}{N_h}\right)
    -
    \frac{\pi k_s}{N_h}
  \right|
  \le
  \frac{1}{6}
  \left(\frac{\pi k_s}{N_h}\right)^3.
\end{equation}
Therefore,
\begin{equation}\label{eq:errbound}
  {
  \bigl\lVert
    \left(U(t)-U_{\mathrm{approx}}(t)\right)\Pi_K
  \bigr\rVert
  \le
  \frac{\pi^3|t|K^3}{3N_h^2}.
  }
\end{equation}
Thus, for a family of initial states supported on a fixed spectral
band $k_s\le K=O(1)$, the state-vector error is
$O(|t|/N_h^2)$. More generally, for any normalized initial state
$\ket{\psi_0}$,
\begin{equation}\label{eq:state-error-with-tail}
  \bigl\|
    \left(U(t)-U_{\mathrm{approx}}(t)\right)\ket{\psi_0}
  \bigr\|
  \le
  \frac{\pi^3|t|K^3}{3N_h^2}
  +
  2\bigl\|(I-\Pi_K)\ket{\psi_0}\bigr\|.
\end{equation}
The second term explicitly accounts for the spectral tail outside the
retained low-frequency subspace.

Finally, since the infidelity between normalized pure states satisfies
\begin{equation}
  1-
  \left|
    \langle\psi_{\mathrm{exact}}
    \mid\psi_{\mathrm{approx}}\rangle
  \right|^2
  \le
  \left\|
    \ket{\psi_{\mathrm{exact}}}
    -
    \ket{\psi_{\mathrm{approx}}}
  \right\|^2,
\end{equation}
a fixed-bandwidth initial state obeys
\begin{equation}
  1-
  \left|
    \langle\psi_{\mathrm{exact}}
    \mid\psi_{\mathrm{approx}}\rangle
  \right|^2
  \le
  \frac{\pi^6t^2K^6}{9N_h^4}
  =
  O\!\left(\frac{t^2K^6}{N_h^4}\right).
\end{equation}
For fixed $K$, this reduces to $O(t^2/N_h^4)$, consistent with the
scaling reported for QFT-based wave-equation
circuits~\cite{wright2024noisy}.

\subsubsection{Complete evolution circuit}
The complete construction for the evolution of the wave equation is given by
\begin{equation}
    \ket{\Psi(t)}=e^{-iHt}\ket{\Psi(0)}
\end{equation}
\begin{equation}
    H=(\F\otimes I)^{\dagger}\widehat H(\F\otimes I)
    \quad\Longrightarrow\quad
    e^{-iHt}=(\F^{\dagger}\otimes I)\,\widehat U(t)\,(\F\otimes I),
    \qquad
    \widehat U(t)=e^{-i\widehat H t}
\end{equation}

For the initial conditions considered here, the velocity field vanishes,
\(v(x_j,0)=0\), while the flux field is initialized with the non-uniform profile \(p(x_j,0)=p_j\).

\begin{equation}
    \ket{\Psi(0)} = \ket{\psi_p}\ket{p}, \quad
    \ket{\psi_{p}} = U_{\mathrm{prep}}\ket{0}^{\otimes n_{h}}=\sum_{j}p_{j}\ket{j}, \quad
    \ket{\Psi(t)}
    =(\F^{\dagger}\otimes I)\,\widehat U(t)\,(\F\otimes I)\,
     U_{\mathrm{prep}}\ket{0}^{\otimes n_{h}}\ket{p}
\end{equation}
Here, $U_{prep}$ denotes the initial state preparation circuit in spatial domain where as $\widehat U_{prep}$ denotes the initial state preparation circuit in Fourier domain. The constructions of \cref{subsec:1d-cosine,subsec:1d-fivemode} instead give
$\widehat U_{\mathrm{prep}}$ with
\begin{equation}
    \widehat U_{\mathrm{prep}}\ket{0}^{\otimes n_{h}}=\ket{\psi_{p}}
    =(\F\otimes I)\, U_{\mathrm{prep}}\ket{0}^{\otimes n_{h}},
\end{equation} using construction of $\widehat U_{\mathrm{prep}}$ in
 \cref{eq:cos-target,eq:psiG} corresponding to cosine and gaussian profile respectively, we have
\begin{equation}\label{eq:1d-overall-circuit}
    \ket{\Psi(t)}
    =(\F^{\dagger}\otimes I)\,\widehat U(t)\,
     \widehat U_{\mathrm{prep}}\ket{0}^{\otimes n_{h}}\ket{p}
\end{equation}

The corresponding circuit schematic of \cref{eq:1d-overall-circuit} is shown in \cref{fig:evolution-circuit}.

\subsection{Two-dimensional acoustic wave equation}
\label{sec:methods2d}

The one-dimensional construction relies on the fact that the discrete derivative is diagonalized by the QFT and that each Fourier mode reduces to a two-level system.  In two dimensions, the Fourier diagonalization remains exact, but each mode contains two velocity components coupled to one flux component.  A naive directional splitting recovers two one-dimensional subproblems, but it introduces a sizeable product-formula error. The main observation of this section is that the unsplit two-dimensional block has rank two: only the
longitudinal velocity component couples to pressure, while the transverse
component is dark.  This gives an exact mode-wise reduction to a single
bright-flux rotation.

\subsubsection{Fourier block structure}
\label{subsec:2d-Fourier -blocks} 
To reduce the Hamiltonian in \cref{eq:semidiscrete2d} to the Fourier domain, we apply the two-dimensional QFT acting on the two spatial registers.
\begin{equation}
    \mathcal F_2=\mathcal F\otimes \mathcal F
\end{equation}
 The transform
is applied directly and unconditionally to both registers; no ancillary qubit is
needed to choose an $x$- or $y$-directional QFT.  Specifically, by using
\cref{eq:DxDy,eq:circulant-eig}, we have the following reduction
\begin{equation}
  \mathcal F_{2}D_{x}\mathcal F_{2}^{\dagger}
  =\operatorname{diag}(\ell_k) \otimes I,
  \qquad
  \mathcal F_{2}D_{y}\mathcal F_{2}^{\dagger}
  =I \otimes \operatorname{diag}(\ell_k).
\end{equation}
Therefore, conjugating \cref{eq:semidiscrete2d} by \(\mathcal F_2\) decouples
the dynamics into independent Fourier blocks indexed by
\(\bm k=(k_x,k_y)\):
\begin{equation}\label{eq:2d-Fourier -H}
  \widehat H_{2\mathrm D}
  =\sum_{k_x,k_y=0}^{N-1}
  \ket{k_x,k_y}\!\bra{k_x,k_y}\otimes H_{\bm k},
\end{equation}
where, in the field ordering \(\{\ket{v_x},\ket{v_y},\ket p\}\), each diagonal block exhibits the following matrix structure,
\begin{equation}\label{eq:2d-Hk-3by3}
  H_{\bm k}
  =
  \mat{0 & 0 & \lambda_{x,\bm k}\\
       0 & 0 & \lambda_{y,\bm k}\\
       \bar\lambda_{x,\bm k} & \bar\lambda_{y,\bm k} & 0 } .
\end{equation}
Here the nonzero entries are given by,
\begin{equation}\label{eq:2d-lambda-def}
    \lambda_{x,\bm k}
    =\frac{i}{h}\left(1-e^{-2\pi i k_x/N_h}\right),
    \qquad
    \lambda_{y,\bm k}
    =\frac{i}{h}\left(1-e^{-2\pi i k_y/N_h}\right).
\end{equation}
For circuit implementation, we pad this block by one unused field state and use
four field labels
\begin{equation}\label{eq:2d-field-labels}
    \ket{v_x}=\ket{00},
    \qquad
    \ket{v_y}=\ket{01},
    \qquad
    \ket p=\ket{10},
    \qquad
    \ket a=\ket{11}.
\end{equation}
The padded mode Hamiltonian is a matrix in $\mathbb{C}^{4\times 4}$ 
\begin{equation}\label{eq:2d-Hk-padded}
    \overline H_{\bm k}
    =
    \begin{pmatrix}
        0 & 0 & \lambda_x & 0\\
        0 & 0 & \lambda_y & 0\\
        \bar\lambda_x & \bar\lambda_y & 0 & 0\\
        0 & 0 & 0 & 0
    \end{pmatrix},
\end{equation}
where, to reduce notation, we write
\(\lambda_\mu=\lambda_{\mu,\bm k}\), \(\mu\in\{x,y\}\), inside a fixed block.

\subsubsection{Directional operator splitting }
\label{subsec:2d-trotterization}

A natural approach is to decompose the block into directional couplings,
\begin{equation}\label{eq:2d-Hsplit}
    \overline H_{\bm k}=H_x(k_x)+H_y(k_y),
\end{equation}
where
\begin{equation}\label{eq:2d-Hx-Hy}
    H_x=\lambda_x\ket{v_x}\!\bra p+\lambda_x^*\ket p\!\bra{v_x},
    \qquad
    H_y=\lambda_y\ket{v_y}\!\bra p+\lambda_y^*\ket p\!\bra{v_y},
\end{equation}
and implement them one at a time. Each of the Hamiltonians is acting in one direction, for which the circuit implemented has been outlined in \cref{sec:permode}.  
This is the two-dimensional analogue of applying the one-dimensional
Fourier-mode circuit separately in the two coordinate directions, and is close
in spirit to the directional Fourier approach used in \cite{lubasch2025fourier}.
Let
\begin{equation}
    h_\mu(k_\mu)=
    \begin{pmatrix}0&\lambda_\mu\\ \lambda_\mu^*&0\end{pmatrix}.
\end{equation}
Then, with the field register written as \(f_1\) and \(f_0\),
\begin{equation}\label{eq:2d-directional-blocks}
    H_x(k_x)=h_x(k_x)_{f_1}\otimes\ket0\!\bra0_{f_0},
\end{equation}
whereas the \(y\)-coupling between \(\ket{01}\) and \(\ket{10}\) can be mapped
to the same field-qubit axis by a CNOT,
\begin{equation}\label{eq:2d-y-directional-block}
    H_y(k_y)
    =C\left[h_y(k_y)_{f_1}\otimes\ket1\!\bra1_{f_0}\right]C,
    \qquad
    C=\operatorname{CNOT}_{f_1\to f_0}.
\end{equation}
Promoting these blocks to the full Fourier register gives
\begin{equation}\label{eq:2d-Hxy-full}
    \widehat H_x
    =\sum_{k_x=0}^{N-1}\ket{k_x}\!\bra{k_x}\otimes I_y\otimes H_x(k_x),
    \qquad
    \widehat H_y
    =I_x\otimes\sum_{k_y=0}^{N-1}\ket{k_y}\!\bra{k_y}\otimes H_y(k_y),
\end{equation}
with \(\widehat H_{2\mathrm D}=\widehat H_x+\widehat H_y\). 

The corresponding directional evolutions reduce to the one-dimensional
mode-evolution circuit.
\begin{equation}\label{eq:2d-Ux}
    e^{-i\widehat H_x\tau}
    =\sum_{k_x=0}^{N-1}\ket{k_x}\!\bra{k_x}\otimes I_y\otimes
      \left[
      e^{-ih_x(k_x)\tau}\otimes\ket0\!\bra0_{f_0}
      +I_{f_1}\otimes\ket1\!\bra1_{f_0}
      \right].
\end{equation}
Similarly,
\begin{equation}\label{eq:2d-Uy}
    e^{-i\widehat H_y\tau}
    =(I_{x}\otimes I_y\otimes C)
    \left[
    I_x\otimes\sum_{k_y=0}^{N-1}\ket{k_y}\!\bra{k_y}\otimes
      \left(
      e^{-ih_y(k_y)\tau}\otimes\ket1\!\bra1_{f_0}
      +I_{f_1}\otimes\ket0\!\bra0_{f_0}
      \right)
    \right]
    (I_{x}\otimes I_y \otimes C).
\end{equation}

Although each Hamiltonian induces an evolution that can be simulated separately,  the two directional Hamiltonians do not commute, i.e., for a fixed mode pair,
\begin{equation}\label{eq:2d-commutator}
    [H_x,H_y]
    =\lambda_x\lambda_y^*\ket{v_x}\!\bra{v_y}
    -\lambda_x^*\lambda_y\ket{v_y}\!\bra{v_x},
    \qquad
    \|[H_x,H_y]\|=|\lambda_x|\,|\lambda_y|.
\end{equation}
Thus,
\begin{equation}
    \norm{[\widehat H_x,\widehat H_y]}
    =\max_{k_x,k_y}|\lambda_{x,\bm k}|\,|\lambda_{y,\bm k}|.
\end{equation}

The first-order Lie--Trotter approximation,
\begin{equation}
    U_{\rm LT}^{(r)}(t)
    \coloneqq\left(e^{-i\widehat H_xt/r}e^{-i\widehat H_yt/r}\right)^r,
\end{equation}
therefore obeys
\begin{equation}\label{eq:2d-trotter-bound}
    \left\|e^{-i(\widehat H_x+\widehat H_y)t}-U_{\rm LT}^{(r)}(t)\right\|
    \le
    \frac{t^2}{2r}\|[\widehat H_x,\widehat H_y]\|.
\end{equation}
Since
\begin{equation}
    |\lambda_{\mu,\bm k}|
    =\frac{2}{h}\sin\left(\frac{\pi m_\mu}{N}\right),
    \qquad
    m_\mu=\min(k_\mu,N-k_\mu),
\end{equation}
and \(h=1/N\), the full estimate in operator norm is given by, 
\begin{equation}\label{eq:2d-trotter-full-spectrum}
    \left\|e^{-i(\widehat H_x+\widehat H_y)t}-U_{\rm LT}^{(r)}(t)\right\|
    \le
    \frac{2N^2t^2}{r}.
\end{equation}
Even if the state is restricted to low modes \(m_x\le K_x\), \(m_y\le K_y\),
we only obtain
\begin{equation}\label{eq:2d-trotter-lowmode}
    \left\|e^{-i(\widehat H_x+\widehat H_y)t}-U_{\rm LT}^{(r)}(t)\right\|
    \le
    \frac{2\pi^2t^2K_xK_y}{r}.
\end{equation}
For example, \(K_x=K_y=1\), \(t=1\), and target error \(10^{-1}\) already give
\(r\gtrsim 2\times 10^2\) from this bound.  Thus the directional approach is
simple but produces circuits whose depth is dominated by the required number of
product-formula steps.  We next describe an exact mode-wise reduction that
removes this product-formula error.

\subsubsection{Bright--dark reduction}
\label{subsec:bright-dark-reduction}

The directional decomposition treats the two pressure--velocity couplings
separately, even though they belong to the same three-state Fourier block. A more efficient construction follows by first identifying the velocity superposition that actually couples to pressure. This is the same algebraic structure that appears in the Morris--Shore transformation of multilevel optical systems, where a change of basis separates coupled bright states from uncoupled dark states~\cite{morris1983reduction,fleischhauer2005eit}.

For a fixed Fourier mode \(\bm k=(k_x,k_y)\), define
\begin{equation}\label{eq:2d-amplitudes-Omega}
    A_\mu=|\lambda_\mu|,
    \qquad
    \lambda_\mu=A_\mu e^{i\phi_\mu},
    \qquad
    \Omega_{\bm k}=\sqrt{A_x^2+A_y^2}.
\end{equation}
When \(\Omega_{\bm k}\neq0\), introduce the mixing angle
\begin{equation}\label{eq:eta-def}
    \cos\eta_{\bm k}=\frac{A_x}{\Omega_{\bm k}},
    \qquad
    \sin\eta_{\bm k}=\frac{A_y}{\Omega_{\bm k}}.
\end{equation}
The normalized velocity superposition parallel to the discrete Fourier
gradient is
\begin{equation}\label{eq:bright-state}
    \ket{b_{\bm k}}
    =\cos\eta_{\bm k}\,e^{i\phi_x}\ket{v_x}
     +\sin\eta_{\bm k}\,e^{i\phi_y}\ket{v_y},
\end{equation}
whereas the orthogonal velocity superposition is
\begin{equation}\label{eq:dark-state}
    \ket{d_{\bm k}}
    =-\sin\eta_{\bm k}\,e^{i\phi_x}\ket{v_x}
     +\cos\eta_{\bm k}\,e^{i\phi_y}\ket{v_y}.
\end{equation}
A direct substitution into \cref{eq:2d-Hk-3by3} gives
\begin{equation}\label{eq:bright-dark-action}
    H_{\bm k}\ket{b_{\bm k}}=\Omega_{\bm k}\ket p,
    \qquad
    H_{\bm k}\ket p=\Omega_{\bm k}\ket{b_{\bm k}},
    \qquad
    H_{\bm k}\ket{d_{\bm k}}=0.
\end{equation}
Thus flux couples only to the longitudinal, or bright, velocity
component. The transverse velocity component is dark because it is orthogonal
to the discrete wave vector and therefore has zero discrete divergence.

In the ordered basis
\(\{\ket{b_{\bm k}},\ket{d_{\bm k}},\ket p,\ket a\}\), the padded
Hamiltonian becomes
\begin{equation}\label{eq:bd-Htilde}
    \widetilde H_{\bm k}
    =\Omega_{\bm k}
      \left(\ket{b_{\bm k}}\!\bra p+\ket p\!\bra{b_{\bm k}}\right)
    =
    \begin{pmatrix}
        0&0&\Omega_{\bm k}&0\\
        0&0&0&0\\
        \Omega_{\bm k}&0&0&0\\
        0&0&0&0
    \end{pmatrix}.
\end{equation}
For the zero mode, \(\Omega_{\bm k}=0\), the entire block vanishes and its
evolution is the identity.

Let \(T_{\bm k}\) be the change-of-basis matrix whose columns are
\(\ket{b_{\bm k}},\ket{d_{\bm k}},\ket p,\ket a\). Then
\begin{equation}\label{eq:T-H-Tdag}
    \overline H_{\bm k}
    =T_{\bm k}\widetilde H_{\bm k}T_{\bm k}^\dagger.
\end{equation}
This transformation factorizes into a diagonal phase transformation and a real
Givens rotation,
\begin{equation}\label{eq:T-PG}
    T_{\bm k}=P_{\bm k}G_{\bm k},
\end{equation}
where
\begin{equation}\label{eq:P-def}
    P_{\bm k}
    =\operatorname{diag}\left(e^{i\phi_x},e^{i\phi_y},1,1\right)
\end{equation}
and
\begin{equation}\label{eq:G-def}
    G_{\bm k}
    =
    \begin{pmatrix}
        \cos\eta_{\bm k} & -\sin\eta_{\bm k} & 0 & 0\\
        \sin\eta_{\bm k} & \phantom{-}\cos\eta_{\bm k} & 0 & 0\\
        0&0&1&0\\
        0&0&0&1
    \end{pmatrix}.
\end{equation}
Consequently, the exact mode evolution is
\begin{equation}\label{eq:exact-bd-block-evolution}
    e^{-i\overline H_{\bm k}t}
    =
    P_{\bm k}G_{\bm k}
    e^{-i\widetilde H_{\bm k}t}
    G_{\bm k}^\dagger P_{\bm k}^\dagger.
\end{equation}

The remaining evolution is only a two-level rotation. Assigning the transformed
field states to
\begin{equation}
    \ket{b_{\bm k}}=\ket{00},
    \qquad
    \ket{d_{\bm k}}=\ket{01},
    \qquad
    \ket p=\ket{10},
    \qquad
    \ket a=\ket{11},
\end{equation}
gives
\begin{equation}\label{eq:bd-H-as-X}
    \widetilde H_{\bm k}
    =\Omega_{\bm k}X_{f_1}\otimes\ket0\!\bra0_{f_0}.
\end{equation}
With \(R_X(\theta)=e^{-i\theta X/2}\), its propagator is therefore
\begin{equation}\label{eq:Rx-bright-pressure}
    e^{-i\widetilde H_{\bm k}t}
    =
    R_{X_{f_1}}(2\Omega_{\bm k}t)\otimes\ket0\!\bra0_{f_0}
    +I_{f_1}\otimes\ket1\!\bra1_{f_0}.
\end{equation}
Similarly, the Givens transformation is an \(R_Y\) rotation on \(f_0\),
controlled on \(f_1=0\):
\begin{equation}\label{eq:G-as-controlled-Ry}
    G_{\bm k}
    =
    \ket0\!\bra0_{f_1}\otimes R_Y^{(f_0)}(2\eta_{\bm k})
    +\ket1\!\bra1_{f_1}\otimes I_{f_0}.
\end{equation}
Thus each nonzero Fourier block is compiled into one mode-dependent
bright--flux \(R_X\) rotation, one mode-dependent Givens \(R_Y\) rotation,
and the two phase functions contained in \(P_{\bm k}\).

The simplification is even greater for the pressure-only initial states used in
our hardware experiments,
\begin{equation}\label{eq:pressure-only-input}
    \ket{\Psi_p}
    =\sum_{\bm k}c_{\bm k}\ket{\bm k}\ket p.
\end{equation}
Because
\(P_{\bm k}^\dagger\ket p=G_{\bm k}^\dagger\ket p=\ket p\), the initial
change to the bright--dark basis acts trivially. Hence
\begin{equation}\label{eq:pressure-only-reduction}
    e^{-i\widehat H_{2\mathrm D}t}\ket{\Psi_p}
    =\mathcal P\mathcal G\mathcal R_x\ket{\Psi_p},
\end{equation}
where
\begin{equation}\label{eq:P-G-Rx-full}
    \mathcal P
    =\sum_{\bm k}\proj{\bm k}\otimes P_{\bm k},
    \qquad
    \mathcal G
    =\sum_{\bm k}\proj{\bm k}\otimes G_{\bm k},
\end{equation}
and
\begin{equation}
    \mathcal R_x
    =\sum_{\bm k}\proj{\bm k}\otimes
    \left[
        R_X(2\Omega_{\bm k}t)_{f_1}\otimes\ket0\!\bra0_{f_0}
        +I_{f_1}\otimes\ket1\!\bra1_{f_0}
    \right].
\end{equation}
For a general initial state containing velocity components, the full sequence
\(\mathcal P\mathcal G\mathcal R_x
\mathcal G^\dagger\mathcal P^\dagger\) is required.

The bright--dark basis itself is well established in optical control and has
also been used in the design of Raman quantum gates~\cite{torosov2020high}.
Our contribution is to recognize that every two-dimensional acoustic Fourier
block has precisely this structure and to convert its mode-dependent
Morris--Shore transformation into an explicit digital circuit. To our
knowledge, this reduction has not been used in previous quantum wave-equation
algorithms~\cite{wright2024noisy,lubasch2025fourier}. Unlike directional
splitting, it evolves the coupled \(x\)- and \(y\)-dynamics exactly within each
retained Fourier block. It therefore removes both the directional
product-formula error and the \(r\)-fold repetition of the directional
subcircuits. The price is a joint dependence on \((k_x,k_y)\), which we compile
explicitly below.

\subsubsection{Compilation of the nonlinear mode dependence}
\label{subsec:nonlinear-compilation}

The bright--dark reduction leaves two nonlinear, mode-dependent rotation
angles,
\begin{equation}\label{eq:nonlinear-angles}
    \alpha_{\bm k}
    =2\eta_{\bm k}
    =2\operatorname{atan2}(A_y,A_x),
    \qquad
    \beta_{\bm k}
    =2t\Omega_{\bm k}
    =2t\sqrt{A_x^2+A_y^2}.
\end{equation}
The phases \(\phi_x\) and \(\phi_y\) in \(P_{\bm k}\) are simpler because each
depends on only one Fourier index.

In the hardware implementation reported here, we do not evaluate the square
root or \(\operatorname{atan2}\) coherently. Instead, we exploit the fixed
low-frequency support of the initial states. We retain
\begin{equation}
    \mathcal K_1=\{0,1,2,N_h-2,N_h-1\}
\end{equation}
in each spatial direction, giving
\(\mathcal K=\mathcal K_1\times\mathcal K_1\). Thus
\(|\mathcal K|=25\) remains fixed as the spatial grid is refined.

The required values of
\(\alpha_{\bm k}\), \(\beta_{\bm k}\), \(\phi_x\), and \(\phi_y\) are
precomputed classically from the exact finite-difference symbols. The circuit
then proceeds in three stages. First, the reversible circuit \(U_{\rm mag}\)
extracts the sign and magnitude bits of \(k_x\) and \(k_y\). Second, Boolean
expansions of those bits implement \(\mathcal R_x\), \(\mathcal G\), and
\(\mathcal P\) as products of singly and doubly controlled rotations. Finally,
\(U_{\rm mag}^\dagger\) uncomputes the work registers. For the pressure-only
inputs considered here, the resulting circuit is
\(U_{\rm mag}^\dagger\mathcal P\mathcal G\mathcal R_xU_{\rm mag}\), as detailed
in \cref{subsec:restricted-implementation,eq:pressure-bd-circuit}.

This is the compilation strategy used in all two-dimensional hardware
experiments below. In particular, it is not a lookup over all \(N_h^2\) Fourier
modes: only the fixed retained set is synthesized. Moreover, because the exact
finite-difference values are precomputed on that set, the implemented
bright--dark evolution has no directional-splitting error and no
symbol-linearization error. Changing the evolution time modifies the
coefficients of the controlled \(R_X\) rotations but does not increase the
number of circuit layers.

For broader Fourier support, the same reduction could instead be combined with
reversible fixed-point arithmetic. Such a circuit would compute
\begin{equation}
    (k_x,k_y)\longmapsto A_x,A_y,\Omega_{\bm k},\eta_{\bm k},
\end{equation}
apply \(R_Y(2\eta_{\bm k})\) and \(R_X(2t\Omega_{\bm k})\), and then uncompute
the arithmetic registers. This provides a scalable extension of the
bright--dark construction, but it is not the strategy used in the present
hardware experiments.

For completeness, one may also simplify the arithmetic by replacing the exact
finite-difference magnitudes with their low-mode approximations,
\begin{equation}\label{eq:linearized-symbol-2d}
    A_\mu\approx2\pi|\kappa_\mu|,
    \qquad
    \Omega_{\bm k}\approx
    2\pi\sqrt{\kappa_x^2+\kappa_y^2},
    \qquad
    \eta_{\bm k}\approx
    \operatorname{atan2}(|\kappa_y|,|\kappa_x|),
\end{equation}
where \(\kappa_\mu\) is the signed wave number. Let
\(H_{\bm k}^{\rm lin}\) denote the block obtained from this linearized symbol.
For modes satisfying \(|\bm\kappa|\le K\),
\begin{equation}\label{eq:linearized-error-2d}
    \left\|
        e^{-iH_{\bm k}t}
        -e^{-iH_{\bm k}^{\rm lin}t}
    \right\|
    \le
    \frac{\pi^3t}{3N_h^2}|\bm\kappa|^3
    \le
    \frac{\pi^3t}{3N_h^2}K^3.
\end{equation}
Consequently, for an initial state \(\ket{\Psi_0}\) and the low-mode projector
\(\Pi_K\),
\begin{equation}\label{eq:lowmode-state-error}
\begin{aligned}
    \left\|
        U(t)\ket{\Psi_0}
        -U_{\rm lin}(t)\ket{\Psi_0}
    \right\|
    &\le
    2\left\|(I-\Pi_K)\ket{\Psi_0}\right\| \\
    &\quad
    +\frac{\pi^3t}{3N_h^2}K^3
    +\epsilon_{\rm synth},
\end{aligned}
\end{equation}
where \(\epsilon_{\rm synth}\) accounts for finite-precision arithmetic and
rotation-synthesis errors. This estimate applies to the optional linearized
construction; the retained-mode circuits used in our experiments employ the
exact finite-difference values described above.

\subsection{Dirac dynamics equation via Trotterization}\label{sec:kg-methods}

We now turn to the linearized Dirac dynamics equation \cref{eq:kg-first}. Since the two operators do not commute, a single QFT can not diagonalize the Hamiltonian.  Hence we implement the propagator by a symmetric (Strang) splitting,
\begin{equation}\label{eq:strang}
  e^{-iH_{\mathrm{D}}t}
  \approx\Bigl[\,e^{-iH_{\mathrm{mass}}\tau/2}\,
               e^{-iH_{\mathrm{wave}}\tau}\,
               e^{-iH_{\mathrm{mass}}\tau/2}\Bigr]^{t/\tau},
\end{equation}
with time step $\tau$ and a local error of order $\tau^{3}$ per step. For our specific operator, we show the commutator error as follows.

\begin{Prop}[Explicit commutators for the Dirac Strang splitting]
Let
\begin{equation}
  A=H_{\mathrm{mass}}
  =
  \begin{pmatrix}
    0&iM\\
    -iM&0
  \end{pmatrix},
  \qquad
  B=H_{\mathrm{wave}}
  =
  \begin{pmatrix}
    0&iD\\
    -iD^\dagger&0
  \end{pmatrix},
\end{equation}
where $M=M^\dagger$ is the diagonal mass matrix. Define
\begin{align}
  X_A
  &=
  M^2D+DM^2-2MD^\dagger M,
  \label{eq:dirac-XA}\\
  X_B
  &=
  2DMD-DD^\dagger M-MD^\dagger D.
  \label{eq:dirac-XB}
\end{align}
Then the nested commutators appearing in the Strang error have the
explicit block forms
\begin{align}
  [A,[A,B]]
  &=
  \begin{pmatrix}
    0&iX_A\\
    -iX_A^\dagger&0
  \end{pmatrix},
  \label{eq:dirac-double-commutator-A}\\
  [B,[A,B]]
  &=
  \begin{pmatrix}
    0&iX_B\\
    -iX_B^\dagger&0
  \end{pmatrix}.
  \label{eq:dirac-double-commutator-B}
\end{align}
Consequently, the Strang commutator constant can be written exactly as
\begin{equation}
  \mathcal C_{\mathrm S}
  =
  \norm{X_A}
  +
  2\norm{X_B}.
  \label{eq:dirac-strang-constant-explicit}
\end{equation}
The one-step error therefore satisfies
\begin{equation}
  \left\|
    e^{-i(A+B)\tau}
    -
    e^{-iA\tau/2}e^{-iB\tau}e^{-iA\tau/2}
  \right\|
  \leq
  \frac{\tau^3}{24}
  \left(
    \norm{X_A}
    +
    2\norm{X_B}
  \right).
  \label{eq:dirac-strang-local-error}
\end{equation}
After $r=t/\tau$ steps, unitarity and a telescoping argument give
\begin{equation}
  \left\|
    e^{-i(A+B)t}
    -
    S_2(\tau)^r
  \right\|
  \leq
  \frac{t^3}{24r^2}
  \left(
    \norm{X_A}
    +
    2\norm{X_B}
  \right).
  \label{eq:dirac-strang-global-error}
\end{equation}
\end{Prop}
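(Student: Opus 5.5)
The plan is to compute the commutators directly in the $2\times2$ block (field-register) structure and then feed their norms into the standard second-order commutator bound for Strang splitting. The first step is to multiply out $AB$ and $BA$. Since both $A$ and $B$ are block off-diagonal, their products are block diagonal:
\begin{equation*}
AB=\begin{pmatrix} MD^\dagger & 0\\ 0 & MD\end{pmatrix},\qquad
BA=\begin{pmatrix} DM & 0\\ 0 & D^\dagger M\end{pmatrix}.
\end{equation*}
This gives $[A,B]=\operatorname{diag}(C_1,C_2)$ with $C_1=MD^\dagger-DM$ and $C_2=MD-D^\dagger M$. Note that $C_2=-C_1^\dagger$, as it must be, since $[A,B]$ is anti-Hermitian.

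Next, I commute a block off-diagonal matrix of the form $\begin{pmatrix}0&iY\\-iY^\dagger&0\end{pmatrix}$ with $\operatorname{diag}(C_1,C_2)$. The result is again block off-diagonal, with upper-right block $i(YC_2-C_1Y)$. For $Y=M$ this gives
\begin{equation*}
M(MD-D^\dagger M)-(MD^\dagger-DM)M=M^2D+DM^2-2MD^\dagger M\cdot\tfrac{?}{}\,,
\end{equation*}
and I will track the cross terms carefully to check that they combine to $X_A$ as stated. This is the step I expect to be the main obstacle. The risk is pure sign and ordering bookkeeping: the terms $MD^\dagger M$ versus $MDM$ must match the definition of $X_A$, and if the ordering does not come out as written I would re-derive $X_A$ from the computation. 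The same computation with $Y=D$ gives the $[B,[A,B]]$ block, which should reproduce $X_B$. Because every double commutator of Hermitian $A,B$ is Hermitian, the lower-left block is automatically $-iX^\dagger$, so only the upper-right block needs to be computed.

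For the norms, the matrix $\begin{pmatrix}0&iX\\-iX^\dagger&0\end{pmatrix}$ squares to $\operatorname{diag}(XX^\dagger,X^\dagger X)$, so its operator norm equals $\|X\|$. I then invoke the known Strang bound
\begin{equation*}
\|e^{-i(A+B)\tau}-e^{-iA\tau/2}e^{-iB\tau}e^{-iA\tau/2}\|\le\frac{\tau^3}{12}\|[B,[B,A]]\|+\frac{\tau^3}{24}\|[A,[A,B]]\|.
\end{equation*}
This bound follows from the variation-of-constants/Taylor integral representation of the error. Using $\|[B,[B,A]]\|=\|[B,[A,B]]\|=\|X_B\|$, the right-hand side becomes $\tfrac{\tau^3}{24}(\|X_A\|+2\|X_B\|)$, and the constant $\mathcal C_{\mathrm S}$ is exactly this combination.

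Finally, for the global error I will use the telescoping identity $U^r-S^r=\sum_{j}U^{r-1-j}(U-S)S^{j}$ together with unitarity of both $U$ and $S$. This gives a total error of at most $r$ times the one-step error, that is $r\tau^3\,\mathcal C_{\mathrm S}/24$. Substituting $\tau=t/r$ yields $t^3/(24r^2)\,\mathcal C_{\mathrm S}$, as claimed.
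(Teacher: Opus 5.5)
Your proposal is correct and is essentially the paper's proof: block multiplication for $[A,B]$, one further commutation with the off-diagonal $A$ and $B$, the squaring argument for the norm, the standard Strang bound $\tfrac{\tau^3}{12}\|[B,[B,A]]\|+\tfrac{\tau^3}{24}\|[A,[A,B]]\|$ (with $B$ as the full middle step), and telescoping. The step you flagged does close as written---both cross terms equal $-MD^\dagger M$, giving exactly $X_A$, and with $Y=D$ one gets $DC_2-C_1D=2DMD-DD^\dagger M-MD^\dagger D=X_B$---so the stray question mark can simply be removed.
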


\begin{proof}
Direct block multiplication first gives
\begin{equation}
  [A,B]
  =
  \begin{pmatrix}
    MD^\dagger-DM&0\\
    0&MD-D^\dagger M
  \end{pmatrix}.
\end{equation}
Commuting this block-diagonal matrix once more with $A$ and $B$
produces \cref{eq:dirac-XA,eq:dirac-XB}. Moreover, for any matrix $X$,
\begin{equation}
  \left\|
  \begin{pmatrix}
    0&iX\\
    -iX^\dagger&0
  \end{pmatrix}
  \right\|
  =
  \norm{X},
\end{equation}
which follows by squaring the block matrix. The stated estimates then
follow from the standard unitary Strang bound and telescoping over the
$r$ time steps.
\end{proof}

\begin{remark}[Bulk and interface contributions]
The explicit commutators distinguish the bulk splitting error from
the additional error generated by spatial variations of the mass. For
the periodic forward-difference operator, $D$ is normal, and hence
\begin{equation}
  L_h
  :=
  DD^\dagger
  =
  D^\dagger D.
\end{equation}
Equations~\eqref{eq:dirac-XA} and \eqref{eq:dirac-XB} can then be
rewritten as
\begin{align}
  X_A
  &=
  2M^2(D-D^\dagger)
  +
  [D,M^2]
  -
  2M[D^\dagger,M],
  \label{eq:dirac-XA-bulk-interface}\\
  X_B
  &=
  2M(D^2-L_h)
  +
  2[D,M]D
  -
  [L_h,M].
  \label{eq:dirac-XB-bulk-interface}
\end{align}
The first term in each expression is a bulk contribution. The remaining
commutators contain discrete derivatives of the mass. Indeed,
\begin{equation}
  [D,M]_{jk}
  =
  D_{jk}(m_k-m_j),
\end{equation}
so $[D,M]$ is nonzero only on difference stencils joining grid points
with different mass values. 
\end{remark}

As a specific test, we specialize the general space-dependent mass $m(x)\ge0$ of \cref{eq:kg} to a step function on the periodic domain $x\in(0,1)$, with a  discontinuity at the midpoint (and one at the periodic boundary), \begin{equation}\label{eq:kg-mass-profile} m(x)= \begin{cases} m_{-}, & 0<x<\tfrac12,\\ m_{+}, & \tfrac12\le x<1, \end{cases} \end{equation}

\section{Resource scaling and hardware results}\label{sec:results}

\subsection{One-dimensional acoustic wave equation}

\subsubsection{Resource scaling}
\label{subsec:1d-resources}

\begin{figure}[htbp]
    \centering
    \includegraphics[width=0.95\linewidth]{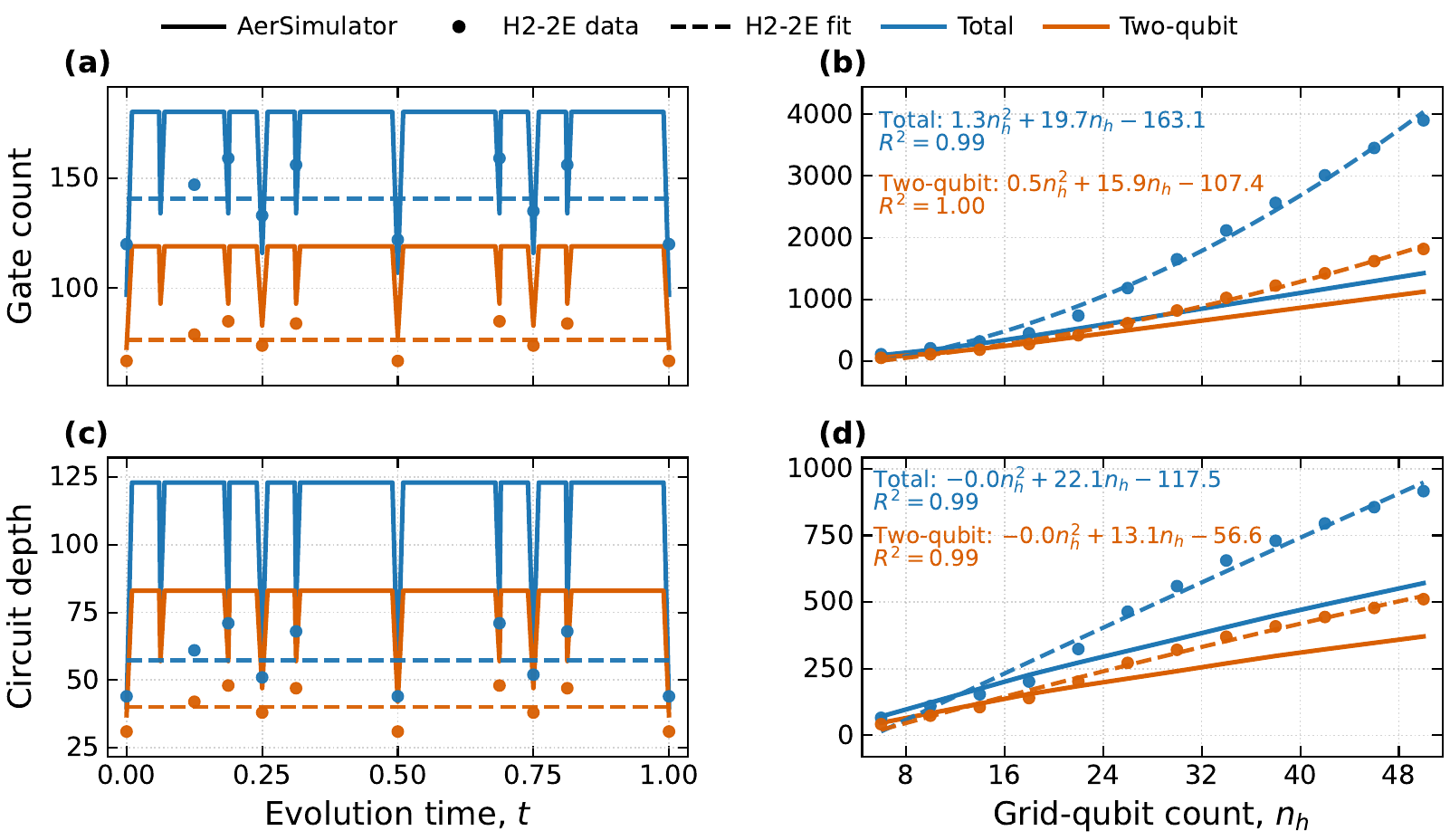}
    \caption{Circuit resources for the one-dimensional cosine pressure initial
state with $k_{0}=1$. Solid curves show the logical circuit before
hardware-native compilation; markers show the circuit compiled for the
Quantinuum H2-2 native gate set using H2-2E, with dashed curves denoting
least-squares fits. Blue denotes total gate count or depth, and orange denotes
the corresponding two-qubit contribution. (a),(c) Gate count and circuit
depth, respectively, as functions of $t\in[0,1]$ at fixed $n_{h}=10$.
(b),(d) Gate count and circuit depth as functions of the number $n_h$ of grid
qubits at fixed $t=0.1$.}
    \label{fig:1d_cosine_resources}
\end{figure}

For the cosine pressure initial state with a single mode $k_{0}=1$, the explicit circuit schematics are given in \cref{fig:hz-tilde-circuit,fig:cosine-circuit}. Firstly we study scaling with evolution time, shown in  \cref{fig:1d_cosine_resources} ([panels (a) and (c)),
after transpilation to the native Quantinuum H2-2 gate set corresponding to grid size $N_h$ = 1024 ($n_h$ = 10), the compiled
circuit uses between 120 and 159 gates, of which 67-85 are two-qubit gates,
and reaches a total depth of between 44 and 71 (two-qubit depth 31-48). Neither quantity grows with $t$; instead both are
periodic with period $\Delta t=0.5$, attaining minima at $t=0,0.5,1$. This periodicity is a direct
consequence of the mode structure derived in \cref{subsec:1d-linearization}:
for the dominant mode $k_{0}=1$ populated by the cosine initial state, the
linearized rotation angle of \cref{eq:Uapp} is $-4\pi t$,
which returns to a multiple of $2\pi$ every $\Delta t=0.5$; at these times
several of the compiled single-qubit rotations become trivial and are removed
during compilation, reducing both gate count and depth. The same
periodicity is visible, in the logical circuit
reported by AerSimulator, whose total gate count collapses from its generic
value of $180$ (two-qubit: $119$) to as low as $97$ (two-qubit: $73$) at these times. These results confirm that QFT allows the wave dynamics can be simulated at almost constant depth, due to the fast-forwarding by QFT.

Next, we examine the scaling with grid size with results shown in \cref{fig:1d_cosine_resources} ([panels (b) and (d)).
At the fixed time $t=0.1$ we scan the register size over
$n_{h}\in\{6,10,14,\dots,50\}$, in order to capture asymptotic behavior with respect to grid-qcount. For the
logical circuit (AerSimulator), a least-squares quadratic fit gives
total gate count $\approx0.08\,n_{h}^{2}+26.3\,n_{h}-86.3$
($R^{2}=0.99$) which is consistent with the $O(n_{h}^{2})$ count of controlled-phase
gates contributed by QFT (\cref{subsec:qft}) that dominate the
circuit and total depth $\approx-0.04\,n_{h}^{2}+13.6\,n_{h}-7.9$
($R^{2}=0.99$), which is almost a linear scaling with $n_h$.

For the compiled circuit on quantinuum's H2-2E, the fits annotated in \cref{fig:1d_cosine_resources} (b) and (d)
are total gate count $\approx1.3\,n_{h}^{2}+19.7\,n_{h}-163.1$
($R^{2}=0.99$) and total depth $\approx-0.0\,n_{h}^{2}+22.1\,n_{h}-117.5$
($R^{2}=0.99$), again a clearly quadratic gate count against an
essentially linear depth, in the same qualitative split as for the logical
circuit. 
Unlike the logical circuit, however, the raw H2-2E data is not
smooth: between $n_{h}=18$ and $n_{h}=22$, total gate count jumps by $285$
(versus increments of $96$-$132$ over each preceding step of the same size)
and total depth jumps by $122$ (versus $44$--$48$ before), after which both
quantities settle into a new, steeper near-linear trend for $n_{h}\ge22$. A plausible explanation is that
the QCCD architecture's effective gate-level parallelism
\cite{moses2023racetrack} is limited by the number of physical zones in which a
two-qubit gate can execute concurrently; once the register requires more
simultaneous operations than the device can service in parallel, additional
gates must serialize, steepening both the gate-count and depth. 

\begin{figure}[htbp]
    \centering
    \includegraphics[width=0.95\linewidth]{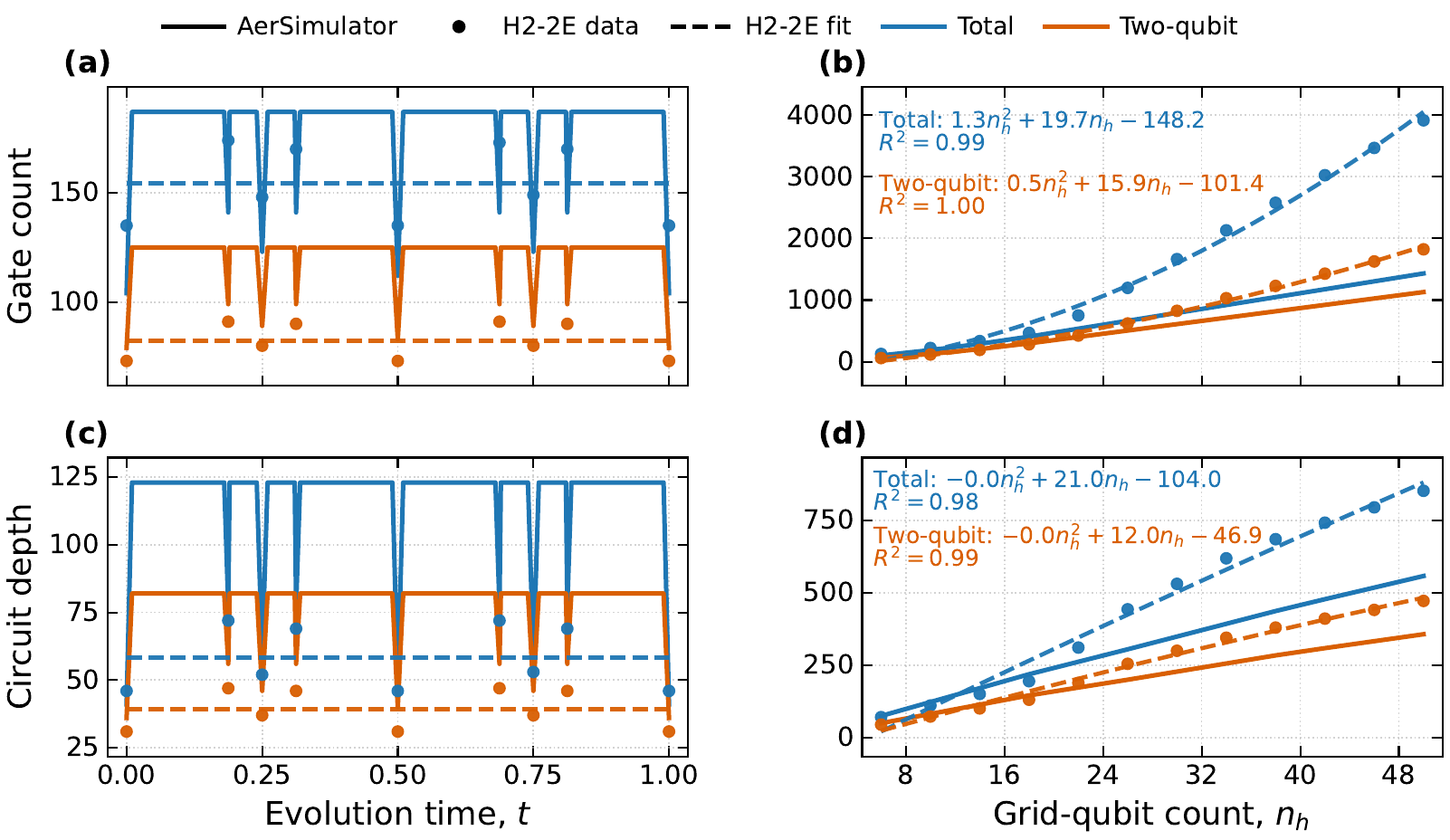}
\caption{Circuit resources for the one-dimensional Gaussian pressure initial
state with $\sigma=0.2$ and five retained Fourier modes. Solid curves show
the logical circuit before hardware-native compilation; markers show the
circuit compiled for the Quantinuum H2-2 native gate set using H2-2E, with
dashed curves denoting least-squares fits. Blue denotes total gate count or
depth, and orange denotes the corresponding two-qubit contribution.
(a),(c) Gate count and circuit depth, respectively, as functions of
$t\in[0,1]$ at fixed $n_h=10$. (b),(d) Gate count and circuit depth as
functions of $n_h$ at fixed $t=0.1$.}
    \label{fig:1d_gaussian_resources}
\end{figure}

We next tested a Gaussian pressure initial state with $\sigma = 0.2$, which has about  5 significant modes in the Fourier basis. The explicit circuit schematic can be seen in \cref{fig:gaussian-circuit}. The transpiled gate count and depth are modestly
higher throughout at $n_{h}=10$: about 135 to 176 total gates (73-91 two-qubit)
and depth 46 to 72 (two-qubit depth 31-47), but
retain the same period-$0.5$ structure, since the extra state-preparation
gates are $t$-independent. The grid-qubit count scan shows the same quadratic gate-count,
near-linear depth split. The Gaussian
circuit uses $14$--$15$ more transpiled gates than the cosine circuit at every
sampled $n_{h}$ from $6$ to $50$,
confirming that the extra state-preparation cost does not itself grow with
the grid size beyond the five fixed rotation gates identified in
\cref{subsec:1d-fivemode}. The \emph{depth} offset, by contrast, is not
constant, which could be attributed to compiler specific optimization.

\subsubsection{Hardware results}
\label{subsec:1d-hardware}

Rather than reconstructing the velocity and flux fields $v(x),p(x)$ at every
grid point $x\in(0,1)$, we evaluate a single physically meaningful observable,
the kinetic energy on a subdomain $X\subset(0,1)$. Because the full measurement
count distribution is available, the kinetic energy can be computed on any
subdomain in post-processing; we report the half-domain $X=(0,\tfrac12)$, i.e.\
the grid points $j=0,\dots,N_{h}/2-1$.

The kinetic-energy observable is the projector onto these velocity components,
\begin{equation}\label{eq:KE-def}
    \mathrm{KE}=\sum_{j=0}^{N_{h}/2-1}\ket{v_{j}}\!\bra{v_{j}}.
\end{equation}
In the register layout, the velocity components sit in the field qubit state
$\ket{0}$, so $\ket{v_{j}}=\ket{j}\otimes\ket{0}$ and
\begin{equation}\label{eq:KE-block}
    \langle\mathrm{KE}\rangle=\sum_{j=0}^{N_{h}/2-1}\ket{j}\!\bra{j}\otimes\ket{0}\!\bra{0}.
\end{equation}
Restricting $j$ to the first half of the grid fixes the most-significant qubit(MSB) to $\ket{0}$; the projector therefore factorizes as
\begin{equation}\label{eq:KE-factored}
    \langle\mathrm{KE}\rangle
    =\ket{0}\!\bra{0}_{\mathrm{MSB}}
     \otimes\Bigl(\sum_{r=0}^{N_{h}/2-1}\ket{r}\!\bra{r}\Bigr)
     \otimes\ket{0}\!\bra{0}_{\mathrm{field}}
    =\ket{0}\!\bra{0}_{\mathrm{MSB}}\otimes I_{n_{h}-1}\otimes\ket{0}\!\bra{0}_{\mathrm{field}}.
\end{equation}
Evaluating $\langle\mathrm{KE}\rangle$ then reduces to summing the measured
marginal probabilities over all outcomes with the MSB and the field qubit
both equal to $0$, which is immediate from the count distribution.
\Cref{fig:1d_cosine_ke} shows the resulting kinetic energy as a function of
evolution time, for four sources: the classical reference
trajectory of \cref{sec:experimental-setup}; the noiseless AerSimulator
circuit, which differs from the classical reference only by finite-shot
sampling statistics; the H2-2E noise-modeled emulator; and the trapped-ion H2-2
processor.

\begin{figure}[htbp]
    \centering
    \includegraphics[width=0.95\linewidth]{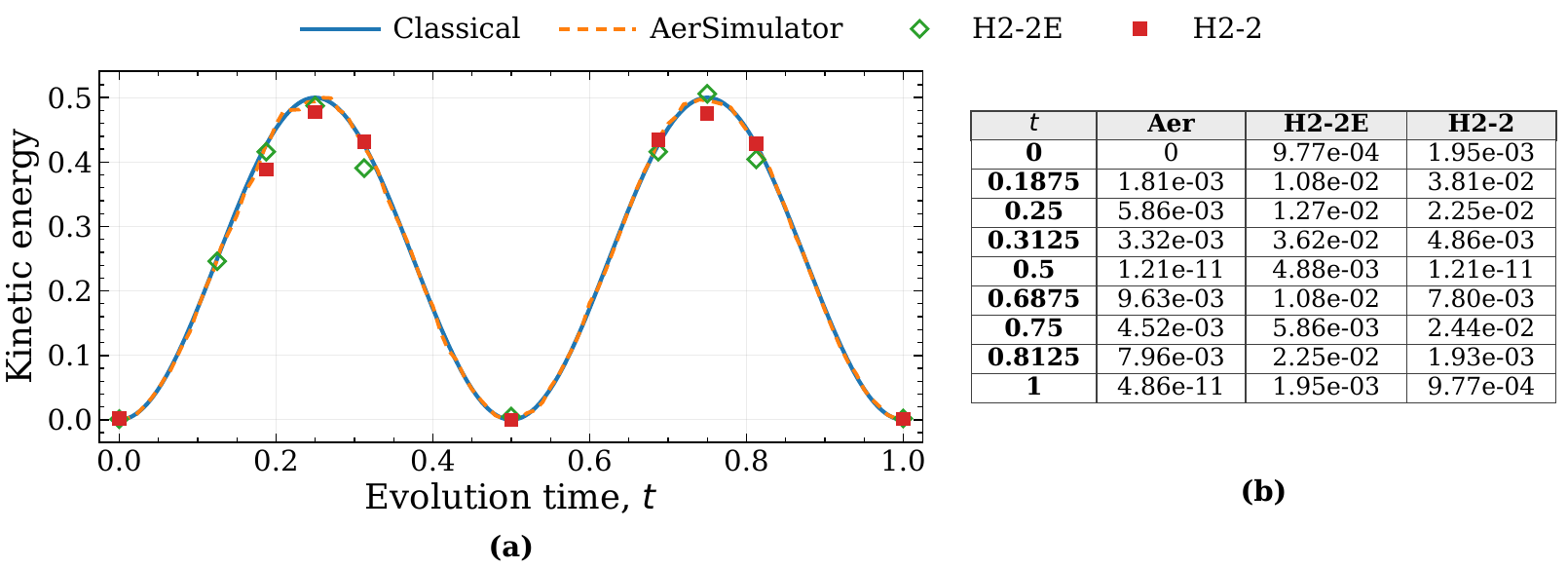}
\caption{Half-domain kinetic energy for the one-dimensional cosine pressure
initial state with $k_{0}=1$ on a grid of $N_h=1024$ points
($n_h=10$ grid qubits and 11 qubits in total). (a) The classical reference
trajectory is compared with noiseless finite-shot AerSimulator results, the
H2-2E noise-modeled emulator, and measurements from the Quantinuum H2-2
processor. (b) Absolute error
$\lvert\langle\mathrm{KE}\rangle_{\mathrm{method}}
-\langle\mathrm{KE}\rangle_{\mathrm{classical}}\rvert$
at each sampled time. A dash denotes a time not sampled on the corresponding
backend. The mean absolute errors are $3.7\times10^{-3}$, $1.2\times10^{-2}$,
and $1.1\times10^{-2}$ for AerSimulator, H2-2E, and H2-2, respectively.}
    \label{fig:1d_cosine_ke}
\end{figure}

The simulation was performed over grid size of$N_h = 1024$. Averaged over the nine common sample points in the time interval $[0,1]$. The mean absolute error is
$3.7\times10^{-3}$ for AerSimulator (attributed to shot-noise), $1.2\times10^{-2}$
for H2-2E, and $1.1\times10^{-2}$ for H2-2 - i.e.\ the calibrated emulator
and the physical device agree with the classical reference to essentially the
same degree, and hardware noise contributes roughly three times
the AerSimulator reference. Against the kinetic energy's maximum of $0.5$ at t = 0.25 and t = .75, this amounts to a mean relative error of order $2$-$3\%$
on H2-2, with a worst case of $3.81\times10^{-2}$ (relative error $\approx9\%$)
at $t=0.1875$.

\begin{figure}[htbp]
    \centering
    \includegraphics[width=0.95\linewidth]{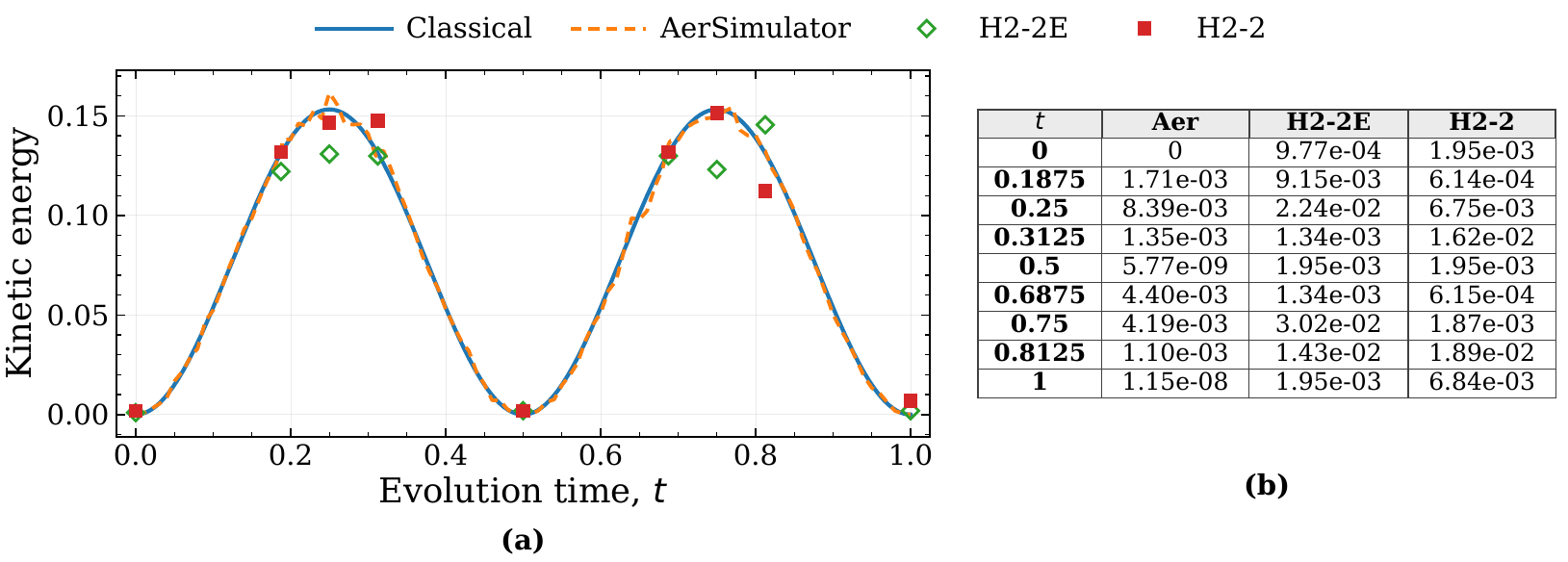}
\caption{Half-domain kinetic energy for the one-dimensional Gaussian pressure
initial state with $\sigma=0.2$ on a grid of $N_h=1024$ points
($n_h=10$ grid qubits and 11 qubits in total). (a) Classical reference,
noiseless finite-shot AerSimulator, H2-2E noise-modeled emulator, and
Quantinuum H2-2 results. (b) Absolute error relative to the classical
reference at each sampled time. The mean absolute errors are
$2.3\times10^{-3}$, $9.3\times10^{-3}$, and $5.9\times10^{-3}$ for
AerSimulator, H2-2E, and H2-2, respectively.}
    \label{fig:1d_gaussian_ke}
\end{figure}

For the Gaussian state, the mean absolute error over its set of
common sample points is $2.3\times10^{-3}$ for AerSimulator, $9.3\times10^{-3}$
for H2-2E, and $5.9\times10^{-3}$ for H2-2, it is to be noted that these absolute
errors are in \emph{smaller} than its cosine counterpart, despite the
Gaussian circuit using a nearly identical number of gates depth( see \cref{subsec:1d-resources} ). Relative
to the compressed signal, however, a mean absolute error of $5.9\times10^{-3}$
against a peak of $0.153$ is a mean relative error of order
$4\%$, roughly double the cosine case, with a worst case of
$1.89\times10^{-2}$ (relative error $\approx14\%$) at $t=0.8125$. We also note, that for
the Gaussian state the noise-modeled emulator's mean error exceeds that of the physical device itself, which is the reverse of the naive expectation that hardware should be noisier than its own calibrated emulator, and a reminder that the emulator's noise model need not bound the physical device's behavior specifically.

Taken together, these results indicate that circuit depth and gate count
alone are exhaustive predictors of observed noise resilience, the dynamic range of the observable
being measured should also be taken into account. Resource estimates reported purely in terms of gate count and
depth, as in \cref{tab:scaling}, should therefore be read alongside the dynamic range of the specific observable and initial condition under study. Further, in comparison to earlier hardware demonstrations of the one-dimensional wave equation, the present results access substantially larger system sizes: Sato \textit{et al.}~\cite{sato2024hamiltonian} implemented a hardware run using $n_h=2$ grid qubits ($N_h=4$ grid points) on \texttt{ibm\_kawasaki}, a scale the authors attribute to noise-resilience limitations of current hardware, while Wright \textit{et al.}~\cite{wright2024noisy} demonstrated a smooth Ricker-wavelet initial condition on $n_h = 6$ grid qubits using the Quantinuum H1-1 processor. The experiments reported here extend this to $n_h = 10$ grid qubits on the Quantinuum H2-2 processor - a $16 \times 16$ increase in grid resolution over Wright \textit{et al.}~\cite{wright2024noisy}.

\subsection{Two-dimensional acoustic wave equation}

\subsubsection{Resource scaling}
\label{subsec:2d_resource}

\begin{figure}[htbp]
    \centering
    \includegraphics[width=0.95\linewidth]{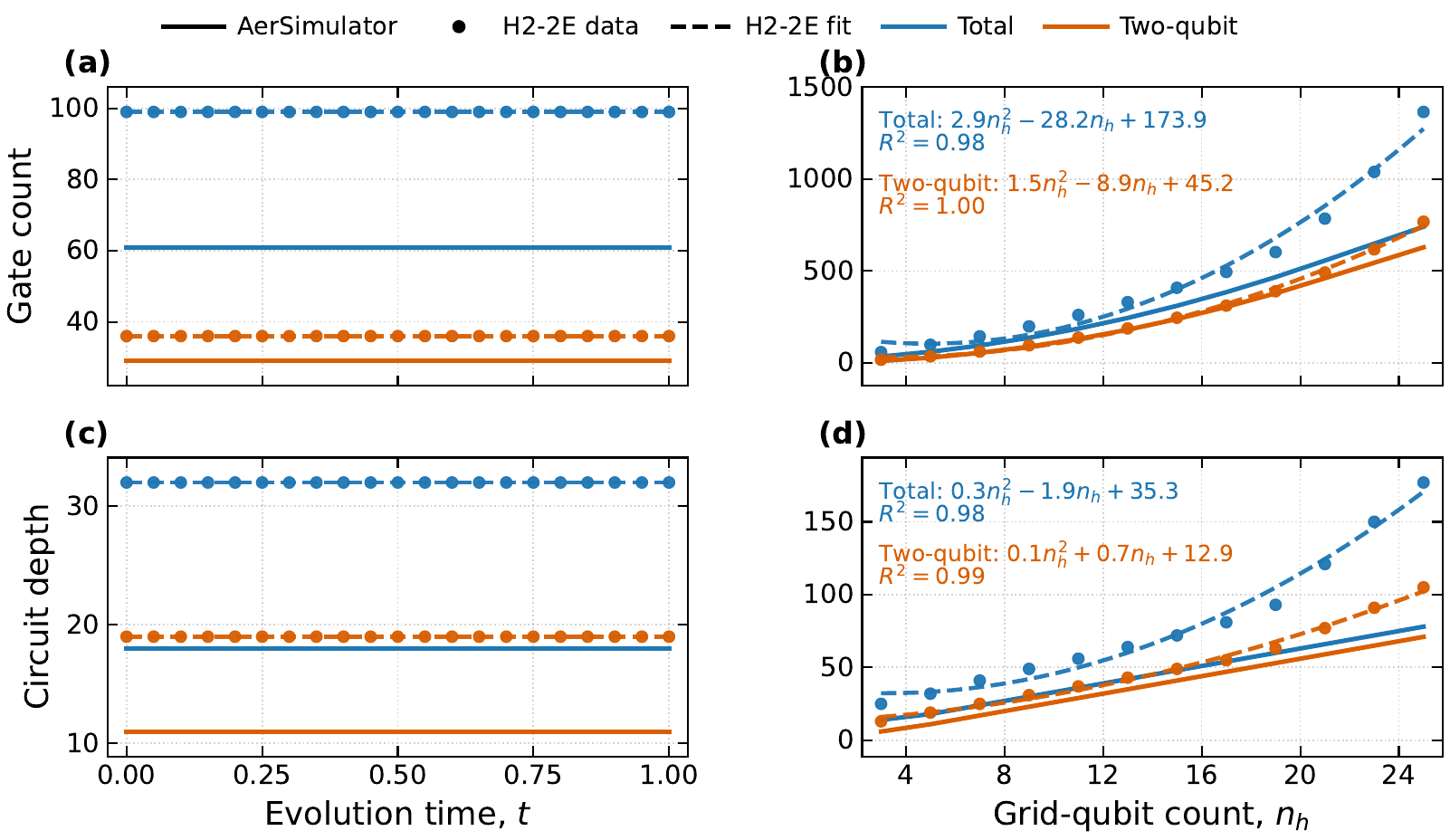}
\caption{Circuit resources for the two-dimensional separable cosine pressure
initial state with $k_x=k_y=1$. Solid curves show the logical circuit before
hardware-native compilation; markers show the circuit compiled for the
Quantinuum H2-2 native gate set using H2-2E, with dashed curves denoting
least-squares fits. Blue denotes total gate count or depth, and orange denotes
the corresponding two-qubit contribution. (a),(c) Gate count and circuit
depth, respectively, as functions of $t\in[0,1]$ at fixed $n_h=5$ grid
qubits per spatial direction. (b),(d) Gate count and circuit depth as
functions of the per-direction register size $n_h$ at fixed $t=0.1$.}
    \label{fig:2d_cosine_resources}
\end{figure}

Similar to the one-dimensional case, we first consider the cosine initial condition for the flux with  $k_{x}=k_{y}=1$. The detailed circuit construction is described in \cref{subsec:2d-initial-state}. Further, note that there are only 2 Fourier  modes retained in each direction, the Hamiltonian circuit is simplified accordingly. The resource estimation here follows
the same protocol as \cref{subsec:1d-resources}. 

We begin with the report of the resource scaling with evolution time (\cref{fig:2d_cosine_resources} panels (a) and (c)). 
At $n_{h}=5 $  per dimension $(N=2048)$, the compiled circuit uses exactly $99$ gates
($36$ two-qubit) and reaches a depth of exactly $32$ ($19$ two-qubit gates); the
logical circuit is likewise constant, at $61$ gates ($29$ two-qubit)
and depth $18$ ($11$ two-qubit). This can be attributed to fact that the circuit is specific to 4 Fourier  modes, corresponding to the cosine initial condition.

Next, we study resource scaling with grid size in \cref{fig:2d_cosine_resources} (panels b and d) for the logical circuit.  A quadratic fit gives total gate count
$\approx0.92\,n_{h}^{2}+6.7\,n_{h}+4.0$ ($R^{2}=0.99$) and total depth
$\approx0.007\,n_{h}^{2}+2.8\,n_{h}+4.7$ ($R^{2}=0.99$, linear fit
$R^{2}=0.99$). Thus, as in the one-dimensional case, gate count is clearly
quadratic while depth is nearly linear with respect to $n_h$. For the compiled circuit (H2-2E), however, the results are different
from \cref{subsec:1d-resources}. In particular, the fits annotated in \cref{fig:2d_cosine_resources} (b) and (d) are
total gate count $\approx2.9\,n_{h}^{2}-28.2\,n_{h}+173.9$ ($R^{2}=0.98$) and
total depth $\approx0.29\,n_{h}^{2}-1.9\,n_{h}+35.3$ ($R^{2}=0.98$). Here
the quadratic scaling for depth is clearly evident as opposed to linear scaling in one-dimensional case.
 A plausible reason is that the two-dimensional circuit couples
two separate $n_{h}$-qubit spatial registers through the bright--dark
reduction (\cref{subsec:bright-dark-reduction}), so a larger fraction of its
two-qubit gates span both grid registers and cannot be scheduled independently in
the way that a single grid register's internal QFT gates can. On the other hand, similar to the
one-dimensional case, we also observe a further steepening in the last few
sampled points ($n_{h}\gtrsim19$), consistent with the same
compiler threshold effect discussed in
\cref{subsec:1d-resources}.

\begin{figure}[htbp]
    \centering
    \includegraphics[width=0.95\linewidth]{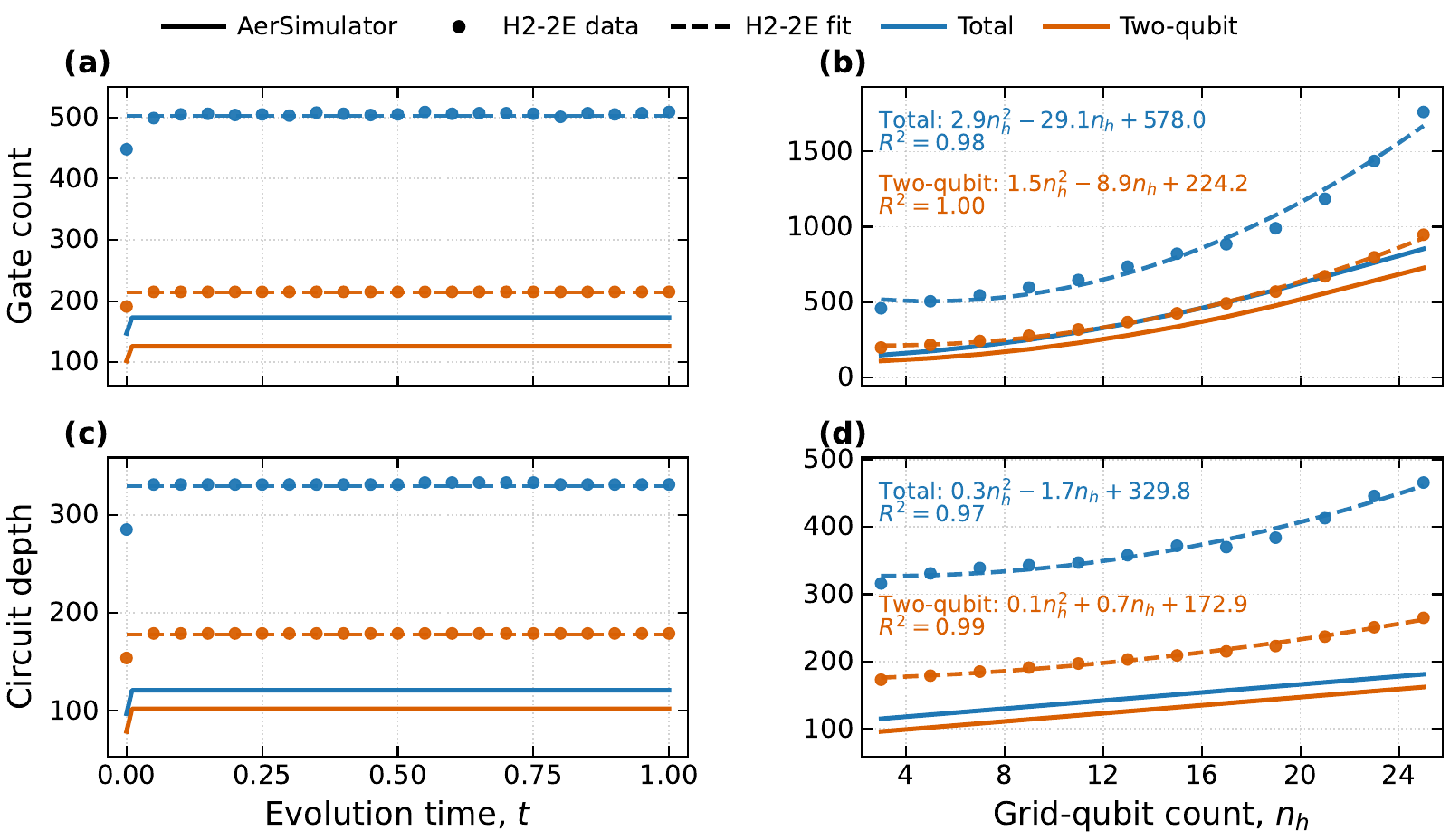}
\caption{Circuit resources for the two-dimensional separable Gaussian
pressure initial state with $\sigma_x=\sigma_y=0.2$ and five retained Fourier
modes in each spatial direction. Solid curves show the logical circuit before
hardware-native compilation; markers show the circuit compiled for the
Quantinuum H2-2 native gate set using H2-2E, with dashed curves denoting
least-squares fits. Blue denotes total gate count or depth, and orange denotes
the corresponding two-qubit contribution. (a),(c) Gate count and circuit
depth, respectively, as functions of $t\in[0,1]$ at fixed $n_h=5$ per
direction. (b),(d) Gate count and circuit depth as functions of $n_h$ at
fixed $t=0.1$.}
    \label{fig:2d_Gaussian_resources}
\end{figure}

For Gaussian initial profile, we take $\sigma = 0.2$, more details on circuit construction could be found in \cref{subsec:2d-initial-state}. Note that here the 5 modes are retained in each direction leading to a total of 25 pairs, so we use Hamiltonian circuit construction described in \cref{subsec:restricted-implementation}. 
As, it can be observed in \cref{fig:2d_Gaussian_resources} (panels a and c), the compiled gate count and depth at
$n_{h}=5$ remains constant for almost all the sampled times ( $509$ total gates, $215$ two-qubit gates,
depth $333$ and two-qubit depth $179$) except at t = 0, where both gate count and depth are significantly lower, which can be attributed to the fact that all the time-dependent rotation angles become 0, so compiler specific optimization leads to simplified circuit similar to one-dimensional case.

The resources are substantially higher than the cosine case, which is attributed to relatively large number of retained Fourier  modes retained. The
grid-size scan shows the same qualitative quadratic gate-count,
quadratic depth split as the cosine state - fits
$\approx2.9\,n_{h}^{2}-29.1\,n_{h}+578.0$ ($R^{2}=0.98$) for total gate count
and $\approx0.28\,n_{h}^{2}-1.7\,n_{h}+329.8$ ($R^{2}=0.97$) for total depth -
offset upward by an amount consistent with the additional separable
state-preparation circuit of \cref{subsec:2d-initial-state} and more modes in Hamiltonian circuit.

\medskip

Finally, we consider a nonseparable initial
flux profile, correlated between the two spatial directions,
\begin{equation}\label{eq:entangled-profile}
    p(x,y)\propto\exp\!\Bigl[\kappa\bigl(\cos 2\pi(x-\tfrac12)+\cos 2\pi(y-\tfrac12)-2\bigr)
    +\gamma\bigl(\cos 2\pi(x-y)-1\bigr)\Bigr],
\end{equation}
with $\kappa=1.0$ and $\gamma=0.4$. The parameter $\gamma$ couples $x$ and $y$ so that the flux can not be factorized. This profile is treated by the low-rank state preparation ~\cite{Araujo_2024} of
\cref{subsec:2d-initial-state} (the rank-two preparation circuit
$U_{\mathrm{init}}^{(2)}$ of \cref{eq:Uinit2}) described in \cref{subsec:2d-initial-state}. 

\begin{figure}[htbp]
    \centering
    \includegraphics[width=0.95\linewidth]{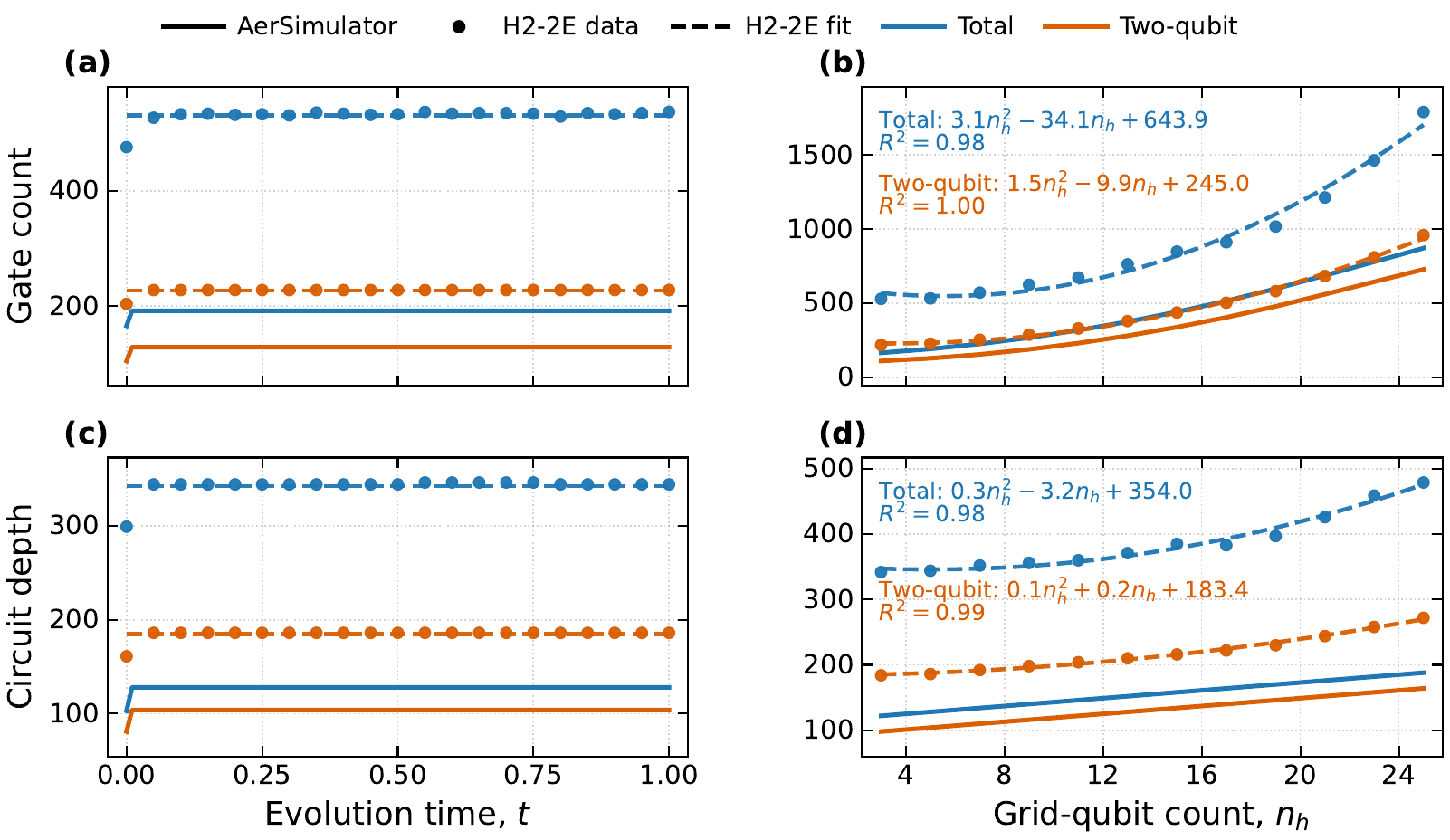}
\caption{Circuit resources for the two-dimensional nonseparable pressure
initial state in \cref{eq:entangled-profile}, with $\kappa=1.0$,
$\gamma=0.4$, and a rank-two state-preparation circuit. Solid curves show the
logical circuit before hardware-native compilation; markers show the circuit
compiled for the Quantinuum H2-2 native gate set using H2-2E, with dashed
curves denoting least-squares fits. Blue denotes total gate count or depth,
and orange denotes the corresponding two-qubit contribution. (a),(c) Gate
count and circuit depth, respectively, as functions of $t\in[0,1]$ at fixed
$n_h=5$ grid qubits per spatial direction. (b),(d) Gate count and circuit
depth as functions of $n_h$ at fixed $t=0.1$.}
    \label{fig:2d_entangled_resources}
\end{figure}

As it can be observed in \cref{fig:2d_entangled_resources} (panels a and c), with $n_{h}=5$, the resources of compiled circuit remains almost constant around $537$ total gates
($228$ two-qubit) and depth $346$ (two-qubit depth
$186$) across the sampled times except at t = 0, notably higher than both the cosine
and the separable Gaussian state, consistent with the rank-two construction. 

The grid-size scan again shows a
quadratic scaling in gate count and depth, with fits
$\approx3.1\,n_{h}^{2}-34.1\,n_{h}+643.9$ ($R^{2}=0.98$) for total gate count
and $\approx0.33\,n_{h}^{2}-3.2\,n_{h}+354.0$ ($R^{2}=0.98$) for total depth. Again, the offset across the intial states considered is almost constant as the circuit construction has been done on the retained modes only.

\subsubsection{Hardware results}
\label{subsec:2d_hardware}

As in \cref{subsec:1d-hardware}, we evaluate a single physically meaningful
observable rather than reconstructing the full field: the kinetic energy
carried by the $v_{x}$ component on a spatial subdomain $x \in [0, 1/2)$. In the field
encoding of \cref{eq:2d-field-labels}, $v_{x}$ corresponds to both field qubit state being $\ket{00}$, so
\begin{equation}\label{eq:2d-KE-def}
    \langle \mathrm{KE}_{v_{x}} \rangle
    =I_{y}\otimes \ket{0}\bra{0}_{MSB_x} \otimes I_{n_h - 1}\otimes \ket{00}\!\bra{00}_{f},
\end{equation}
for a subdomain $X$ of the $x$-register, evaluated identically to the
one-dimensional case by summing measured marginal probabilities over
outcomes with both field qubits and most-significant qubit of $x$-register equal to $0$. The simulations have been done for grid-size of $N_h = 32(n_h = 5)$ corresponding to total system of $n = 12$ qubits.

\begin{figure}[htbp]
    \centering
    \includegraphics[width=0.95\linewidth]{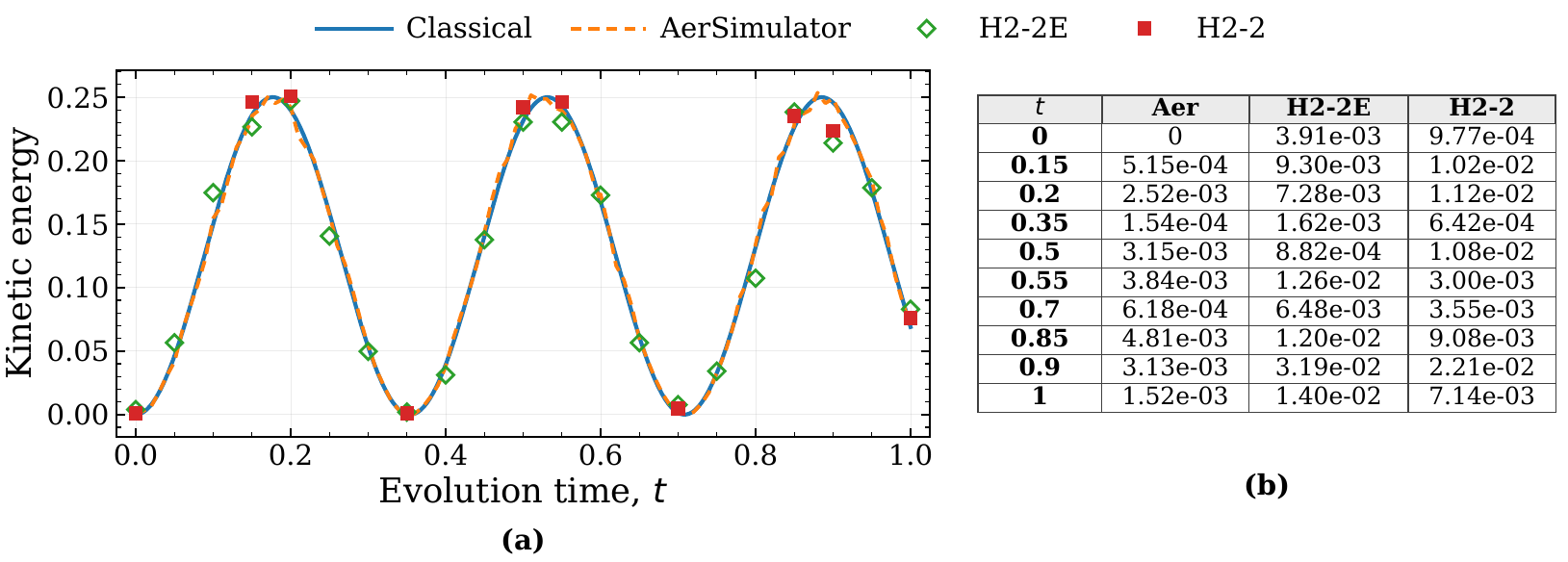}
\caption{Kinetic energy of the $v_x$ component on the half-domain
$x\in[0,\tfrac12)$ for the two-dimensional cosine pressure initial state
with $k_x=k_y=1$. The calculation uses a $32\times32$ spatial grid
($n_h=5$ qubits per direction and 12 qubits in total). (a) Classical
reference, noiseless finite-shot AerSimulator, H2-2E noise-modeled emulator,
and Quantinuum H2-2 results. (b) Absolute error relative to the classical
reference at each sampled time. The mean absolute errors are
$2.8\times10^{-3}$, $9.8\times10^{-3}$, and $7.9\times10^{-3}$ for
AerSimulator, H2-2E, and H2-2, respectively.}
    \label{fig:2d_cosine_ke}
\end{figure}

For the cosine intial pressure,  the mean absolute error averaged over its sampled points is
$2.8\times10^{-3}$ for AerSimulator, $9.8\times10^{-3}$ for H2-2E, and
$7.9\times10^{-3}$ for H2-2 - the same ordering, and similar $\sim3\times$ increase from the shot-noise reference to hardware noise,
as observed for the one-dimensional cosine state
(\cref{subsec:1d-hardware}). Against the kinetic energy's peak of
$\approx0.25$, this is a mean relative error of order $3\%$ on H2-2, with a
worst case of $2.21\times10^{-2}$ (relative error $\approx9\%$) at $t=0.9$.

\begin{figure}[htbp]
    \centering
    \includegraphics[width=0.95\linewidth]{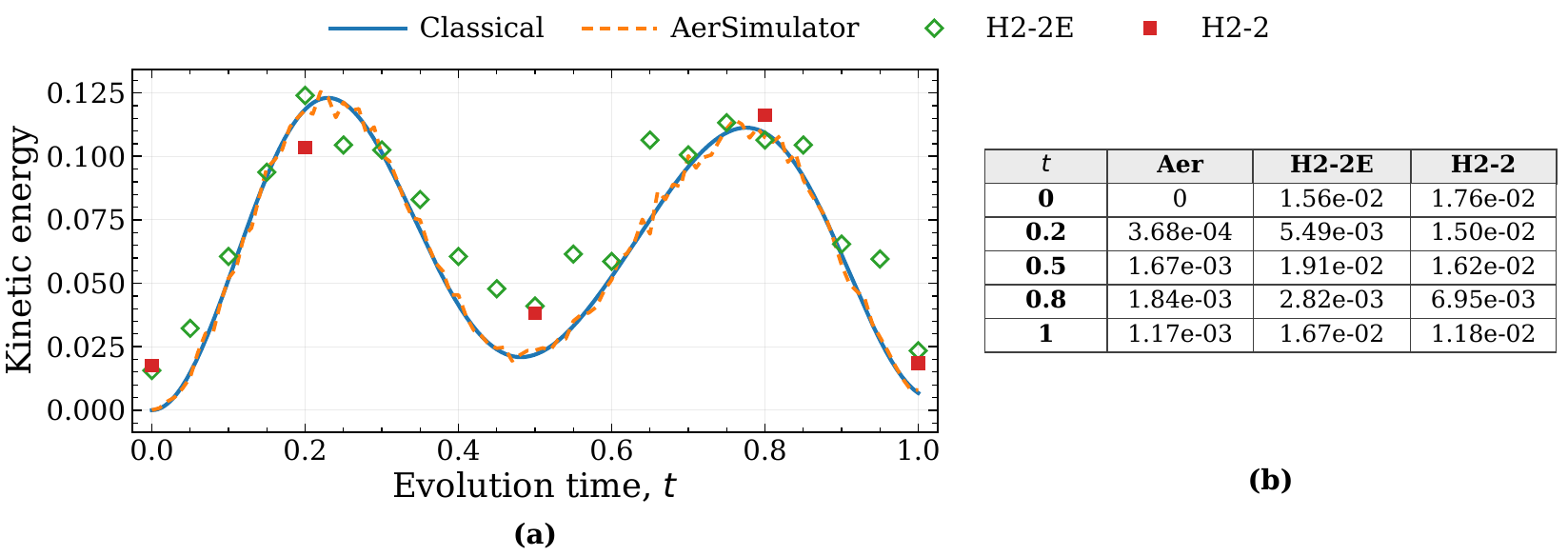}
\caption{Kinetic energy of the $v_x$ component on the half-domain
$x\in[0,\tfrac12)$ for the two-dimensional separable Gaussian pressure
initial state with $\sigma_x=\sigma_y=0.2$. The calculation uses a
$32\times32$ spatial grid ($n_h=5$ qubits per direction and 12 qubits in
total). (a) Classical reference, noiseless finite-shot AerSimulator, H2-2E
noise-modeled emulator, and Quantinuum H2-2 results. (b) Absolute error
relative to the classical reference at each sampled time. The mean absolute
errors are $2.3\times10^{-3}$, $2.5\times10^{-2}$, and $1.4\times10^{-2}$
for AerSimulator, H2-2E, and H2-2, respectively.}
    \label{fig:2d_Gaussian_ke}
\end{figure}

For the Gaussian initial flux (\cref{fig:2d_Gaussian_ke}), the mean absolute error is $2.3\times10^{-3}$ for
AerSimulator, $2.5\times10^{-2}$ for H2-2E, and $1.4\times10^{-2}$ for H2-2, respectively. As in the one-dimensional case, the
noise-modeled emulator's mean error ($2.5\times10^{-2}$) exceeds that of the physical device itself ($1.4\times10^{-2}$), again cautioning against reading the emulator's noise model as a upper bound on hardware performance.

\begin{figure}[htbp]
    \centering
    \includegraphics[width=0.95\linewidth]{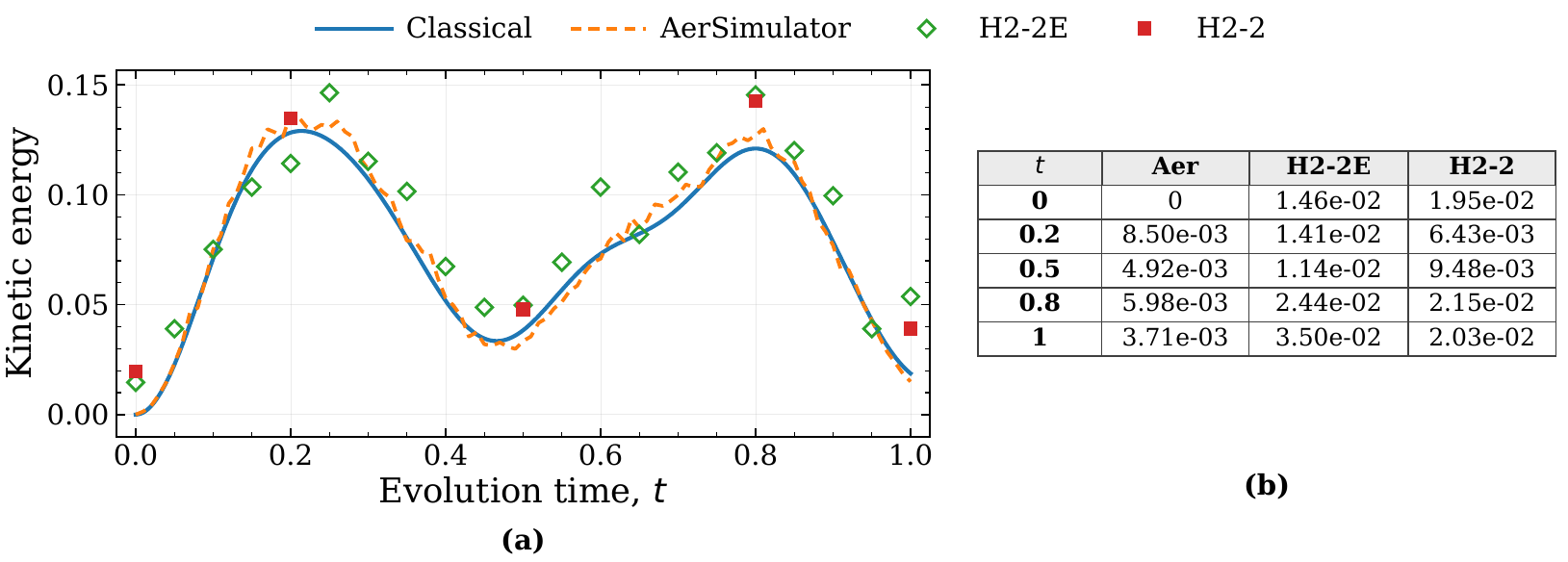}
\caption{Kinetic energy of the $v_x$ component on the half-domain
$x\in[0,\tfrac12)$ for the nonseparable pressure initial state in
\cref{eq:entangled-profile}, with $\kappa=1.0$ and $\gamma=0.4$. The
calculation uses a $32\times32$ spatial grid ($n_h=5$ qubits per direction
and 12 qubits in total). (a) Classical reference, noiseless finite-shot
AerSimulator, H2-2E noise-modeled emulator, and Quantinuum H2-2 results.
(b) Absolute error relative to the classical reference at each sampled time.
The mean absolute errors are $3.9\times10^{-3}$, $1.5\times10^{-2}$, and
$1.5\times10^{-2}$ for AerSimulator, H2-2E, and H2-2, respectively.}
    \label{fig:2d_entangled_ke}
\end{figure}

Finally, \cref{fig:2d_entangled_ke} reports the kinetic energy for the
nonseparable initial state of \cref{eq:entangled-profile}.
The mean absolute error is $3.9\times10^{-3}$ for AerSimulator,
$1.5\times10^{-2}$ for H2-2E, and $1.5\times10^{-2}$ for H2-2.
Against the kinetic energy's peak of $\approx0.13$, this is a mean relative
error of order $12\%$ on H2-2, comparable to the Gaussian state's and
noticeably worse than the cosine state's $3\%$, with a worst case of
$2.15\times10^{-2}$ (relative error $\approx17\%$) at $t=0.8$. Taken
together with the resource-estimation results above, the entangled state is
the most expensive of the three two-dimensional cases considered by every
measure.

\subsection{Dirac dynamics}

\subsubsection{Resource scaling}
\label{subsec:kg_resource}

For the cosine pressure initial state, we take $k_{0}=1$ and the step-function mass profile of
\cref{eq:kg-mass-profile} with $m_{-}=0$, $m_{+}=2$, and, following the same
protocol as \cref{subsec:1d-resources}. Because the Trotter step count $r=t/\tau$ in \cref{eq:strang} must grow
with $t$ to control the local $\tau^{3}$ error, our implementation uses a
coarse step schedule or Trotter number given by 
\begin{equation}
\label{eq:kg_step_schedule}
r(t)=
\begin{cases}
1, & 0\le t\le 0.1,\\
2, & 0.1<t\le 0.3,\\
3, & 0.3<t\le 0.5,\\
6, & 0.5<t\le 0.7,\\
7, & 0.7<t\le 1.
\end{cases}
\end{equation}
We came up with this coarse-step schedule by classically simulating Dirac dynamics for various $r$, in order to get better fidelity. The time-domain resource scan therefore probes both
the underlying circuit's dependence on $t$ and the discrete cost of this step
schedule simultaneously, while the grid-size scan is taken at $t=0.1$
($r=1$) so that it isolates the grid-size dependence alone. Results are
summarized in \cref{fig:kg_cosine_resources}.

\begin{figure}[htbp]
    \centering
    \includegraphics[width=0.95\linewidth]{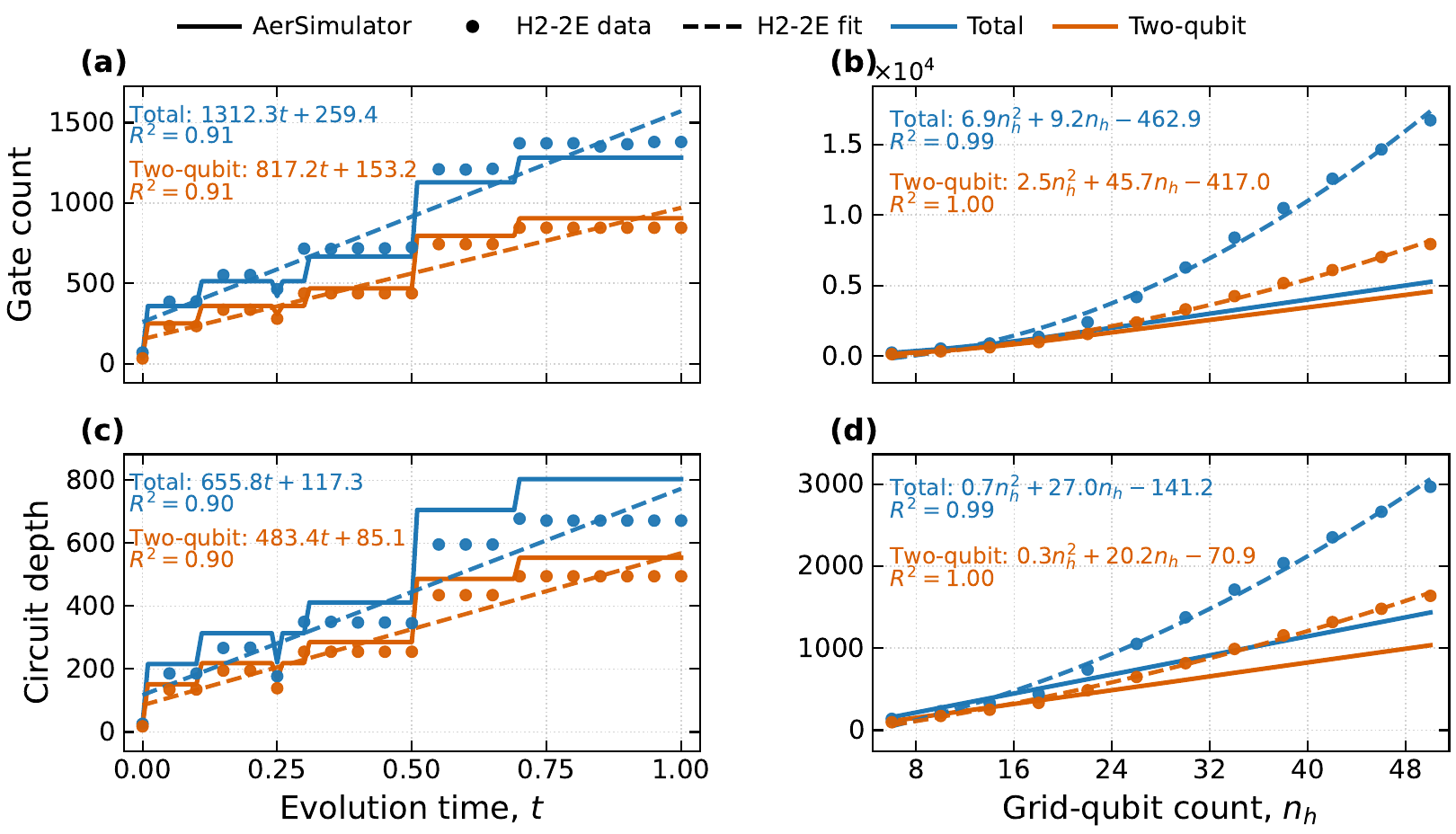}
\caption{Circuit resources for the one-dimensional linear Dirac dynamics 
 with cosine pressure initial data ($k_0=1$) and the stepwise mass
profile $m_-=0$, $m_+=2$. Solid curves show the logical circuit before
hardware-native compilation; markers show the circuit compiled for the
Quantinuum H2-2 native gate set using H2-2E, with dashed curves denoting
least-squares fits. Blue denotes total gate count or depth, and orange denotes
the corresponding two-qubit contribution. (a),(c) Gate count and circuit
depth, respectively, as functions of $t\in[0,1]$ at fixed $n_h=8$, using the
Strang-step schedule in \cref{eq:kg_step_schedule}. (b),(d) Gate count and
circuit depth as functions of $n_h$ at fixed $t=0.1$, for which one Strang
step is used.}
    \label{fig:kg_cosine_resources}
\end{figure}

As it can be observed in \cref{fig:kg_cosine_resources} (panels a and c), the compiled circuit's gate count and depth grow, on average, linearly with
$t$, and they fit $\approx1312\,t+259$ ($R^{2}=0.91$) for total gate count and
$\approx656\,t+117$ ($R^{2}=0.90$) for total depth,  However the underlying data follows a staircase structure due the coarse-step schedule \cref{eq:kg_step_schedule} for trotterization. This is in contrast with the 1D/2D case, whose resource cost mostly does not depend on $t$, the Dirac-dynamics circuit's
size is intrinsically coupled to the fidelity target through $r$

\smallskip 
For the grid-size scan, we summarize the results  in \cref{fig:kg_cosine_resources} (panels b and d), at fixed $t = 0.1 (r = 1)$ the compiled circuit (H2-2E), we found the fits for the total gate count $\approx6.9\,n_{h}^{2}+9.2\,n_{h}-462.9$
($R^{2}=0.99$) and total depth $\approx0.74\,n_{h}^{2}+27.0\,n_{h}-141.2$
($R^{2}=0.99$) - a strong quadratic scaling in comparison to 1D/2D case, even at the single-trotter-step
$r=1$ used here. This can be attributed to the fact that the mass sub-step's
position-controlled rotations must be interleaved with the wave sub-step's
QFT/rotation/inverse-QFT sequence rather than absorbed into it, and inverse-QFT corresponding to cosine state preparation, adding a
further layer of structure that the compiler cannot optimize in the
same manner as a single inverse-QFT.

\subsubsection{Hardware results}
\label{subsec:kg_hardware}

We evaluate the same kinetic-energy observable as in
\cref{subsec:1d-hardware}, \cref{eq:KE-def}, since the Dirac-dynamics circuit
retains the one-dimensional construction's velocity--flux field encoding
with the mass term added only through the Trotterized $H_{\mathrm{mass}}$
sub-step. \Cref{fig:kg_cosine_ke} shows the resulting kinetic energy as a
function of evolution time.

\begin{figure}[htbp]
    \centering
    \includegraphics[width=0.95\linewidth]{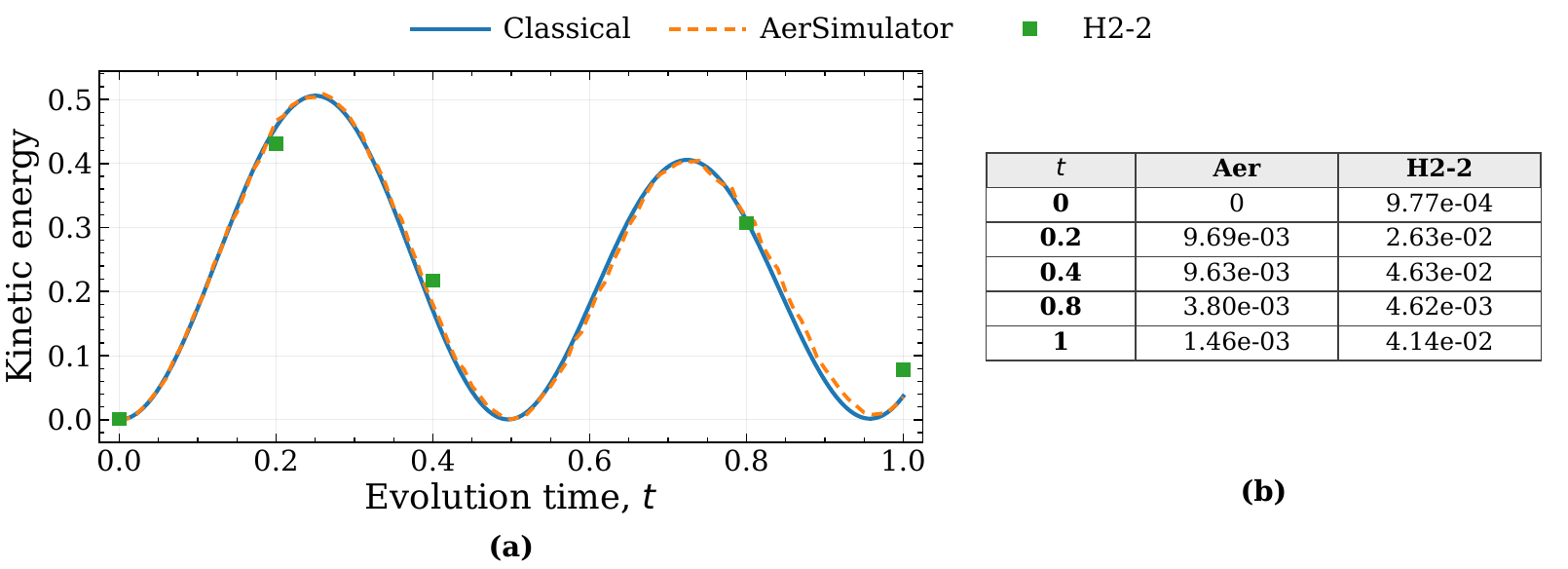}
\caption{Half-domain kinetic energy for the one-dimensional linear
Dirac dynamics equation with cosine pressure initial state ($k_0=1$), the
stepwise mass profile $m_-=0$, $m_+=2$, and $N_h=256$ grid points
($n_h=8$ grid qubits and 9 qubits in total). (a) Classical reference,
noiseless finite-shot AerSimulator, and Quantinuum H2-2 results.
(b) Absolute error relative to the classical reference at each sampled time.
The mean absolute errors are $7.3\times10^{-3}$ for AerSimulator and
$2.4\times10^{-2}$ for H2-2.}
    \label{fig:kg_cosine_ke}
\end{figure}

The mean absolute error is $7.3\times10^{-3}$ for AerSimulator and
$2.4\times10^{-2}$ for H2-2, against a kinetic-energy peak of
$\approx0.51$ - a mean relative error of order $5\%$ on H2-2, with a worst
case of $4.63\times10^{-2}$ (relative error $\approx9\%$) at $t=0.4$. This is relatively larger than 1D case(\cref{subsec:1d-hardware})
is consistent with the Dirac-dynamics circuit's substantially larger
compiled gate count and depth even at lower grid size
(\cref{subsec:kg_resource}).



\section{Discussion and outlook}
\label{sec:discussion}

We presented QFT-based circuits for one- and two-dimensional acoustic-wave propagation and for a one-dimensional split wave--mass model, using a common structure-preserving discretization and observable-estimation pipeline. In one dimension, Fourier diagonalization reduces each mode to a two-level rotation. In two dimensions, the bright--dark transformation identifies the longitudinal velocity component coupled to flux and reduces every retained Fourier block to an exact two-level pressure--bright rotation, eliminating the directional product-formula error that would otherwise require hundreds of repeated steps at modest accuracy. The wave and position-dependent mass terms are instead simple in different representations and are combined through Strang splitting. At fixed spectral bandwidth, the resulting acoustic circuits have resource costs polynomial in \(n_h=\log_2N_h\), rather than in the number of grid points, and changing the simulated acoustic time modifies rotation angles without increasing circuit depth.

The physical-device results are particularly encouraging. Across all tested initial conditions and spatial dimensions, the Quantinuum H2-2 device resolved the time dependence of the half-domain kinetic energy without full-state reconstruction. The mean absolute errors range from \(5.9\times10^{-3}\) to \(2.4\times10^{-2}\), despite the QFT layers, mode-controlled rotations, nonseparable state preparation, and, for the split wave--mass experiment, repeated changes between Fourier and position space. These results demonstrate the strong performance of Quantinuum hardware for the structured circuits considered here: after compilation to the H2-2 native gate set, the physical observables remain clearly distinguishable over the tested circuit depths. The H2-2E emulator provides a useful calibrated comparison, although its error does not consistently upper-bound that of the physical device.

Scaling real-hardware execution to larger $n_h$ will likely require noise-management techniques beyond compilation alone. Promising directions include quantum error-mitigation methods such as zero-noise extrapolation~\cite{temme2017error}, as well as error-suppression techniques such as dynamical decoupling~\cite{ezzell2023dynamical,watkins2026improvingddtrappedion}. Their effectiveness, however, depends strongly on the hardware noise characteristics, circuit structure, and observable of interest~\cite{cai2023quantum}. In particular, generic mitigation or suppression strategies can increase circuit duration, introduce additional control errors, or amplify statistical uncertainty, and may therefore fail to improve( or even degrade) the accuracy of the measured observable. It will consequently be important to tailor these techniques to the QFT, controlled-rotation, and measurement structures appearing in the present algorithms and to assess their benefits against their additional circuit and sampling costs. While fully fault-tolerant execution remains beyond near-term resource budgets at the scales considered here, lightweight error-detection schemes combined with syndrome-based postselection or symmetry verification~\cite{self2024protecting, perlin2026faulttolerantexecutionerrorcorrectedquantum}, together with classical correction of measurement errors~\cite{maciejewski2020mitigation}, provide another worthwhile direction toward progressively more reliable wave-equation simulations.

The scope of this conclusion is nevertheless deliberately limited. Our inputs have compact Fourier representations, and the measured quantity is a marginal probability that can be obtained directly from samples. Loading an arbitrary field, resolving a broadband spectrum, or reconstructing all \(N_h^d\) amplitudes could dominate the end-to-end cost. Consequently, the present experiments do not try to establish a quantum advantage over classical wave solvers. A demonstration we would have liked to include, but have not yet found a sufficiently shallow implementation for on current hardware, is a fully coherent two-dimensional circuit acting on a broad set of Fourier modes. Such a circuit would reversibly evaluate the finite-difference dispersion relation, the bright-mode frequency, and the Givens angle, rather than precomputing rotations for a fixed retained set. Extending the experiments to general pressure--velocity inputs and scalable state-preparation procedures presents a related challenge. These missing components mark the main gap between the present low-bandwidth benchmark and a general-purpose wave simulator.

A natural next step is wave propagation with nonperiodic boundary conditions, spatially varying coefficients, and irregular geometries. In these settings, the QFT no longer exactly diagonalizes the spatial operator, and structure-preserving finite-difference or finite-element discretizations must be considered. The resulting mass and stiffness matrices would require boundary-aware transforms, sparse block encodings, operator splitting, or other Hamiltonian-simulation constructions. An important objective is to retain the central advantages demonstrated here---Hermitian semi-discrete dynamics, efficient evolution of structured states, and direct estimation of coarse physical observables---while accommodating Dirichlet, Neumann, absorbing, and mixed boundary conditions.

Even when Fourier diagonalization remains applicable, the same framework suggests extensions to elastic wave \cite{xu2026hamiltonian} and Maxwell's equation \cite{jin2024quantummaxwell}, and beyond conservative wave propagation. For periodic constant-coefficient diffusion and advection--diffusion equations, the QFT separates the advective phase evolution from the mode-dependent diffusive attenuation. The latter is contractive and therefore cannot be implemented deterministically by a closed-system unitary on the solution register alone. Possible hardware realizations include ancilla dilations and postselection, or genuinely open-system circuits in which refreshed ancillas, qubit reset, or mid-circuit weak measurements generate the required dissipation. Determining how the circuit depth, sampling cost, postselection probability, and measurement-induced bias scale with the physical diffusion time will be central to such extensions. The present results therefore provide both a hardware benchmark for Fourier-structured hyperbolic equations and a starting point for quantum simulations of more general conservative and dissipative PDEs.

\begin{acknowledgments}
This work was supported by NSF Grant DMS-2411120, and Oak Ridge Leadership Computing Facility (OLCF) projects APM006, CSC539. 
\end{acknowledgments}

\appendix
\section{One-dimensional acoustic wave equation}
\subsection{Initial-state preparation}
\label{subsec:1d-initial-state}

We restrict attention to initial conditions with vanishing velocity,
$v(x,0)=0$, so that the complete initial state factorizes as
\begin{equation}
    \ket{\Psi(0)}=\ket{\psi_{p}}\ket{p},
\end{equation}
where $\ket{\psi_{p}}$ carries the Fourier coefficients of the initial pressure
field and $\ket{p}$ is the field register. It therefore suffices to
prepare $\ket{\psi_{p}}$ in the spatial register. We give two constructions: a cosine profile, and a gaussian profile

\subsubsection{Cosine initial condition}
\label{subsec:1d-cosine}

Consider the sampled cosine
\begin{equation}
    f_{j}=\cos\!\left(\frac{2\pi k_{0}j}{N_{h}}\right),
    \qquad j=0,\dots,N_{h}-1.
\end{equation}
Its discrete Fourier transform is supported on the two conjugate modes $k_{0}$
and $N_{h}-k_{0}$; for $k_{0}\notin\{0,N_{h}/2\}$ these are distinct and the normalized
Fourier-space state is the two-mode cat
\begin{equation}\label{eq:cos-target}
    \ket{\widehat f_{k_{0}}}
    =\frac{\ket{k_{0}}+\ket{N_{h}-k_{0}}}{\sqrt{2}}.
\end{equation}

Write the two target indices in binary as $a=(k_{0})_{2}$ and
$b=\bigl((N_{h}-k_{0})\bmod N_{h}\bigr)_{2}$, and let $d=a\oplus b$ record the bits in
which they differ; all three strings are computed classically. Pick any index
$r$ with $d_{r}=1$. A Hadamard on qubit $r$, followed by a CNOT fanout from $r$
to every other qubit $j$ with $d_{j}=1$, prepares the cat state
\begin{equation}
    \frac{\ket{0^{n_{h}}}+\ket{d}}{\sqrt{2}}.
\end{equation}
Applying the bitwise Pauli $X(a)=\bigotimes_{j=0}^{n_{h}-1}X_{j}^{a_{j}}$ then
translates both branches by $a$,
\begin{equation}
    X(a)\,\frac{\ket{0^{n_{h}}}+\ket{d}}{\sqrt{2}}
    =\frac{\ket{a}+\ket{a\oplus d}}{\sqrt{2}}
    =\frac{\ket{k_{0}}+\ket{N_{h}-k_{0}}}{\sqrt{2}},
\end{equation}
using $a\oplus d=a\oplus a\oplus b=b$. This reproduces~\eqref{eq:cos-target},
so the full initial state is
\begin{equation}
    \ket{\Psi_{\cos}(0)}
    =\frac{\ket{k_{0}}+\ket{N_{h}-k_{0}}}{\sqrt{2}}\,\ket{p}.
\end{equation}

The explicit circuit construction for \cref{eq:cos-target} for $k_0=1$ is shown in \cref{fig:cosine-circuit}

\subsubsection{Gaussian initial condition}
\label{subsec:1d-fivemode}

A spatially localized flux profile is naturally modeled by a Gaussian
centered at the midpoint of the periodic domain $[0,1)$,
\begin{equation}\label{eq:gauss-profile}
    f(x)=\exp\!\left(-\frac{\bigl(x-\tfrac12\bigr)^{2}}{2\sigma^{2}}\right),
    \qquad
    f_{j}=f(x_{j}),\quad x_{j}=\frac{j}{N_{h}}.
\end{equation}
Since $f$ is real and symmetric about $x=\tfrac12$, i.e.\ $f(x)=f(1-x)$, its
discrete Fourier coefficients (with the same convention as above,
$\widehat f_{k}=\tfrac1{N_{h}}\sum_{j}f_{j}\,e^{-2\pi ikj/N_{h}}$) are real and invariant
under $k\mapsto N_{h}-k$. When $\sigma$ is chosen so that $f$ is smooth across
$(0,1)$ and negligible at the boundaries, replacing the sum by its Poisson-summation integral gives the
closed form
\begin{equation}\label{eq:gauss-spectrum}
    \widehat f_{k}\approx\sqrt{2\pi}\,\sigma\,(-1)^{k}\,e^{-2\pi^{2}\sigma^{2}k^{2}},
    \qquad
    \widehat f_{N_{h}-k}=\widehat f_{-k}=\widehat f_{k},
    \qquad |k|\ll N_{h}.
\end{equation}
The spectral weight is therefore itself Gaussian in the mode index, with
$k$-space width $\sim 1/(2\pi\sigma)$: the smoother (broader) the profile, the
faster the decay and the fewer the dominant modes. The relative magnitude of
successive modes is
\begin{equation}\label{eq:gauss-ratio}
    \frac{\lvert\widehat f_{k}\rvert}{\lvert\widehat f_{0}\rvert}
    \approx e^{-2\pi^{2}\sigma^{2}k^{2}}.
\end{equation}
For a moderate width (e.g.\ $\sigma\gtrsim 0.15$, giving
$\lvert\widehat f_{3}\rvert/\lvert\widehat f_{0}\rvert\approx e^{-18\pi^{2}\sigma^{2}}\lesssim 2\times10^{-2}$),
essentially all of the weight is concentrated on $|k|\le 2$. Truncating to these
dominant modes together with their conjugates leaves
\begin{equation}
    \Kset=\{0,\,1,\,2,\,N_{h}-2,\,N_{h}-1\}.
\end{equation}

Imposing parity symmetry $k\mapsto N_{h}-k$ leaves three free real amplitudes,
\begin{equation}\label{eq:psiG}
    \ket{\psi_{G}}
    =\alpha_{0}\ket{0}
    +\alpha_{1}\bigl(\ket{1}+\ket{N_{h}-1}\bigr)
    +\alpha_{2}\bigl(\ket{2}+\ket{N_{h}-2}\bigr),
    \qquad
    |\alpha_{0}|^{2}+2|\alpha_{1}|^{2}+2|\alpha_{2}|^{2}=1.
\end{equation}

We first load the amplitudes onto a \emph{compact} set of low-index basis states
and then map the negative-frequency half into place with a sign-extension step.
Let $s=q_{n_{h}-1}$ be the sign qubit and $q_{1}q_{0}$ the two
least-significant magnitude qubits. Group the amplitudes by sign
branch,
\begin{equation}
    r_{+}=\sqrt{|\alpha_{0}|^{2}+|\alpha_{1}|^{2}+|\alpha_{2}|^{2}},
    \qquad
    r_{-}=\sqrt{|\alpha_{1}|^{2}+|\alpha_{2}|^{2}},
\end{equation}
so that $r_{+}$ and $r_{-}$ are the total weights on the $s=0$ and $s=1$
branches and $r_{+}^{2}+r_{-}^{2}=1$ by~\eqref{eq:psiG}. For real amplitudes the
required rotation angles are
\begin{equation}
    \theta_{1}=\operatorname{atan2}(r_{-},r_{+}),
    \qquad
    \theta_{2}=\operatorname{atan2}\!\bigl(\alpha_{2},\sqrt{\alpha_{0}^{2}+\alpha_{1}^{2}}\bigr),
    \qquad
    \theta_{3}=\operatorname{atan2}(\alpha_{1},\alpha_{0}),
    \qquad
    \theta_{4}=\operatorname{atan2}(\alpha_{1},\alpha_{2}).
\end{equation}

The state is then built as a rotation tree, each gate acting on
$\ket{0}$ through $\Ry(2\theta)\ket{0}=\cos\theta\ket{0}+\sin\theta\ket{1}$.
First, $\Ry^{(s)}(2\theta_{1})$ splits the sign branches with weights
$(r_{+},r_{-})$. On the $s=0$ branch, an open-controlled $\Ry^{(q_{1})}(2\theta_{2})$
separates $\ket{2}$ (weight $\alpha_{2}$) from the $\{\ket{0},\ket{1}\}$ block,
and a further $\Ry^{(q_{0})}(2\theta_{3})$ conditioned on $s=0,q_{1}=0$ resolves
that block into $\alpha_{0}\ket{0}+\alpha_{1}\ket{1}$. On the $s=1$ branch, a
single $\Ry^{(q_{0})}(2\theta_{4})$ produces $\alpha_{2}\ket{00}+\alpha_{1}\ket{01}$.
The result is the compact state
\begin{equation}\label{eq:psi-comp}
    \ket{\psi_{\mathrm{comp}}}
    =\alpha_{0}\ket{0_{s}00}
    +\alpha_{1}\ket{0_{s}01}
    +\alpha_{2}\ket{0_{s}10}
    +\alpha_{2}\ket{1_{s}00}
    +\alpha_{1}\ket{1_{s}01}.
\end{equation}

Sign extension propagates the sign qubit across the remaining qubits,
\begin{equation}
    S_{\mathrm{ext}}=\prod_{j=1}^{n_{h}-2}\operatorname{CX}_{s\rightarrow q_{j}},
\end{equation}
leaving the $s=0$ branch untouched while mapping the $s=1$ branch to the
top of the register,
\begin{equation}
    \ket{1_{s}00}\longmapsto\ket{1\cdots1\,10}=\ket{N_{h}-2},
    \qquad
    \ket{1_{s}01}\longmapsto\ket{1\cdots1\,11}=\ket{N_{h}-1}.
\end{equation}
This turns~\eqref{eq:psi-comp} into $\ket{\psi_{G}}$ of~\eqref{eq:psiG}, so
\begin{equation}
    \label{eq:gaussian-full}
    \ket{\Psi_{G}(0)}=\ket{\psi_{G}}\ket{p}.
\end{equation}

The explicit circuit construction for \cref{eq:gaussian-full} for $k_0=1$ is shown in \cref{fig:gaussian-circuit}

\section{Two-dimensional acoustic wave equation}
\subsection{Explicit synthesis for the retained low-mode set}
\label{subsec:restricted-implementation}

We now specialize to the retained one-dimensional mode set
\begin{equation}\label{eq:retained-Kset}
    \mathcal K_1=\{0,1,2,N_h-2,N_h-1\}.
\end{equation}
The two-dimensional retained set is \(\mathcal K=\mathcal K_1\times\mathcal K_1\).
This case is small enough that the nonlinear angles can be synthesized by
Boolean expansions in the retained-mode bits, avoiding a general arithmetic
circuit.

\subsubsection{Sign-magnitude representation}
\label{subsubsec:sign-mag}

For \(k\in\mathcal K_1\), write \(m=\min(k,N_h-k)\in\{0,1,2\}\).  Let \(s\) be the
sign bit, with \(s=0\) for \(k=0,1,2\) and \(s=1\) for \(k=N_h-2,N_h-1\), and let
\(k_1k_0\) be the two low-order bits of the Fourier index.  The required
magnitude bits \(m_1m_0\) are
\begin{equation}\label{eq:signmag-table}
\begin{array}{c|c|cc|cc}
    k & s & k_1 & k_0 & m_1 & m_0 \\ \hline
    0   & 0 & 0 & 0 & 0 & 0\\
    1   & 0 & 0 & 1 & 0 & 1\\
    2   & 0 & 1 & 0 & 1 & 0\\
    N_h-2 & 1 & 1 & 0 & 1 & 0\\
    N_h-1 & 1 & 1 & 1 & 0 & 1
\end{array}
\qquad
\begin{aligned}
    m_0&=k_0,\\
    m_1&=k_1\oplus(s\,k_0).
\end{aligned}
\end{equation}
The computation is reversible: copy \(k_0\) into \(m_0\), copy \(k_1\) into
\(m_1\), and then apply a Toffoli update
\(m_1\mathrel{\oplus=}(s\land k_0)\).  We write the magnitude bits for the two
spatial directions as
\begin{equation}
    x_i=(m_x)_i,
    \qquad
    y_i=(m_y)_i,
    \qquad i=0,1.
\end{equation}

\subsubsection{Bright-flux rotation angle}
\label{subsubsec:beta-synthesis}

Let
\begin{equation}\label{eq:A0A1A2}
    A_m=2N_h\sin\left(\frac{\pi m}{N}\right),
    \qquad
    A_0=0,
    \quad
    A_1=2N_h\sin\left(\frac{\pi}{N_h}\right),
    \quad
    A_2=2N_h\sin\left(\frac{2\pi}{N_h}\right).
\end{equation}
The pressure-bright rotation angle is
\begin{equation}
    \beta_{\bm k}=2t\Omega_{\bm k}=2t\sqrt{A_{m_x}^2+A_{m_y}^2}.
\end{equation}
Define
\begin{equation}\label{eq:beta-values}
    \beta_1=2A_1t,
    \qquad
    \beta_2=2A_2t,
    \qquad
    \beta_{11}=2t\sqrt{2A_1^2},
    \qquad
    \beta_{12}=2t\sqrt{A_1^2+A_2^2},
    \qquad
    \beta_{22}=2t\sqrt{2A_2^2}.
\end{equation}
On the valid magnitude subspace the angle has the Boolean expansion
\begin{equation}\label{eq:beta-boolean}
\begin{aligned}
    \beta_{\bm k}
    &=\beta_1(x_0+y_0)+\beta_2(x_1+y_1)  \\
    &\quad
    +(\beta_{11}-2\beta_1)x_0y_0
    +(\beta_{12}-\beta_1-\beta_2)(x_0y_1+x_1y_0)
    +(\beta_{22}-2\beta_2)x_1y_1 .
\end{aligned}
\end{equation}
Since all terms correspond to rotations about the same \(x\)-axis, they commute.
Thus \(R_X(\beta_{\bm k})\) can be implemented as a product of singly and doubly
controlled \(R_X\) gates carrying the coefficients in \cref{eq:beta-boolean}.

\subsubsection{Givens angle}
\label{subsubsec:alpha-synthesis}

The Givens rotation angle is
\begin{equation}
    \alpha_{\bm k}=2\eta_{\bm k}=2\operatorname{atan2}(A_{m_y},A_{m_x}).
\end{equation}
Let
\begin{equation}
    \gamma=\tan^{-1}\left(\frac{A_1}{A_2}\right).
\end{equation}
With \(\alpha_{00}=0\), the following Boolean expansion reproduces all retained
mode values:
\begin{equation}\label{eq:alpha-boolean}
    \alpha_{\bm k}
    =\pi(y_0+y_1)
    -\frac{\pi}{2}(x_0y_0+x_1y_1)
    -2\gamma x_0y_1
    -(\pi-2\gamma)x_1y_0 .
\end{equation}
Indeed,
\begin{equation}
    \alpha_{10}=\alpha_{20}=0,
    \quad
    \alpha_{01}=\alpha_{02}=\pi,
    \quad
    \alpha_{11}=\alpha_{22}=\frac{\pi}{2},
    \quad
    \alpha_{12}=\pi-2\gamma,
    \quad
    \alpha_{21}=2\gamma,
\end{equation}
where the subscripts denote \((m_x,m_y)\).  The gate \(G_{\bm k}\) is then
implemented as an open-controlled \(R_Y(\alpha_{\bm k})\) on \(f_0\), controlled
on \(f_1=0\):
\begin{equation}\label{eq:G-gate-boolean}
    G_{\bm k}
    =\ket0\!\bra0_{f_1}\otimes R_Y^{(f_0)}(\alpha_{\bm k})
    +\ket1\!\bra1_{f_1}\otimes I_{f_0}.
\end{equation}

\subsubsection{Phase transformation}
\label{subsubsec:phase-synthesis}

For \(k\neq0\), write \(\lambda_k=A_m e^{i\phi_k}\).  Let
\(\delta=\pi/N_h\).  The retained values are
\begin{equation}\label{eq:phase-values}
    \phi_1=\pi-\delta,
    \qquad
    \phi_2=\pi-2\delta,
    \qquad
    \phi_{N-2}=2\delta,
    \qquad
    \phi_{N-1}=\delta,
    \qquad
    \phi_0=0.
\end{equation}
In magnitude bits,
\begin{equation}\label{eq:phi-boolean}
    \phi_k
    =(\pi-\delta)m_0
    +(\pi-2\delta)m_1
    +(-\pi+2\delta)s\,m_0
    +(-\pi+4\delta)s\,m_1 .
\end{equation}
The two-dimensional phase factor separates as
\begin{equation}\label{eq:P-factorization}
    P_{\bm k}=P_{k_x}^{(x)}P_{k_y}^{(y)},
\end{equation}
where
\begin{equation}
    P_{k_x}^{(x)}=\operatorname{diag}(e^{i\phi_x},1,1,1),
    \qquad
    P_{k_y}^{(y)}=\operatorname{diag}(1,e^{i\phi_y},1,1).
\end{equation}
Using \(\operatorname{CP}(\phi)=\operatorname{diag}(1,1,1,e^{i\phi})\), these are
realized on the field register by
\begin{equation}\label{eq:P-as-CP}
    P_{k_x}^{(x)}=(X\otimes X)\operatorname{CP}(\phi_x)(X\otimes X),
    \qquad
    P_{k_y}^{(y)}=(X\otimes I)\operatorname{CP}(\phi_y)(X\otimes I).
\end{equation}
Each \(\operatorname{CP}(\phi_k)\) is synthesized from controlled phases
conditioned on \(m_0\), \(m_1\), \((s,m_0)\), and \((s,m_1)\), with coefficients
specified by \cref{eq:phi-boolean}.

\subsubsection{Complete restricted-mode block circuit}
\label{subsubsec:complete-restricted-block}

Let \(U_{\rm mag}\) compute the magnitude bits for both spatial registers into
work qubits, and define \(\mathcal P\), \(\mathcal G\), and \(\mathcal R_x\) by
the controlled gates described above.  The general-mode block implementation is
\begin{equation}\label{eq:general-bd-circuit}
    \boxed{
    U(t)
    =U_{\rm mag}^{\dagger}
      \mathcal P\mathcal G\mathcal R_x\mathcal G^{\dagger}\mathcal P^{\dagger}
      U_{\rm mag} .
    }
\end{equation}
For pressure-only input, the initial inverse transformation is unnecessary, so
\begin{equation}\label{eq:pressure-bd-circuit}
    \boxed{
    U(t)
    =U_{\rm mag}^{\dagger}
      \mathcal P\mathcal G\mathcal R_x
      U_{\rm mag} .
    }
\end{equation}

\subsection{Preparation of two-dimensional flux states}
\label{subsec:2d-initial-state}

We finally describe the preparation of the Fourier-space flux state used as
input to the circuit.  For a pressure-only initial condition
\(v_x(x,y,0)=v_y(x,y,0)=0\), the initial Fourier state has the form
\begin{equation}\label{eq:2d-init}
    \ket{\Psi_f(0)}
    =\ket{\psi_f}_{xy}\ket p,
    \qquad
    \ket{\psi_f}_{xy}
    =
      \sum_{k_x,k_y=0}^{N_h-1}\widehat f_{k_x,k_y}
      \ket{k_x}_x\ket{k_y}_y.
\end{equation}
If the initial data are prepared in physical space, one may apply the two
uncontrolled QFTs to obtain this state.  In the low-mode implementation, it is
often more efficient to prepare the retained Fourier state directly.

\paragraph{Separable profiles.}
If \(f(x,y)=g(x)h(y)\), then
\(\widehat f_{k_x,k_y}=\widehat g_{k_x}\widehat h_{k_y}\), and hence
\begin{equation}\label{eq:separable-fourier-state}
    \ket{\psi_f}_{xy}=\ket{\psi_g}_x\otimes\ket{\psi_h}_y.
\end{equation}
The preparation circuit factorizes as
\begin{equation}
    U_{\rm sep}=U_g^{(x)}\otimes U_h^{(y)},
\end{equation}
so the one-dimensional Fourier-state preparation circuits can be used
independently on the two registers. Note that for cosine initial condition with $k_x = k_y = 1$, we just have 4 fourier mode pairs, so that $x_0 = 1, x_1 = 0, y_0 = 1, y_1 = 0$ therefore in addition to separable initial state preparation, we get simplified hamiltonian circuit as well from \cref{eq:alpha-boolean} and \cref{eq:beta-boolean}.

\paragraph{Nonseparable profiles}
For a nonseparable initial profile, the Fourier state is generally entangled
across the two spatial registers.  Restricting to
\(\mathcal K_1=\{0,1,2,N_h-2,N_h-1\}\) in each direction, define the retained
spectral weight
\begin{equation}\label{eq:2d-retained-weight}
    w_{\mathcal K}
    =\frac{\sum_{k_x,k_y\in\mathcal K_1}|\widehat f_{k_x,k_y}|^2}
           {\sum_{k_x,k_y=0}^{N_h-1}|\widehat f_{k_x,k_y}|^2}.
\end{equation}
The normalized retained coefficient matrix is
\begin{equation}\label{eq:retained-C-matrix}
    C_{\mu\nu}
    =\frac{\widehat f_{k_\mu,k_\nu}}
           {\sqrt{\sum_{k_x,k_y\in\mathcal K_1}|\widehat f_{k_x,k_y}|^2}},
    \qquad
    k_\mu,k_\nu\in\mathcal K_1.
\end{equation}
Thus the retained normalized Fourier state is
\begin{equation}\label{eq:retained-state}
    \ket{\psi_{\mathcal K}}
    =\sum_{\mu,\nu=0}^{4}C_{\mu\nu}\ket{k_\mu}_x\ket{k_\nu}_y.
\end{equation}

\paragraph{Low Schmidt-rank approximation\cite{Araujo_2024}}
Let
\begin{equation}
    C=U\Sigma V^\dagger,
    \qquad
    \sigma_0\ge\sigma_1\ge\cdots,
\end{equation}
be a singular-value decomposition.  Then
\begin{equation}\label{eq:schmidt-state}
    \ket{\psi_{\mathcal K}}
    =\sum_r \sigma_r\ket{u_r}_x\ket{v_r}_y,
\end{equation}
where
\begin{equation}
    \ket{u_r}_x=\sum_\mu U_{\mu r}\ket{k_\mu}_x,
    \qquad
    \ket{v_r}_y=\sum_\nu V_{\nu r}^*\ket{k_\nu}_y.
\end{equation}
The normalized rank-\(R\) approximation is
\begin{equation}\label{eq:rankR}
    \ket{\psi^{(R)}}
    =\frac{1}{\sqrt{\sum_{r=0}^{R-1}\sigma_r^2}}
      \sum_{r=0}^{R-1}\sigma_r\ket{u_r}_x\ket{v_r}_y.
\end{equation}
Its fidelity with the retained state is
\begin{equation}\label{eq:fidelity-retained}
    F_R^{(\mathcal K)}=\sum_{r=0}^{R-1}\sigma_r^2,
\end{equation}
and its fidelity with the full Fourier state is
\begin{equation}\label{eq:fidelity-full}
    F_R^{(\rm full)}=w_{\mathcal K}F_R^{(\mathcal K)}.
\end{equation}
This separates the error due to Fourier truncation from the error due to the
Schmidt-rank approximation.

\paragraph{Rank-two preparation.}
For \(R=2\), define
\begin{equation}
    \alpha_r=\frac{\sigma_r}{\sqrt{\sigma_0^2+\sigma_1^2}},
    \qquad r=0,1.
\end{equation}
Then
\begin{equation}
    \ket{\psi^{(2)}}
    =\alpha_0\ket{u_0}_x\ket{v_0}_y
     +\alpha_1\ket{u_1}_x\ket{v_1}_y.
\end{equation}
On compact three-qubit registers \(R_\mu=(s_\mu,q_{\mu,1},q_{\mu,0})\), choose
seed states
\begin{equation}
    \ket{e_0}=\ket0_s\ket{00},
    \qquad
    \ket{e_1}=\ket1_s\ket{00}.
\end{equation}
Starting from \(\ket{e_0}_x\ket{e_0}_y\), let
\begin{equation}
    \vartheta=\operatorname{atan2}(\alpha_1,\alpha_0).
\end{equation}
Then
\begin{equation}\label{eq:label-state}
    \operatorname{CX}_{s_x\to s_y}R_Y^{(s_x)}(2\vartheta)
    \ket{e_0}_x\ket{e_0}_y
    =\alpha_0\ket{e_0}_x\ket{e_0}_y
     +\alpha_1\ket{e_1}_x\ket{e_1}_y
    \equiv \ket{\Phi_{\rm label}}.
\end{equation}
Local unitaries satisfying
\begin{equation}
    U_x\ket{e_r}=\ket{u_r},
    \qquad
    V_y\ket{e_r}=\ket{v_r}
\end{equation}
then give
\begin{equation}\label{eq:Uinit2}
    \boxed{
    U_{\rm init}^{(2)}
    =(U_x\otimes V_y)\operatorname{CX}_{s_x\to s_y}R_Y^{(s_x)}(2\vartheta).
    }
\end{equation}

\paragraph{Structured local Fourier-pair unitary.}
For parity-symmetric Schmidt vectors, the leading even and odd components often
have the form
\begin{equation}\label{eq:schmidt-parity}
    \ket{u_0}
    =a_0\ket0+a_1(\ket1+\ket{N_h-1})+a_2(\ket2+\ket{N_h-2}),
\end{equation}
\begin{equation}
    \ket{u_1}
    =b_1(\ket1-\ket{N_h-1})+b_2(\ket2-\ket{N_h-2}),
\end{equation}
with
\begin{equation}
    |a_0|^2+2|a_1|^2+2|a_2|^2=1,
    \qquad
    2|b_1|^2+2|b_2|^2=1.
\end{equation}
The corresponding local unitary can be factorized as
\begin{equation}\label{eq:Uloc}
    \boxed{U_{\rm loc}=S_{\rm ext}PM.}
\end{equation}
Here \(M\) prepares the magnitude distribution, \(P\) creates the symmetric or
antisymmetric signed pair, and \(S_{\rm ext}\) maps compact signed labels to the
full \(n\)-qubit Fourier labels.

For the magnitude preparation, let
\begin{equation}
    \ket{M_0}=\ket{00},
    \qquad
    \ket{M_1}=\ket{01},
    \qquad
    \ket{M_2}=\ket{10}.
\end{equation}
Define
\begin{equation}
    c_0=a_0,
    \qquad
    c_1=\sqrt2\,a_1,
    \qquad
    c_2=\sqrt2\,a_2,
    \qquad
    d_1=\sqrt2\,b_1,
    \qquad
    d_2=\sqrt2\,b_2.
\end{equation}
The target action is
\begin{equation}\label{eq:M-action}
    M\ket{e_0}
    =\ket0_s(c_0\ket{M_0}+c_1\ket{M_1}+c_2\ket{M_2}),
\end{equation}
\begin{equation}
    M\ket{e_1}
    =\ket1_s(d_1\ket{M_1}+d_2\ket{M_2}).
\end{equation}
This can be implemented by sign-multiplexed \(R_Y\) rotations.  First rotate
\(q_1\) by
\begin{equation}
    \theta_e=2\operatorname{atan2}\left(c_2,\sqrt{|c_0|^2+|c_1|^2}\right)
\end{equation}
on the even branch and by
\begin{equation}
    \theta_o=2\operatorname{atan2}(d_2,d_1)
\end{equation}
on the odd branch.  Then, conditioned on \(q_1=0\), rotate \(q_0\) by
\begin{equation}
    \phi_e=2\operatorname{atan2}(c_1,c_0)
\end{equation}
on the even branch and by \(\phi_o=\pi\) on the odd branch.

The signed-pair map is implemented on the valid magnitude subspace by applying
\begin{equation}
    \operatorname{CH}_{q_0\to s},
    \qquad
    \operatorname{CH}_{q_1\to s},
    \qquad
    X_{q_0}\operatorname{CCX}_{s,q_0\to q_1}X_{q_0}.
\end{equation}
With compact labels
\begin{equation}
    0\leftrightarrow 000,
    \quad
    +1\leftrightarrow 001,
    \quad
    +2\leftrightarrow 010,
    \quad
    -2\leftrightarrow 100,
    \quad
    -1\leftrightarrow 101,
\end{equation}
this yields
\begin{equation}\label{eq:PM-action}
\begin{aligned}
    PM\ket{e_0}
    &=a_0\ket{000}+a_1(\ket{001}+\ket{101})+a_2(\ket{010}+\ket{100}),\\
    PM\ket{e_1}
    &=b_1(\ket{001}-\ket{101})+b_2(\ket{010}-\ket{100}).
\end{aligned}
\end{equation}
Finally, the sign-extension circuit maps the compact negative labels into the
full Fourier register.  A nearest-neighbor CNOT chain
\begin{equation}
    S_{\rm ext}=\operatorname{CX}_{q_2\to q_1}\cdots
    \operatorname{CX}_{q_{n-1}\to q_{n-2}}
\end{equation}
gives
\begin{equation}\label{eq:signext}
    \ket{10\cdots00}\mapsto\ket{1\cdots1\,10}=\ket{N_h-2},
    \qquad
    \ket{10\cdots01}\mapsto\ket{1\cdots1\,11}=\ket{N_h-1}.
\end{equation}

\section{Circuit diagrams}
\begin{figure}[htbp]
    \centering
    \begin{quantikz}[
        row sep=0.5cm,
        column sep=0.55cm
    ]
        \lstick{Field register\\$\ket{p}$}
        & \qw
        & \gate[wires=2]{\widehat{U}(t)}
        & \qw
        & \qw
        \\
        \lstick{Grid register\\$\ket{0}^{\otimes n_h}$}
        & \gate{\widehat U_{\mathrm{prep}}}
        & {}
        & \gate{\mathcal{F}^{\dagger}}
        & \qw
    \end{quantikz}
\caption{Overall circuit for the one-dimensional acoustic-wave evolution in
\cref{eq:1d-overall-circuit}. The Fourier-space flux state is prepared
directly on the $n_h$-qubit grid register, evolved under
$\widehat U(t)$ together with the field qubit, and transformed back to the
spatial basis by the inverse quantum Fourier transform
$\mathcal F^\dagger$.}
    \label{fig:evolution-circuit}
\end{figure}

\begin{figure}[htbp]
    \centering
    \resizebox{\linewidth}{!}{%
    \begin{quantikz}[
        row sep=0.75cm,
        column sep=0.25cm
    ]
        \lstick{Field register\\$\ket{p}$}
        & \gate{\Rz(\frac{-\pi2^{0}}{2^{n_h}})}
        & \gate{\Rz(\frac{-\pi\cdot2^{1}}{2^{n_h}})}
        & \gate{\Rz(\frac{-\pi\cdot2^{n_h-1}}{2^{n_h}})}
        & \gate{\Had}
        & \gate{\Rz(-4\pi t2^{0})}
        & \gate{\Rz(-4\pi t2^{1})}
        & \gate{\Rz(-4\pi t2^{n_h-1})}
        & \gate{\Rz(-4\pi t2^{n_h})}
        & \gate{\Rz(8\pi t2^{0})}
        & \gate{\Rz(8\pi t2^{1})}
        & \gate{\Rz(8\pi t2^{n_h-1})}
        & \gate{\Had}
        & \gate{\Rz(\frac{\pi2^{0}}{2^{n_h}})}
        & \gate{\Rz(\frac{\pi2^{1}}{2^{n_h}})}
        & \gate{\Rz(\frac{\pi2^{n_h-1}}{2^{n_h}})}
        & \qw
        \\
        \lstick{$q_{0}$}
        & \ctrl{-1}
        & \qw
        & \qw
        & \qw
        & \ctrl{-1}
        & \qw
        & \qw
        & \qw
        & \ctrl{-1}
        & \qw
        & \qw
        & \qw
        & \ctrl{-1}
        & \qw
        & \qw
        & \qw
        \\
        \lstick{$q_{1}$}
        & \qw
        & \ctrl{-2}
        & \qw
        & \qw
        & \qw
        & \ctrl{-2}
        & \qw
        & \qw
        & \qw
        & \ctrl{-2}
        & \qw
        & \qw
        & \qw
        & \ctrl{-2}
        & \qw
        & \qw
        \\
        \lstick{$\vdots$}
        & {}
        & \gate[style={draw=none}]{\cdots}
        & {}
        & {}
        & {}
        & \gate[style={draw=none}]{\cdots}
        & {}
        & {}
        & {}
        & \gate[style={draw=none}]{\cdots}
        & {}
        & {}
        & {}
        & \gate[style={draw=none}]{\cdots}
        & {}
        & {}
        \\
        \lstick{$q_{n_h-1}$}
        & \qw
        & \qw
        & \ctrl{-4}
        & \qw
        & \qw
        & \qw
        & \ctrl{-4}
        & \ctrl{-4}
        & \ctrl{-4}
        & \ctrl{-4}
        & \ctrl{-4}
        & \qw
        & \qw
        & \qw
        & \ctrl{-4}
        & \qw
    \end{quantikz}%
    }
\caption{Gate-level decomposition of the approximate one-dimensional
Fourier-space propagator $\widehat U_{\mathrm{approx}}(t)$ in
\cref{eq:1d-fourier-app-ham}. Grid-controlled phase rotations implement the
mode-dependent basis transformation and the diagonal evolution generated by
$\widetilde H_Z$, while the most-significant grid qubit implements the folding
between positive and negative Fourier modes.}
    \label{fig:hz-tilde-circuit}
\end{figure}

\begin{figure}[htbp]
    \centering
    \begin{quantikz}[
        row sep=0.6cm,
        column sep=0.4cm
    ]
        \lstick{Field register\\$\ket{p}$}
        & \gate{X}
        & \qw
        & \qw
        & \qw
        & \qw
        \\
        \lstick{$q_{0}$}
        & \gate{X}
        & \qw
        & \qw
        & \qw
        & \qw
        \\
        \lstick{$q_{1}$}
        & \gate{\Had}
        & \ctrl{1}
        & \qw
        & \qw
        & \qw
        \\
        \lstick{$\vdots$}
        & {}
        & \targ{}
        & \gate[style={draw=none}]{\cdots}
        & \ctrl{1}
        & {}
        \\
        \lstick{$q_{n_h-1}$}
        & \qw
        & \qw
        & \qw
        & \targ{}
        & \qw
    \end{quantikz}
\caption{Fourier-space state-preparation circuit for the one-dimensional
cosine pressure initial condition with $k_0=1$. The circuit prepares the two
conjugate Fourier modes
$\ket{\psi_p}=(\ket{1}+\ket{N_h-1})/\sqrt{2}$ on the grid register and sets
the field register to the pressure component. }
    \label{fig:cosine-circuit}
\end{figure}

\begin{figure}[htbp]
    \centering
    \begin{quantikz}[
        row sep=0.6cm,
        column sep=0.4cm
    ]
        \lstick{Field register\\$\ket{p}$}
        & \gate{X}
        & \qw
        & \qw
        & \qw
        & \qw
        & \qw
        & \qw
        \\
        \lstick{$q_{0}$}
        & \qw
        & \qw
        & \gate{\Ry(2\theta_{3})}
        & \gate{\Ry(2\theta_{4})}
        & \qw
        & \qw
        & \qw
        \\
        \lstick{$q_{1}$}
        & \qw
        & \gate{\Ry(2\theta_{2})}
        & \octrl{-1}
        & \qw
        & \qw
        & \qw
        & \qw
        \\
        \lstick{$q_{2}$}
        & \gate{\Ry(2\theta_{1})}
        & \octrl{-1}
        & \octrl{-2}
        & \ctrl{-2}
        & \ctrl{1}
        & \qw
        & \qw
        \\
        \lstick{$\vdots$}
        & {}
        & {}
        & {}
        & {}
        & \targ{}
        & \gate[style={draw=none}]{\cdots}
        & \ctrl{1}
        \\
        \lstick{$q_{n_h-1}$}
        & \qw
        & \qw
        & \qw
        & \qw
        & \qw
        & \qw
        & \targ{}
    \end{quantikz}
\caption{Fourier-space state-preparation circuit for the one-dimensional
Gaussian pressure initial condition. The circuit prepares the five-mode
approximation supported on $k=0,\pm1,\pm2$; the rotation angles
$\theta_1,\ldots,\theta_4$ are determined by the normalized Fourier
coefficients defined in \cref{subsec:1d-fivemode}.}
    \label{fig:gaussian-circuit}
\end{figure}

\begin{figure}[htbp]
    \centering
    \begin{quantikz}[
        row sep=0.5cm,
        column sep=0.55cm
    ]
        \lstick{Field register\\$\ket{p}=\ket{10}$}
        & \qw
        & \gate[wires=3]{U_{\rm bd}^{p}(t)}
        & \qw
        & \qw
        \\
        \lstick{$X$ grid register\\$\ket{0}^{\otimes n_h}$}
        & \gate[wires=2]{\widehat U_{\mathrm{prep}}}
        & {}
        & \gate{\mathcal{F}^{\dagger}}
        & \qw
        \\
        \lstick{$Y$ grid register\\$\ket{0}^{\otimes n_h}$}
        & {}
        & {}
        & \gate{\mathcal{F}^{\dagger}}
        & \qw
    \end{quantikz}
\caption{Overall pressure-input circuit for the two-dimensional acoustic wave
equation. The two-qubit field register is initialized in
$\ket p=\ket{10}$, while $\widehat U_{\mathrm{prep}}$ prepares the initial
Fourier state jointly on the $X$ and $Y$ grid registers. The bright--dark
propagator $U_{\mathrm{bd}}^{p}(t)$ performs the Fourier-space evolution, after
which separate inverse quantum Fourier transforms return both spatial
registers to the position basis.}
    \label{fig:2d-evolution-circuit}
\end{figure}

\begin{figure}[htbp]
    \centering
    \begin{quantikz}[
        row sep=0.5cm,
        column sep=0.5cm
    ]
        \lstick{Field register\\$f_{1}f_{0}$}
        & \qw
        & \gate[wires=3]{\mathcal R_{x}}
        & \gate[wires=3]{\mathcal G}
        & \gate[wires=3]{\mathcal P}
        & \qw
        & \qw
        \\
        \lstick{$X$ grid register}
        & \gate[wires=2]{U_{\mathrm{mag}}}
        & {}
        & {}
        & {}
        & \gate[wires=2]{U_{\mathrm{mag}}^{\dagger}}
        & \qw
        \\
        \lstick{$Y$ grid register}
        & {}
        & {}
        & {}
        & {}
        & {}
        & \qw
    \end{quantikz}
\caption{Pressure-specific bright--dark propagator in
\cref{eq:pressure-bd-circuit}. The reversible circuit $U_{\mathrm{mag}}$
computes the retained mode magnitudes, $\mathcal R_x$ performs the
bright--flux rotation, and $\mathcal G$ and $\mathcal P$ restore the
coordinate and phase conventions before the magnitude registers are
uncomputed.}
    \label{fig:Ubd-circuit}
\end{figure}

.

\begin{figure}[htbp]
    \centering
    \resizebox{\linewidth}{!}{%
    \begin{quantikz}[
        row sep=0.5cm,
        column sep=0.35cm
    ]
        \lstick{$f_{0}$}
        & \gate{X}
        & \ctrl{3}
        & \ctrl{2}
        & \ctrl{3}
        & \ctrl{2}
        & \gate{X}
        & \qw
        & \ctrl{6}
        & \ctrl{5}
        & \ctrl{6}
        & \ctrl{5}
        & \qw
        \\
        \lstick{$f_{1}$}
        & \gate{X}
        & \ctrl{2}
        & \ctrl{1}
        & \ctrl{2}
        & \ctrl{1}
        & \gate{X}
        & \gate{X}
        & \ctrl{5}
        & \ctrl{4}
        & \ctrl{5}
        & \ctrl{4}
        & \gate{X}
        \\
        \lstick{$x_{1}$}
        & \qw
        & \qw
        & \gate{P(\pi-2\delta)}
        & \qw
        & \gate{P(4\delta-\pi)}
        & \qw
        & \qw
        & \qw
        & \qw
        & \qw
        & \qw
        & \qw
        \\
        \lstick{$x_{0}$}
        & \qw
        & \gate{P(\pi-\delta)}
        & \qw
        & \gate{P(2\delta-\pi)}
        & \qw
        & \qw
        & \qw
        & \qw
        & \qw
        & \qw
        & \qw
        & \qw
        \\
        \lstick{$s_{x}$}
        & \qw
        & \qw
        & \qw
        & \ctrl{-1}
        & \ctrl{-2}
        & \qw
        & \qw
        & \qw
        & \qw
        & \qw
        & \qw
        & \qw
        \\
        \lstick{$y_{1}$}
        & \qw
        & \qw
        & \qw
        & \qw
        & \qw
        & \qw
        & \qw
        & \qw
        & \gate{P(\pi-2\delta)}
        & \qw
        & \gate{P(4\delta-\pi)}
        & \qw
        \\
        \lstick{$y_{0}$}
        & \qw
        & \qw
        & \qw
        & \qw
        & \qw
        & \qw
        & \qw
        & \gate{P(\pi-\delta)}
        & \qw
        & \gate{P(2\delta-\pi)}
        & \qw
        & \qw
        \\
        \lstick{$s_{y}$}
        & \qw
        & \qw
        & \qw
        & \qw
        & \qw
        & \qw
        & \qw
        & \qw
        & \qw
        & \ctrl{-1}
        & \ctrl{-2}
        & \qw
    \end{quantikz}%
    }
\caption{Gate-level decomposition of the two-dimensional phase transformation
$\mathcal P$ for the retained Fourier modes
$k_x,k_y\in\{0,\pm1,\pm2\}$. Controlled phase gates synthesize the separated
factors $P_{k_x}^{(x)}P_{k_y}^{(y)}$ in \cref{eq:P-factorization} from the
sign and magnitude bits, with $\delta=\pi/N_h$.}
    \label{fig:P-circuit}
\end{figure}

\begin{figure}[htbp]
    \centering
    \begin{quantikz}[
        row sep=0.5cm,
        column sep=0.4cm
    ]
        \lstick{$f_{0}$}
        & \gate{\Ry(\pi)}
        & \gate{\Ry(\pi)}
        & \gate{\Ry(-\pi/2)}
        & \gate{\Ry(-\pi/2)}
        & \gate{\Ry(-2\gamma)}
        & \gate{\Ry(2\gamma-\pi)}
        \\
        \lstick{$f_{1}$}
        & \octrl{-1}
        & \octrl{-1}
        & \octrl{-1}
        & \octrl{-1}
        & \octrl{-1}
        & \octrl{-1}
        \\
        \lstick{$x_{0}$}
        & \qw
        & \qw
        & \ctrl{-2}
        & \qw
        & \ctrl{-2}
        & \qw
        \\
        \lstick{$x_{1}$}
        & \qw
        & \qw
        & \qw
        & \ctrl{-3}
        & \qw
        & \ctrl{-3}
        \\
        \lstick{$y_{0}$}
        & \ctrl{-4}
        & \qw
        & \ctrl{-4}
        & \qw
        & \qw
        & \ctrl{-4}
        \\
        \lstick{$y_{1}$}
        & \qw
        & \ctrl{-5}
        & \qw
        & \ctrl{-5}
        & \ctrl{-5}
        & \qw
    \end{quantikz}
\caption{Gate-level decomposition of the Givens transformation
$\mathcal G$ for the retained two-dimensional Fourier modes. Open-controlled
$R_Y$ rotations on field qubit $f_0$, conditioned on $f_1=0$, implement the
mode-dependent angle $\alpha_{\bm k}$ in
\cref{eq:G-gate-boolean} and rotate the two velocity components into the
bright--dark basis.}
    \label{fig:G-circuit}
\end{figure}

\begin{figure}[htbp]
    \centering
    \resizebox{\linewidth}{!}{%
    \begin{quantikz}[
        row sep=0.5cm,
        column sep=0.4cm
    ]
        \lstick{$f_{0}$}
        & \qw
        & \qw
        & \qw
        & \qw
        & \qw
        & \qw
        & \qw
        & \qw
        \\
        \lstick{$f_{1}$}
        & \gate{\Rx(\beta_{1})}
        & \gate{\Rx(\beta_{1})}
        & \gate{\Rx(\beta_{2})}
        & \gate{\Rx(\beta_{2})}
        & \gate{\Rx(\beta_{11}-2\beta_{1})}
        & \gate{\Rx(\beta_{12}-\beta_{1}-\beta_{2})}
        & \gate{\Rx(\beta_{12}-\beta_{1}-\beta_{2})}
        & \gate{\Rx(\beta_{22}-2\beta_{2})}
        \\
        \lstick{$x_{0}$}
        & \ctrl{-1}
        & \qw
        & \qw
        & \qw
        & \ctrl{-1}
        & \ctrl{-1}
        & \qw
        & \qw
        \\
        \lstick{$x_{1}$}
        & \qw
        & \qw
        & \ctrl{-2}
        & \qw
        & \qw
        & \qw
        & \ctrl{-2}
        & \ctrl{-2}
        \\
        \lstick{$y_{0}$}
        & \qw
        & \ctrl{-3}
        & \qw
        & \qw
        & \ctrl{-3}
        & \qw
        & \ctrl{-3}
        & \qw
        \\
        \lstick{$y_{1}$}
        & \qw
        & \qw
        & \qw
        & \ctrl{-4}
        & \qw
        & \ctrl{-4}
        & \qw
        & \ctrl{-4}
    \end{quantikz}%
    }
\caption{Gate-level decomposition of the bright--flux rotation
$\mathcal R_x$ in \cref{eq:Rx-bright-pressure}. Singly and doubly controlled
$R_X$ rotations on field qubit $f_1$ synthesize the mode-dependent angle
$\beta_{\bm k}=2t\Omega_{\bm k}$ from the Boolean expansion in
\cref{eq:beta-boolean}.}
    \label{fig:Rx-circuit}
\end{figure}

\begin{figure}[htbp]
    \centering
    \begin{quantikz}[
        row sep=0.5cm,
        column sep=0.5cm
    ]
        \lstick{$s_{x}$}
        & \gate{\Ry(2\vartheta)}
        & \ctrl{3}
        & \gate[wires=3]{U_{x}}
        & \qw
        \\
        \lstick{$q_{x_{1}}$}
        & \qw
        & \qw
        & {}
        & \qw
        \\
        \lstick{$q_{x_{0}}$}
        & \qw
        & \qw
        & {}
        & \qw
        \\
        \lstick{$s_{y}$}
        & \qw
        & \targ{}
        & \gate[wires=3]{V_{y}}
        & \qw
        \\
        \lstick{$q_{y_{1}}$}
        & \qw
        & \qw
        & {}
        & \qw
        \\
        \lstick{$q_{y_{0}}$}
        & \qw
        & \qw
        & {}
        & \qw
    \end{quantikz}
\caption{Rank-two preparation of a nonseparable two-dimensional Fourier
state. The rotation $R_Y(2\vartheta)$ encodes the two normalized Schmidt
weights, the controlled-NOT correlates the Schmidt labels of the $X$ and $Y$
registers, and the local unitaries $U_x$ and $V_y$ map these labels to the
corresponding Schmidt vectors, as in \cref{eq:Uinit2}.}
    \label{fig:entangled-init-circuit}
\end{figure}

\begin{figure}[htbp]
    \centering
    \begin{quantikz}[
        row sep=0.5cm,
        column sep=0.5cm
    ]
        \lstick{Field register\\$\ket{p}$}
        & \qw
        & \qw
        & \gate[wires=2]{e^{-iH_{\rm mass}\tau/2}}
        \gategroup[wires=2,steps=3,style={dashed,rounded corners,inner sep=6pt},
        label style={label position=above,anchor=south,yshift=0.1cm}]{$\times\,(t/\tau)$}
        & \gate[wires=2]{e^{-iH_{\rm wave}\tau}}
        & \gate[wires=2]{e^{-iH_{\rm mass}\tau/2}}
        & \qw
        \\
        \lstick{Grid register\\$\ket{0}^{\otimes n_h}$}
        & \gate{U_{\mathrm{prep}}}
        & \gate{\mathcal F^{\dagger}}
        & {}
        & {}
        & {}
        & \qw
    \end{quantikz}
\caption{Circuit for the Strang-split evolution of the one-dimensional linear
Dirac dynamics equation. After preparation of the initial flux state, each
of the $r=t/\tau$ steps applies a half mass evolution, one acoustic-wave
evolution, and a second half mass evolution, corresponding to
\cref{eq:strang}. The mass operators act in the position basis, whereas the
wave block uses Fourier-space diagonalization.}
    \label{fig:kg-strang-circuit}
\end{figure}

\begin{figure}[htbp]
    \centering
    \begin{quantikz}[
        row sep=0.6cm,
        column sep=0.4cm
    ]
        \lstick{Field register\\$\ket{p}$}
        & \qw
        & \gate{\Ry(-4\tau)}
        & \qw
        \\
        \lstick{$q_{0}$}
        & \qw
        & \qw
        & \qw
        \\
        \lstick{$q_{1}$}
        & \qw
        & \qw
        & \qw
        \\
        \lstick{$\vdots$}
        & {}
        & {}
        & {}
        \\
        \lstick{$q_{n_h-1}$}
        & \qw
        & \ctrl{-4}
        & \qw
    \end{quantikz}
\caption{Mass-evolution circuit for the stepwise profile
$m_-=0$, $m_+=2$. The most-significant grid qubit distinguishes the two
halves of the spatial domain and controls the rotation
$R_Y(-4\tau)$ on the field qubit; the left half, where $m_-=0$, is unchanged.}
    \label{fig:kg-mass-circuit}
\end{figure}

\FloatBarrier

\section{Glossary}
\label{sec:glossary}
\begin{table}[htbp]
    \centering
\caption{Notation used throughout the wave-equation formulations, circuit
constructions, and hardware experiments.}
    \label{tab:notation}
    \begin{tabular}{c l}
        \hline
        \textbf{Symbol}
        &
        \parbox{7cm}{\textbf{Description}}
        \\
        \hline

        \(v\)
        &
        \parbox[t]{7cm}{Velocity field}
        \\

        \(p\)
        &
        \parbox[t]{7cm}{flux field}
        \\

        \(D\)
        &
        \parbox[t]{7cm}{Forward finite-difference matrix}
        \\

        \(M\)
        &
        \parbox[t]{7cm}{Diagonal mass matrix}
        \\

        \(N\)
        &
        \parbox[t]{7cm}{System size, i.e., the total number of degrees of freedom}
        \\

        \(n\)
        &
        \parbox[t]{7cm}{Number of qubits used to represent the complete system}
        \\

        \(N_h\)
        &
        \parbox[t]{7cm}{Number of spatial grid points}
        \\

        \(n_h\)
        &
        \parbox[t]{7cm}{Number of grid qubits, where \(N_h = 2^{n_h}\)}
        \\

        \(f\)
        &
        \parbox[t]{7cm}{Field-qubit register distinguishing the velocity and flux components}
        \\

        \(H\)
        &
        \parbox[t]{7cm}{System Hamiltonian matrix}
        \\

        \(U(t)=e^{-iHt}\)
        &
        \parbox[t]{7cm}{Time-evolution operator generated by \(H\)}
        \\

        \(\widetilde{H}\)
        &
        \parbox[t]{7cm}{Approximate Hamiltonian matrix}
        \\

        \(\widehat{H}\)
        &
        \parbox[t]{7cm}{Hamiltonian matrix in the Fourier domain}
        \\

        \(\widehat{U}(t)=e^{-i\widehat{H}t}\)
        &
        \parbox[t]{7cm}{Time-evolution operator in the Fourier domain}
        \\

        \(U_{\mathrm{prep}}\)
        &
        \parbox[t]{7cm}{Initial-state preparation operator}
        \\

        \(\widehat{U}_{\mathrm{prep}}\)
        &
        \parbox[t]{7cm}{Initial-state preparation operator in the Fourier domain}
        \\

        \hline
    \end{tabular}
\end{table}

\bibliographystyle{plain}
\bibliography{refs}

\end{document}